\crefname{equation}{Eq.}{Eq.}
\crefname{figure}{Fig.}{Fig.}
\crefname{lemma}{Lem.}{Lem.}
\crefname{theorem}{Thm.}{Thm.}
\crefname{corollary}{Cor.}{Cor.}
\crefname{definition}{Def.}{Def.}
\crefname{claim}{Claim}{Claim}
\crefname{section}{Sec.}{Sec.}
\crefname{figure}{Fig.}{Fig.}
\crefname{table}{Tab.}{Tab.}
\crefname{algorithm}{Alg.}{Alg.}
\crefname{observation}{Obs.}{Obs.}
\crefname{example}{Ex.}{Ex.}
\crefname{remark}{Rem.}{Rem.}
\crefname{appendix}{App.}{App.}
\crefname{subfigure}{Fig.}{Fig.}
\tikzset{auto, >= stealth}
\tikzset{every edge/.append style={thick, shorten >= 1pt, ->}}
\tikzset{initial/.style={draw, thick, <-, shorten <=1pt}}
\tikzset{player0/.style = {draw, thick, shape=circle, minimum size=0.5cm},node distance=1.5cm}
\tikzset{player1/.style = {draw, thick, shape=rectangle, minimum size=0.7cm},node distance=1.5cm}
\definecolor{customdarkgreen}{rgb}{0.0, 0.5, 0.0} 
\newcommand{\abs}[1]{\left\lvert{#1}\right\rvert}
\newcommand{\N}{\mathbb{N}}
\newcommand{\Q}{\mathbb{Q}}
\newcommand{\tail}{\hat{\tau}}
\newcommand{\fair}[1]{\branch^{#1}_{\texttt{fair}}}
\newcommand{\escape}[1]{\branch^{#1}_{\texttt{esc}}}
\newcommand{\simul}[1]{\branch^{#1}_{\texttt{sim}}}
\newcommand{\valu}[1]{\branch^{#1}_{\texttt{val}}}
\newcommand{\pos}[1]{\branch^{#1}_{\texttt{pos}}}
\newcommand{\zero}[1]{\branch^{#1}_{\texttt{0}}}
\newtheorem*{observation}{Observation}
\renewcommand{\inf}{\mathtt{Inf}}
\newcommand{\plays}{\mathtt{plays}}
\newcommand{\play}{\mathtt{play}}
\newcommand{\infix}{H}
\newcommand{\energy}{\mathtt{En}}
\newcommand{\Win}{\mathtt{Win}}
\newcommand{\ESC}{\,\mathtt{ESC}}
\newcommand{\branch}{\mathtt{br}}
\newcommand{\meanpayoff}{\mathtt{MP}}
\newcommand{\meanVal}{\mathtt{avg}}
\newcommand{\floor}[1]{\left \lfloor #1 \right \rfloor}
\newcommand{\tree}{\mathcal{T}}
\newcommand{\ext}{\mathtt{ext}}
\newcommand{\pr}{\mathbf{p}}
\newcommand{\w}{\mathbf{w}}
\newcommand{\uu}{\mathbf{u}}
\begin{document}

\label{beginningofdocument}
\newoutputstream{docstatus}
\openoutputfile{main.ds}{docstatus}
\addtostream{docstatus}{Work_in_progress}
\closeoutputstream{docstatus}

\title{Fair Quantitative Games}
%
%
\author{Ashwani Anand\orcidID{0000-0002-9462-8514} \and
Satya Prakash Nayak\orcidID{0000-0002-4407-8681} \and
Ritam Raha\orcidID{0000-0003-1467-1182} \and
Irmak Sa\u{g}lam\orcidID{0000-0002-4757-1631} \and
Anne-Kathrin Schmuck\orcidID{0000-0003-2801-639X}
}
\authorrunning{Anand et al.}
%
\institute{Max Planck Institute for Software Systems, Kaiserslautern, Germany}
%
\maketitle              
\begin{abstract}
    We examine two-player games over finite weighted graphs with quantitative (mean-payoff or energy) objective, where one of the players additionally needs to satisfy a fairness objective. The specific fairness we consider is called \emph{strong transition fairness}, given by a subset of edges of one of the players, which asks the player to take fair edges infinitely often if their source nodes are visited infinitely often. 
    We show that when fairness is imposed on player~1, these games fall within the class of previously studied $\omega$-regular mean-payoff and energy games. On the other hand, when the fairness is on player~2, to the best of our knowledge, these games have not been previously studied.  
    We provide gadget-based algorithms for fair mean-payoff games where fairness is imposed on either player, and for fair energy games where the fairness is imposed on player~1. For all variants of fair mean-payoff and fair energy (under unknown initial credit) games, we give pseudo-polynomial algorithms to compute the winning regions of both players. Additionally, we analyze the strategy complexities required for these games. Our work is the first to extend the study of strong transition fairness, as well as gadget-based approaches, to the quantitative setting. We thereby demonstrate that the simplicity of strong transition fairness, as well as the applicability of gadget-based techniques, can be leveraged beyond the $\omega$-regular domain. 


\end{abstract}
\section{Introduction}
\raggedbottom
Games on graphs serve as a formal and effective framework for automatically synthesizing correct-by-design software in cyber-physical systems (CPS). In this setting, one player represents a controller aiming to ensure a high-level logical specification in response to the actions of an adversarial player representing the external environment. Two-player graph games abstract the strategic reactive behavior of autonomous systems, such as robots \cite{robots,robot2,kress2018synthesisForRobots_Review} or cars \cite{cars,AlthoffBelta_Review_FMCEforAutonomousDriving}, and can be employed to synthesize \emph{correct-by-design} strategies through the algorithmic process of reactive synthesis.
Such strategies function as logical controllers to ensure the required behavioral guarantees, ensuring the system meets its safety and performance goals over its strategic interactions with its environment. 

In the CPS context, these game-based abstractions have been augmented with additional information, such as assumptions about the players' strategic behavior induced by the underlying physical processes~\cite{alurbook,fairness}, and quantitative objectives that allow to optimize strategies w.r.t.\ given performance metrics~\cite{costcontrol}. These enhancements enable reactive synthesis algorithms to compute control strategies that are more aligned with the specific requirements of CPS applications. However, these extensions typically come with increased computational costs, hindering their practical applicability.
In this paper, we show that a particular subclass of such games
-- namely, \emph{energy} and \emph{mean-payoff games} under \emph{strong-transition-fairness constraints} -- distinctly posses favorable computational properties.
We refer to these games as \emph{fair energy} and \emph{fair mean-payoff games}.

\subsection{Background}

\noindent\textbf{Strong Transition Fairness}. Strong transition fairness~\cite{sifakis,fairness,baier} is defined over specific edges, referred to as \emph{fair edges}, and restricts strategies to use a fair edge infinitely often if its source vertex is visited infinitely often. Strong transition fairness constraints are less expressive than general fairness, typically expressed via a qualitative objective called the \emph{Streett objective} \cite{streett}. Yet, they naturally arise in areas such as resource management~\cite{CAFMR13}, abstractions of continuous-time physical processes for planning~\cite{CPRT03, DIRS18, AGR20} and controller synthesis~\cite{thistle1998control, NOL17, MMSS2021}. At the same time, games with strong transition fairness conditions offer more favorable computational properties compared to those with Streett fairness, making them a compelling subject of study.

Despite their strong motivation from CPS applications and their favorable computational properties, strong transition fairness has so far been considered only in the context of qualitative objectives, such as parity~\cite{irmak1,irmak2} or Rabin~\cite{banerjee23, ThejaswiniThesis23}, where it is shown to be computationally inexpensive, preserving the overall complexity of the game and inducing negligible computational cost. This paper shows that these property carry over to quantitative games.

\noindent\textbf{Energy \& Mean-Payoff Games.} Quantitative objectives, such as energy or mean-payoff, allow capturing strategic limitations induced by constrained resources, such as the power usage of embedded components, the buffer size of a networking element~\cite{stringmatching}, or string selections with limited storage~\cite{ZwickPaterson96}. 

\emph{Mean-payoff games}, introduced in~\cite{meanpayoff}, assign integer weights to edges, representing the payoffs received by player~1 (the controller) and paid by player~2 (the environment) when an edge is taken. These games are played over infinitely many rounds, with player~1 aiming to maximize the long run average value of the edges traversed, while player~2 aims to minimize this value.

\emph{Energy games}, introduced more recently in~\cite{CAHS03}, also assign integer weights to edges, this time representing energy gains or losses. In this setting, the controller's goal is to construct an infinite path where the total energy in every prefix remains non-negative, while the environment aims to prevent this. 

It is known that determining the winner in both energy and mean-payoff games is in $\NP \,\cap \,\coNP$ \cite{ZWICK1996343}. The state-of-the-art algorithms 
 for mean-payoff games have runtime $\mathcal{O}(n^2mW)$ (and a runtime of $\mathcal{O}(nmW)$ for the threshold problem), where $n$ is the number of nodes, $m$ is the number of edges and $W$ is the maximum absolute weight in the game arena~\cite{cominrizzi}. For energy games, the best algorithms are either deterministic with the run time $\mathcal{O}(\min(mnW, mn2^{n/2}\log W))$~\cite{FasterAlgorithmsforMeanPayoffBCDGR11,dorfman} or randomized with the run time $2^{\mathcal{O}(\sqrt{n \log n})}$~\cite{BSV04}.

\smallskip\noindent\textbf{Combining Quantitative \& Qualitative Objectives.}
Conceptually, \emph{fair energy} and \emph{fair mean-payoff games} considered in this paper combine quantitative (i.e., energy or mean-payoff) with qualitative (i.e.\ strong transition fairness) obligations to constrain the moves of the players in the resulting game. 

Combinations of quantitative and qualitative objectives have been studied for several variants of mean-payoff and energy objectives combined with general $\omega$-regular goals~\cite{CAHS03, BCHJ09, MeanPayoffParityCHJ05,Helouet2019,EnergyParityChatterjeeDoyen2010, PseudopolynomialMeanPayoffParityDJL18}. 
%
In particular, energy parity games were introduced in~\cite{EnergyParityChatterjeeDoyen2010}, demonstrating these games' inclusion in $\NP \,\cap\,\coNP$ and polynomial equivalence to mean-payoff parity games, previously studied in \cite{MeanPayoffParityCHJ05}. Further, people have studied the combination of quantitative objectives (optimization or threshold variant) with (not necessarily $\omega$-regular) Boolean constraints~\cite{BruyereFRR17, AlmagorKV16, ClementeR15, VelnerC0HRR15, BouyerMM14, AlmagorKK15, AlfaroFHMS05, AlmagorBK13, ChatterjeeD11, BohyBR14a, ChatterjeeD11a}. One example is the Beyond Worst-Case synthesis framework, introduced by Bruyère et al.~\cite{BruyereFRR17}, which combines Boolean guarantees (mean-payoff thresholds) with quantitative performance optimization (expected payoff).

It is known that strong transition fairness conditions can be translated into a classical Streett condition, i.e.,\ a subclass of (qualitative) $\omega$-regular objectives, by transforming each fair edge $e=(u,v)$ into a Streett pair $(\{u\},\{e\})$ in the original game arena. If only player~1 (the controller) is constrained by the fairness, this transformation results in an energy (resp. mean-payoff) Streett game. Following~\cite{EnergyMuCalculus2020}, Energy Streett games\footnote{This also gives a solution for mean-payoff Streett games via their polynomial reduction to energy games~\cite{EnergyParityChatterjeeDoyen2010}.} can be solved in one of the following ways:
\begin{compactitem}[$\triangleright$]
    \item Encode the Streett energy objective as a $\mu$-calculus formula and solve the formula using symbolic fixed-point computations: The computation takes $\mathcal{O}(n\cdot [b+1]^{\floor{\frac{d}{2}+1}})$ symbolic steps, where $b= (d+1) \cdot \left( ((n^2+n)\frac{n}{2}! -1)W\right)$ is an upper bound on the credit a Streett energy game requires, where $d$ is the number of Streett pairs, which in our case corresponds to the number of fair edges. 
    This complexity can be simplified to $\mathcal{O}\left((\frac{n}{2}! \cdot W)^{\frac{m}{2}}\right)$ by omitting the polynomial factors where $m$ is the number of edges; which is super-exponential.
    
\item Translate the energy Streett game into a Streett game by encoding the domain of energy levels in the state space. This explodes the state space by multiplying the number of states with the above-mentioned upper-bound $b$. Using state-of-the-art algorithms for Streett games~\cite{nirpnueli}, this gives an algorithm with worst-case super-exponential runtime of $\mathcal{O}((\frac{n}{2}!W)^{m+1}m!)$.
\end{compactitem}

If, however, player~2 (the environment) is constrained by strong-transition fairness, the above reduction to energy (resp. mean-payoff) Streett games does not apply as the energy or mean-payoff objectives are not symmetric for both players. In this case, the given energy (resp. mean payoff) game is enhanced with a \emph{fairness assumption} weakening the opponent. While such games have been investigated for qualitative objectives~\cite{irmak1}, this paper is the first to consider quantitative games under such fairness assumptions.

\subsection{Contributions}
Based on the above discussion, we distinguish between \emph{1-fair} and \emph{2-fair} energy (resp. mean-payoff) games, emphasizing which player~$i$s additionally constrained by the strong transition fairness condition. 
We provide a unified framework to solve both classes of games via a novel, gadget-based approach. Our gadgets enable the translation of fair quantitative games into regular quantitative games on a linearly larger state space. This allows solving these games in pseudo-polynomial time, significantly improving the computational complexity for \emph{1-fair} quantitative games, compared to the naive approaches outlined above. The origins of these gadget-based techniques can be traced back to the works in ~\cite{stochparity} and~\cite{stochrabin}, where similar gadgets were employed to convert stochastic parity and Rabin games with almost-sure winning conditions into regular parity and Rabin games. In~\cite{irmak1}, a gadget-based technique was developed that reduces fair parity games to (linearly larger) regular parity games, demonstrating a quasi-polynomial run time for these games. Our work extends this approach to fair games with quantitative objectives. Concretely, our contributions are as follows.\\

\noindent\textbf{Fair Mean-Payoff Games:}
\begin{compactitem}[$\triangleright$]
    \item \emph{Complexity.} Using gadgets, we reduce (1- and 2-) fair mean-payoff games to regular mean-payoff games on a game arena with linearly larger number of nodes and edges, and a maximum absolute weight of $\mathcal{O}(n^2W)$, where $n$ is the number of nodes and $W$ is the maximum absolute weight in the fair game arena. Following the state-of-the-art algorithm of~\cite{cominrizzi}, we obtain a complexity of $\mathcal{O}(n^4mW)$ for solving the fair mean-payoff games for optimal value problem, and a complexity of $\mathcal{O}(n^3mW)$ for threshold problem. 
    \item \emph{Determinacy.} The above reduction establishes that fair mean-payoff games are determined irrespective of which player has fairness constraints.
    \item \emph{Strategies.} We show that, in 1-fair mean-payoff games, player~1 in general needs infinite memory strategies to win, although finite memory suffices for achieving suboptimal values. Memoryless strategies suffice for player~2. In 2-fair mean-payoff games, player~2 has finite memory winning strategies, whereas memoryless strategies suffice for player~1. 
\end{compactitem}

\noindent\textbf{Fair Energy Games:}
\begin{compactitem}[$\triangleright$]
    \item \emph{Determinacy.} In contrast to fair mean-payoff games, we show that fair energy games (with unknown initial credit) are not determined in general. In particular, 2-fair energy games are not determined, whereas 1-fair energy games are. We argue that the lack of determinacy prevents us from constructing similar gadgets to reduce 2-fair energy games to regular energy games.
    \item \emph{Complexity.} We introduce a gadget that reduces \emph{1-fair energy games} to regular energy games on an arena with linearly larger number of nodes and edges, and a maximum absolute weight of $\mathcal{O}(n^3W)$. Using the state-of-the-art algorithm of~\cite{dorfman}, we obtain a complexity of $\mathcal{O}(n^4mW)$ for solving these games. 
    For \emph{2-fair energy games}, we provide simple algorithms to compute the winning regions separately for both players. We show that in 2-fair energy games, player~1's winning region is the same as in the corresponding energy game where fair edges are treated as regular edges. Meanwhile, player~2's winning region is the same as in the corresponding 2-fair mean-payoff games with threshold $0$. These reductions yield an $\mathcal{O}(n^3mW)$ algorithm for solving 2-fair energy games.
    \item \emph{Strategies.} We show that in fair energy games, the player restricted by transition fairness has finite memory winning strategies whereas memoryless strategies suffice for the respective other player.
\end{compactitem}

\textbf{In summary}, our results show that strong transition fairness can be seamlessly incorporated into quantitative games without significant computational overhead. In fact, just as in the qualitative setting, this simple form of fairness comes \emph{virtually for free} in the quantitative setting. Our key conceptual insight is that both the simplicity of strong transition fairness, and the effectiveness of gadget-based approaches extend beyond the $\omega$-regular domain.

\section{Preliminaries}\label{sec:preliminaries}

In this section, we introduce the basic notations used throughout the paper. We write $\Q$ and $\N$ to denote the set of rational numbers and the set of natural numbers including $0$, respectively.


\subsection{Weighted Game Arena}	
A two-player weighted game arena is a tuple $G= (Q, E, w)$, where $Q= Q_1 \uplus Q_2$ is a 
finite set of nodes, $E \subseteq Q \times Q$ is a set of edges, and $w: E \to [-W, W]$ is a 
weight function that assigns an integer weight to each edge in $G$, where $W 
\in \N$ is the maximum absolute weight appearing in $G$. The nodes are partitioned into 
two sets, $Q_1$ and $Q_2$, where $Q_i$ is the set of nodes controlled by Player $i$ for $i \in 
\{1,2\}$ and $Q_1 \cap Q_2 = \emptyset$. 
For a node $q$, we write $qE$ to denote the set $\{e \in E \mid e = (q, q') \text{ 
for } q' \in Q\}$ of all outgoing edges from $q$ and $E(q)$ to denote all successors of $q$. 
W.l.o.g., we assume that all nodes 
have out-degree at least one, i.e., $E(q) \neq \emptyset$ for all $q \in Q$.
In rest of the paper, we use circles and squares in a figure to denote nodes controlled by player~1 and player~2, respectively.
	
\smallskip \noindent\textbf{Plays.} A \emph{play} $\tau = q_0 q_1 \ldots \in Q^\omega$ on $G$ is an infinite sequence of nodes starting from $q_0$ such that, $\forall i \geq 0, (q_i, q_{i+1}) \in E$. 
We use notations $\tau[0 ;i] = q_0 \ldots q_i$, and $\tau[i;j] = q_i \ldots q_j$ to denote the finite \emph{prefix} and \emph{infix} of the play $\tau$, respectively. A node $q$ is said to \emph{appear infinitely often} in $\tau$, i.e., $q \in \inf(\tau)$ if $\forall i, \exists j \geq i$, such that $q_j =q$. We naturally extend the notion of appearing infinitely often in a play for the edges. We let $\plays(G)$ denote the set of all plays on $G$, and let $\plays(G,q)$ denote the set of all plays starting from node $q$.

\smallskip \noindent\textbf{Strategies.} A \emph{strategy} $\sigma$ for player $i \in \{1, 2\}$ (or, a player~$i$-strategy) is a function  $\sigma: Q^* \cdot Q_i \mapsto E$ where for all $H \cdot q \in Q^* \cdot Q_i$, $\sigma(H \cdot q) \in qE$. Intuitively, from every node $q \in Q_i$, a strategy for player~$i$ assigns an outgoing edge of that node based on a history $H \in Q^*$. 

For a player~$i$ strategy $\sigma$, a play $\tau = q_0 q_1 \ldots$ is called a $\sigma$-play if it conforms with $\sigma$, i.e., for all $j \in \N$ with $q_j \in Q_i$, 
it holds that $\sigma(q_0 \ldots q_j) = (q_j,q_{j+1})$. 
Given a strategy $\sigma$ we denote the restriction of sets $\plays(G)$ and $\plays(G,q)$ to $\sigma$-plays with $\plays_{\sigma}(G)$ and $\plays_{\sigma}(G,q)$. Similarly, $\play_{\sigma, \pi}(G, q)$ denotes the unique play from $q$ conforming with player~1 and player~2 strategies $\sigma$ and~$\pi$.


Let $M$ be a set called \emph{memory}. A player~$i$ strategy $\sigma$ with memory $M$ can be represented as a tuple $(M, m_0, \alpha, \beta)$, where $m_0 \in M$ is the initial memory value, $\alpha : M \times Q \to M$ is the update function, and $\beta : M \times Q_i \to Q$ is the function prescribing next state. Intuitively, if the current node is a player~$i$ node $q$ and $m$ is the current memory value, the strategy $\sigma$ selects $q' = \beta(m, q)$ as the next node and updates the memory to $\alpha(m, q)$. If $M$ is finite, then we call $\sigma$ a \emph{finite memory strategy}; otherwise it is an \emph{infinite memory strategy}. 
Formally, given a history $H\cdot q\in Q^*\cdot Q_i$, $\sigma(H\cdot q) = 
\beta(\hat{\alpha}(m_0, H), q)$, where $\hat{\alpha}$ extends $\alpha$ to sequences of nodes 
canonically.
A strategy is called \emph{memoryless} or \emph{positional} if $|M| = 1$. For such memoryless 
strategy $\sigma$, it holds that $\sigma(H_1 \cdot q) = \sigma(H_2 \cdot q)$ for every history 
$H_1,H_2 \in Q^*$. We denote it as $\sigma(q)$ for convenience. For any positional strategy 
$\sigma$ of player~$i$, we define $G_\sigma$ to be the subgame arena $(Q,E',w)$ of $G$, where  
$E' = \{(q, \sigma(q)) \mid q \in Q_i\} \cup \{qE \mid q \in Q_{3-i}\}$. 
Note that in this subgame arena, each player~$i$ node has exactly one successor according to the strategy $\sigma$.

\smallskip \noindent\textbf{Weighted Games and Objectives.} A \emph{weighted game} is a tuple $(G, \varphi)$, where $G$ is a weighted game arena and $\varphi \subseteq Q^\omega$ is an \emph{objective} for player~$1$. 
A play $\tau$ is \emph{winning} for player~1 if $\tau \in \varphi$, else it is winning for player~2.
A player~$i$ strategy $\sigma$ is \emph{winning} from some node $q$, if all $\sigma$-plays starting from $q$ are winning for player~$i$.
We denote the winning regions of player~$i$ by $\Win_i(G, \varphi)$, or shortly $\Win_i$, when $(G, \varphi)$ is clear from the context. 
Formally, the winning regions are defined as follows where $\Sigma_i$ is the set of all player~$i$ strategies:
\begin{align}
   q \in \Win_1(G, \varphi) &\iff \exists \sigma\in\Sigma_1. \, \forall \pi \in \Sigma_2, \ \play_{\sigma, \pi}(G,q) \in \varphi \label{eq:determinacyLine1}\\
	q \in \Win_2(G, \varphi) &\iff \exists
    \pi\in\Sigma_2.\, \forall \sigma \in \Sigma_1,\ \play_{\sigma, \pi}(G,q) \not \in \varphi \label{eq:determinacyLine2}
\end{align}
A game $(G, \varphi)$ is called \emph{determined}, if $Q = \Win_1 \cup \Win_2$. 

Given a finite infix $\infix = q_0 \ldots q_k$ of a play in $G$, we denote the (total) weight and the average weight of $\infix$ by $w(\infix) = \sum_{i=0}^{k-1} w(q_i,q_{i+1})$ and $\meanVal(\infix) = \frac{w(\infix)}{k}$, respectively. 
Furthermore, we denote the (limit) average weight of a play $\tau$ by $\meanVal(\tau) = \liminf_{i \to \infty} \meanVal(\tau[0;i])$.

\smallskip \noindent\textbf{Mean-Payoff Games.}
A \emph{mean-payoff game} is a weighted game with a mean-payoff objective for a given threshold value $v \in \Q$, which is defined as $\meanpayoff_v = \{\tau \in Q^\omega \mid \meanVal(\tau) \geq v\}$. 
Intuitively, a play is winning for player~1 if the average weight of the play is above a certain threshold value $v$.
Note that, mean-payoff games are \emph{prefix independent} as any finite prefix of a play $\tau$ does not affect the limit average weight of $\tau$.

In this work, without loss of generality, we focus on mean-payoff objective $\meanpayoff_0$ with threshold value $v = 0$. It is easy to see that for any value $v$, the weighted mean-payoff game $(G, \meanpayoff_v)$ with $G = (Q,E,w)$ can be reduced to a mean-payoff game $(G', \meanpayoff_0)$ by subtracting $v$ from the weights of all edges in $G$, i.e., $G' = (Q,E,w')$ where $w'(q,q') = w(q,q') - v$ for all $(q,q') \in E$.



The \emph{optimal value} of a mean-payoff game is the maximum value $v$ for which player~$1$ has a winning strategy with threshold $v$. A winning strategy of player~$1$ that achieves this optimal value is called an \emph{optimal value strategy}. The \emph{optimal value problem} is to compute the optimal values in a mean-payoff game.

\smallskip \noindent\textbf{Energy Games.} An \emph{energy objective} w.r.t.\ a given initial credit $c \in \N$ is defined as $\energy_c = \{\tau \in Q^\omega \mid c+ w(\tau[0;i]) \geq 0,\ \forall i \in \N\}$. Intuitively, a play belongs $\energy_c$ if the total weight (`energy level' starting from $c$) of the play remains non-negative along the play.
An \emph{energy game} is a weighted game with an energy objective with unknown initial credit, denoted by $\energy$, where player~$1$ wins from some node $q$ if there exists an initial credit $c \in \N$ such that she can ensure the objective $\energy_c$ from $q$.
Formally, the winning regions are defined as follows, for $\varphi_c = \energy_c$:
\begin{align}
    q \in \Win_1 &\iff \exists c \in \mathbb{N}. \, \exists \sigma \in \Sigma_1. \, \forall \pi \in \Sigma_2,\ \play_{\sigma, \pi}(G, q)\in \varphi_c \label{eq:energyDeterminacyPlayer1}\\
	q \in \Win_2 &\iff \exists
    \pi\in\Sigma_2. \, \forall c \in \mathbb{N}. \,\forall \sigma \in \Sigma_1,\ \play_{\sigma, \pi}(G,q) \not \in \varphi_c \label{eq:energyDeterminacyPlayer2}
\end{align}

Note that, in contrast to mean-payoff games, the energy objective is not a \emph{prefix-independent} objective as the total weight of each prefix has to be non-negative along any play. 
However, it is known that such energy games are log-space equivalent to mean-payoff games~\cite{FasterAlgorithmsforMeanPayoffBCDGR11}.
Furthermore, it is known that both energy games and mean-payoff games are \emph{positionally determined}~\cite{meanpayoff,CAHS03} i.e., $Q = \Win_1 \cup \Win_2$ and both players have positional winning strategies.

\subsection{Fair Game Arena} A \emph{fair game arena} is a tuple $(G, E_f)$, where $G$ is a weighted game arena as defined above and $E_f \subseteq E$ is a given set of `fair edges'. We call the non-fair edges $E\setminus E_f$ `regular', and a node $q$ `fair' if it is the source node of a fair edge, i.e., $qE\cap E_f \neq \emptyset$. Let $Q_f$ be the set of all fair nodes in $G$. 
We investigate the scenario where all fair nodes are owned by the same player. If $Q_f \subseteq Q_1$, we call the game arena \emph{1-fair} and if $Q_f \subseteq Q_2$, we call it \emph{2-fair}.
For brevity, we sometimes denote the fair game arena $(G,E_f)$ by $G$ and keep the set of fair edges $E_f$ implicit. Given a fair node $q$, we write $qE_f$ to denote the set $\{e \in E_f \mid e = (q, q') \text{ 
for } q' \in Q\}$ of all fair outgoing edges from $q$ and $E_f(q) = \{q' \mid (q,q')\in E_f\}$ to denote the set of successors of $q$ via fair edges. 

A play $\tau$ is \emph{fair} if for all nodes $q \in \inf(\tau)$ and for all edges $e \in qE$, if $e \in E_f$, then $e \in \inf(\tau)$. We let $\plays^f(G)$ denote the set of all fair plays on $(G, E_f)$. 
We say a strategy $\sigma$ for player~$i$ is fair if $\plays_{\sigma}(G) \subseteq \plays^f(G)$. 

\smallskip \noindent\textbf{Fair Mean-Payoff/Energy Games.} An \emph{$i$-fair mean-payoff game} with objective $\varphi = \meanpayoff_v$ is a tuple $(G, E_f, \varphi^f)$, where $(G, E_f)$ is an $i$-fair game arena, and $\varphi^f$ is the objective with the respective fairness condition defined as follows:
\begin{equation} \varphi^f = \begin{cases}\varphi \cap \plays^f(G) &\text{ if } (G, E_f) \text{ is 1-fair} \\ 
  \varphi \cup \plays(G) \setminus \plays^f(G) &\text{ if } (G, E_f) \text{ is 2-fair}\end{cases}
\end{equation}
Intuitively, a play is winning for player~$i$ in $i$-fair game if it satisfies both the fairness condition and the player's corresponding objective, i.e., $\varphi$ for player~1 and $\neg \varphi$ for player~2.

Similarly, an \emph{$i$-fair energy game} with energy objective $\energy$ is a tuple $(G, E_f, \energy^f)$ where the winning regions are defined as in~\eqref{eq:energyDeterminacyPlayer1}-\eqref{eq:energyDeterminacyPlayer2} for $\varphi_c = \energy^f_c$.

\section{Example: Mean-Payoff vs Fair Mean-Payoff Games}

In this section, we depict the intricacy of fair mean-payoff games in comparison to regular mean-payoff games via the following example:

Consider the game arena depicted in \Cref{fig:ex-fair} with both the nodes belonging to player~1. With regular mean-payoff objective with threshold $0$, player~1 has a simple optimal positional strategy: at $q$, play the self-loop $(q,q)$ and from $p$ play the edge $(p,q)$. This strategy is winning for player~1 because the limit average weight of any play conforming to this strategy is $1$. Note that the strategy is also an optimal value strategy as the optimal value of the game is $1$.
\begin{wrapfigure}[6]
  {r}{0.25\textwidth}
    \centering
    \vspace{-0.95cm}
    \begin{tikzpicture}
    
      \node[player0] (q) {$q$};
      \node[player0] (qprime) [right of=q] {$p$};
    
      \path[every node/.style={font=\sffamily\small}, ->, >=stealth]
        (q) edge [loop above] node {+1} (q)
            edge [bend left, dashed] node {-4} (qprime)
        (qprime) edge [bend left] node {0} (q);
    \end{tikzpicture}
    \vspace{-0.2cm}
    \caption{}\label{fig:ex-fair}
  \end{wrapfigure}

  Now let us consider the game arena to be $1$-fair with a fair edge $(q,p)$. Now the fair mean-payoff objective $\meanpayoff_0^f$ for player~1 is to ensure that the limit average weight of the play is $\geq 0$ along with the fairness condition that says if $q$ appears infinitely often in a play, so does the edge $(q,p)$.

First note that there is no positional strategy that can ensure both fairness and mean-payoff objective and hence winning for player~1. Now, consider the following finite memory strategy for player~1: the strategy takes the self-loop $(q,q)$ for $k$ times before taking $(q,p)$ once, and this sequence of $k+1$ choices is repeated forever. The resulting play uses the fair edge $(q,p)$ infinitely often and hence it is fair. For $k \geq 4$, this strategy is winning as the limit average of the plays conforming to this strategy is at least $0$. In fact, we can show that for all $\epsilon > 0$, it is possible to choose a large enough $k$, such that player~1 ensures a limit average weight of at least $1-\epsilon$. 

Finally, consider the following infinite-memory strategy for player~1: the strategy is played in rounds; in round $i\geq 0$, the strategy plays $(q,q)$ for $i$ times, then plays $(q,p)$ once and progresses to round $i + 1$. This fair strategy ensures player~1 achieves a limit average weight of value $1$, and it is therefore an optimal value strategy.
However, it is not hard to see that there is no finite-memory optimal value strategy for player~1 in this game. In the following sections, we formally present how to solve fair mean-payoff games and discuss the memory requirements for each player to achieve these winning strategies.\label{sec:example}






\section{Solving Fair Mean-Payoff Games}\label{sec:solving-fair-mean-payoff}

In this section, we present algorithms for solving fair mean-payoff games. The algorithms transform a fair mean-payoff game into a `regular' mean-payoff game using \emph{gadgets}. We introduce two gadgets depending on which player owns the fair edges as depicted in \cref{fig:gadgets}. 
Given an $i$-fair mean-payoff game $G$, we replace all fair nodes with the corresponding $i$-fair gadgets and obtain an equivalent mean-payoff game $G'$ such that $(G, E_f, \meanpayoff^f_0)$ is winning for player~$i$ if the mean-payoff game $(G', \meanpayoff_0)$ is winning for player~$i$. In particular, we prove the following.


\begin{theorem}\label{thm:mpgadgetcorrectness}
    Let $(G, E_f, \meanpayoff^f_0)$ be a fair mean-payoff game with threshold $0$, where $G = (Q, E, w)$, $w : E \to [-W, W]$, and $|Q|=n$. Then there exists a mean-payoff game $G'= (Q',E',w')$ where $Q' \supseteq Q$, $|Q'|\leq 6n$ and $w' : E' \to [-W', W']$ with $W' = n^2W + n$ such that
    $\Win_i(G) = \Win_i(G') \cap Q$ for $i \in \{1, 2\}$.
  \end{theorem}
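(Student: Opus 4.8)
The plan is to prove the two inclusions that constitute $\Win_i(G) = \Win_i(G') \cap Q$ by making the gadgets of \cref{fig:gadgets} explicit, verifying the claimed size and weight bounds, and then translating strategies in both directions between the fair game $(G, E_f, \meanpayoff^f_0)$ and the regular game $(G', \meanpayoff_0)$. Since the $1$-fair and $2$-fair objectives are defined asymmetrically, I would treat the two cases with their respective gadgets but follow the same three-part template for each player: (i) a play correspondence between $G$ and the gadget-plays of $G'$ that preserves the limit average up to finitely corruptible detours, (ii) a forward translation of a fair winning strategy of player~$i$ in $G$ into a winning strategy in $G'$, and (iii) a backward translation of a winning strategy in $G'$ into a fair winning strategy in $G$.

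First I would pin down the structural bounds. Each fair node $q$ is replaced by a gadget with a constant number of auxiliary player-$1$ and player-$2$ nodes; summing over the at most $n$ fair nodes yields $|Q'| \le 6n$, and since non-fair nodes and their edges are left untouched we keep $Q' \supseteq Q$. The new edges carry the original weights plus calibrated gadget weights, and the bound $W' = n^2W + n$ arises from choosing the gadget's penalty/reward magnitude large enough to dominate the mean-payoff advantage a player could extract by violating fairness: this advantage is controlled by the $[-nW, nW]$ range of simple-cycle weights together with the denominator-$n$ granularity of achievable mean-payoff values, so a weight of order $n^2W$ suffices, with the additive $+n$ absorbing the rounding of the threshold comparison against $0$.

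Next I would establish the play correspondence and the forward direction. A fair play in $G$ takes, for every fair node in $\inf(\tau)$, its fair edge infinitely often; I would route the image play through the gadget so that the fair branch is chosen each time the node is "challenged," incurring gadget weights only in a way that cancels in the limit average, so that the image has $\meanVal \ge 0$ in $G'$ exactly when the original had $\meanVal \ge 0$. I would then lift an $i$-winning strategy of $G$ by playing the original moves on $Q$-nodes and the honest gadget response on the auxiliary nodes, and argue each resulting $G'$-play is winning; prefix-independence of $\meanpayoff_0$ makes the finitely corrupted prefixes harmless.

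The backward direction is where the main obstacle lies, and it is essentially the claim that the gadget weights \emph{force} fairness. Given a winning strategy in $G'$, I would project it to $G$ and must show the projected play is both above threshold and fair. The crux is ruling out plays in which a fair node lies in $\inf(\tau)$ yet its fair edge is taken only finitely often: for such a play the opponent keeps invoking the dominating penalty in the gadget, driving $\meanVal$ strictly below $0$ in $G'$ and contradicting that the $G'$-strategy is winning. Making this quantitative — showing that the accumulated gadget penalties have strictly negative limit average exactly when fairness fails, despite $\meanpayoff$ being prefix-independent — is the heart of the argument and is precisely where the value $n^2W + n$ must be verified to be large enough. For the $2$-fair case I would run the symmetric argument with the roles of reward and penalty swapped, so that any play in which player~2 is unfair is winning for player~1 in $G'$, matching the $2$-fair objective $\varphi \cup \plays(G) \setminus \plays^f(G)$.
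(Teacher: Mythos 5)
Your high-level picture (constant-size gadgets, weights calibrated so that unfairness is punished in the limit average) matches the paper, but the proof plan has two structural gaps. First, your step (ii) --- lifting a winning strategy of the \emph{fair} game $G$ into $G'$ --- is both unnecessary and genuinely problematic. The paper never proves this direction: it proves only $\Win_1(G')\cap Q\subseteq\Win_1(G)$ and $\Win_2(G')\cap Q\subseteq\Win_2(G)$, and then uses determinacy of the \emph{regular} mean-payoff game $G'$ (so $\Win_1(G')\cup\Win_2(G')=Q'$) together with disjointness of the winning regions to conclude equality; this also yields determinacy of the fair game as a corollary. Your forward lift would have to start from a possibly infinite-memory winning strategy in $G$ (the paper's \cref{sec:example} shows infinite memory is required for optimal play), and ``the honest gadget response on the auxiliary nodes'' does not exist: in the 1-fair gadget the branching nodes $q_l$ and $q_{nW{+}1}$ belong to the \emph{opponent}, who can hijack the choice of fair successor after collecting $nW{+}1$, so the image play is not determined by player~1's original strategy. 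Avoiding this direction entirely is the point of the determinacy trick.

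Second, in the backward direction you say you would ``project'' a winning $G'$-strategy to $G$, but for the fair player this is not a projection. Starting from a \emph{positional} winning $\sigma'$ in $G'$, the paper builds an infinite-memory strategy in $G$ that plays in rounds: in round $i$ it takes the ``preferred successor'' $\sigma'(q_0)$ exactly $i$ times before cycling to the next fair successor, so that the fair edges are all taken infinitely often while the fraction of time spent on the (possibly weight-losing) fair detours tends to zero. Without this round construction you cannot simultaneously guarantee fairness and $\meanVal\geq 0$. Moreover, the quantitative verification that $n^2W+n$ suffices is not a limit-average bookkeeping argument on arbitrary plays: it is a simple-cycle analysis in the finite subgraph $G'_{\sigma'}$ induced by the positional strategy (every simple cycle through an escape branch of weight $-n^2W-n$ would need $n$ fair branches of weight $nW+1$ to be non-negative, which is impossible with at most $n-1$ fair nodes), followed by an infix-averaging lemma. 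Your outline relies on positionality nowhere, so this cycle argument is not available in the form you set up.
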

 
  \noindent Standard algorithms for mean-payoff games~\cite{cominrizzi} with~\Cref{thm:mpgadgetcorrectness} gives an $\mathcal{O}(n^3mW)$ time algorithm (with $\abs{E} = m$) for solving fair mean-payoff games with threshold and an $\mathcal{O}(n^4mW)$ time algorithm for the optimal value problem.
As regular mean-payoff games are determined, we also get the following from~\Cref{thm:mpgadgetcorrectness}:
\begin{theorem}
    Fair mean-payoff games are determined.
\end{theorem}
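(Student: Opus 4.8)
The plan is to obtain determinacy as an immediate corollary of the gadget reduction in \Cref{thm:mpgadgetcorrectness} together with the determinacy of regular mean-payoff games recalled in \Cref{sec:preliminaries}. Recall that here determinacy means $Q = \Win_1 \cup \Win_2$; since winning regions are by definition subsets of $Q$, the inclusion $\Win_1 \cup \Win_2 \subseteq Q$ is trivial, so it suffices to prove $Q \subseteq \Win_1 \cup \Win_2$. Note also that \Cref{thm:mpgadgetcorrectness} treats a generic fair mean-payoff game $(G, E_f, \meanpayoff^f_0)$ and asserts the identity $\Win_i(G) = \Win_i(G') \cap Q$ uniformly for $i \in \{1,2\}$, so the argument will cover the $1$-fair and $2$-fair cases at once, irrespective of which player carries the fairness constraint.

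First I would fix a fair mean-payoff game $(G, E_f, \meanpayoff^f_0)$ with threshold $0$ and let $G' = (Q', E', w')$ be the regular mean-payoff game produced by \Cref{thm:mpgadgetcorrectness}, so that $Q \subseteq Q'$ and $\Win_i(G) = \Win_i(G') \cap Q$ for $i \in \{1,2\}$. Since regular mean-payoff games are (positionally) determined, we have $Q' = \Win_1(G') \cup \Win_2(G')$. Now take any node $q \in Q$; as $q \in Q'$, determinacy of $G'$ yields $q \in \Win_1(G')$ or $q \in \Win_2(G')$. Intersecting with $Q$ and using the identity $\Win_i(G) = \Win_i(G') \cap Q$, we conclude $q \in \Win_1(G)$ or $q \in \Win_2(G)$, that is, $q \in \Win_1(G) \cup \Win_2(G)$. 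This establishes $Q \subseteq \Win_1(G) \cup \Win_2(G)$, and hence determinacy of the threshold-$0$ fair mean-payoff game.

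Finally, I would lift this to an arbitrary threshold $v \in \Q$ via the weight-shift reduction already noted in \Cref{sec:preliminaries}: subtracting $v$ from every edge weight sends $(G, E_f, \meanpayoff^f_v)$ to a fair mean-payoff game with threshold $0$ on the same arena and the same fair-edge set $E_f$. This shift changes every limit-average value by exactly $v$, so $\meanVal_w(\tau) \geq v$ holds iff the shifted value is $\geq 0$, while leaving the edge set, and therefore the fairness condition $\plays^f(G)$, untouched; consequently the winning regions of both players are unchanged. Determinacy of the threshold-$0$ game thus transfers verbatim to threshold $v$. I do not expect any genuine obstacle here: the entire statement is a transfer of determinacy across the value-preserving reduction of \Cref{thm:mpgadgetcorrectness}, and the only point needing (minor) care is that both the gadget reduction and the weight shift preserve membership in the respective winning regions, which is exactly what \Cref{thm:mpgadgetcorrectness} and the definition of $\meanpayoff_v$ guarantee.
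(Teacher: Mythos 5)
Your proposal is correct and matches the paper's argument exactly: the paper derives this theorem as an immediate corollary of \Cref{thm:mpgadgetcorrectness} combined with the (positional) determinacy of regular mean-payoff games, which is precisely the intersection argument you spell out. Your additional remark on lifting from threshold $0$ to arbitrary $v$ is a harmless elaboration of the normalization already stated in \Cref{sec:preliminaries}.
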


The rest of this section discusses the proof of~\Cref{thm:mpgadgetcorrectness} for 1-fair mean-payoff games in~\Cref{subsec:gadget1-correctness} and for 2-fair mean-payoff games in \Cref{subsec:gadget2-correctness}. The detailed proof of~\Cref{thm:mpgadgetcorrectness} for 2-fair mean-payoff games is provided in~\Cref{sec:app-gadget2-correctness}. Finally,~\Cref{subsec:strategy-meanpayoff} discusses the memory requirements of strategies in fair mean-payoff games resulting from the gadget-based reductions.

\subsection{Proof of~\Cref{thm:mpgadgetcorrectness} for 1-fair Mean-Payoff Games}\label{subsec:gadget1-correctness}

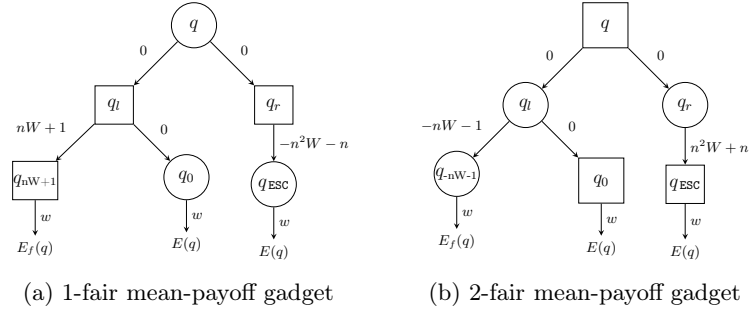
\begin{figure}[t]
  \centering
  \subfloat[%
  1-fair mean-payoff gadget\label{fig:Player1-gadget}]{
  \scalebox{0.6}{
    \begin{tikzpicture}[auto,node distance=2.5cm,semithick]
       \node[draw, circle, minimum size=1cm, inner sep=0pt, font=\large]
(q) {$q$};
       \node[rectangle,draw, minimum width=0.85cm, minimum
height=0.85cm, inner sep=0pt, font=\large] (ql) [below left of=q]
{$q_l$};
       \node[rectangle,draw, minimum width=0.85cm, minimum
height=0.85cm, inner sep=0pt, font=\large] (qr) [below right of=q]
{$q_r$};
       \node[draw, rectangle, minimum width=1cm, minimum height=0.85cm,
inner sep=0pt, font=\large] (qnw) [below left= 3.5em of ql]
{$q_\text{\scriptsize nW+1}$}; 
       \node[draw, circle, minimum size=1cm, inner sep=0pt, font=\large]
(q0) [below right=3.5em of ql] {$q_0$}; 
      \node[draw, circle, minimum size=1cm, inner sep=0pt, font=\large]
(qesc) [below=2.5em of qr] {$q_{\,\texttt{ESC}}$};
       \node (efq) [below of=qnw,yshift=1cm] {$E_f(q)$};
       \node (eql) [below of=q0,yshift=1cm] {$E(q)$};
       \node (eqr) [below of=qesc,yshift=1cm] {$E(q)$};

       \draw[->] (q) -- node[above left=0.1cm] {$0$} (ql);
       \draw[->] (q) -- node[above right=0.1cm] {$0$}(qr);
       \draw[->] (ql) -- node[above left=0.1cm] {$nW+1$} (qnw);
       \draw[->] (ql) -- node[above right=0.1cm] {$0$} (q0);
       \draw[->] (qr) -- node[right] {$-n^2W-n$} (qesc);
       \draw[->] (qnw) -- node[right] {$w$} (efq);
       \draw[->] (qesc) -- node[right] {$w$} (eqr);
        \draw[->] (q0) -- node[right] {$w$} (eql);
    \end{tikzpicture}
  }
  }
  \qquad
  \subfloat[%
 2-fair mean-payoff gadget\label{fig:Player2-gadget}]{\scalebox{0.6}{

  \begin{tikzpicture}[auto,node distance=2.5cm,semithick]
    \node[draw, rectangle, minimum width=0.85cm, minimum
height=0.85cm,minimum size=1cm, inner sep=0pt, font=\large] (q) {$q$};
    \node[draw,circle,minimum size=1cm,inner sep=0pt, font=\large] (ql)
[below left of=q] {$q_l$};
    \node[draw, circle,minimum size=1cm,inner sep=0pt, font=\large] (qr)
[below right of=q] {$q_r$};
    \node[draw, circle,minimum size=1cm,inner sep=0pt, font=\large]
(qnw) [below left=3.5em of ql] {$q_\text{\scriptsize -nW-1}$}; 
    \node[draw, rectangle, minimum width=0.85cm, minimum
height=0.85cm,minimum size=1cm, inner sep=0pt, font=\large] (q0) [below
right=3.5em of ql] {$q_0$}; 
    \node[draw, rectangle, minimum width=0.85cm, minimum height=0.85cm,
inner sep=0pt, font=\large] (qesc) [below=2.5em of qr] {$q_{\,\texttt{ESC}}$};
    \node (efq) [below of=qnw,yshift=1cm] {$E_f(q)$};
    \node (eql) [below of=q0,yshift=1cm] {$E(q)$};
    \node (eqr) [below of=qesc,yshift=1cm] {$E(q)$};

    \draw[->] (q) -- node[above left=0.1cm] {$0$} (ql);
    \draw[->] (q) -- node[above right=0.1cm] {$0$} (qr);
    \draw[->] (ql) -- node[above left=0.1cm] {$-nW-1$} (qnw);
    \draw[->] (ql) -- node[above right=0.1cm] {$0$} (q0);
    \draw[->] (qr) -- node[right] {$n^2W+n$} (qesc);
    \draw[->] (qnw) -- node[right] {$w$} (efq);
    \draw[->] (qesc) -- node[right] {$w$} (eqr);
    \draw[->] (q0) -- node[right] {$w$} (eql) ;
  \end{tikzpicture}

  }}
  \caption{Gadgets for converting fair mean-payoff games to
mean-payoff games. Edges are labeled with their
weights. The edges labeled with $w$ represent: an edge from a gadget node to a node $q' \in E(q)$ or
$E_f(q)$ carry the weight $w(q, q')$.}\label{fig:gadgets}
\end{figure}

\smallskip\noindent\textbf{Gadget Construction \& Intuition.}
Let $(G, E_f, \meanpayoff^f_0)$ be a 1-fair mean-payoff game. We
construct the \emph{gadget game} $G'$ by replacing every fair node in $q \in Q_f$ in $G$, with
its 3-step gadget presented in \cref{fig:Player1-gadget}.
That is, the incoming edges of $q$ is redirected to node $q$ at the root, and the outgoing edges on the third step lead to $E_f(q)$ and $E(q)$, the fair successors and successors of $q$, respectively.
The formal construction of $G'$ can be found in~\Cref{sec:app-formal-construction-of-1-mean-payoff-gadget}.

For notational convenience, for each $q \in Q_f$, we denote the leftmost
branch $q \to q_l \to q_\text{nW+1}$ of the gadget as
\emph{fair branch} $\fair{q}$, the middle branch $q \to q_{l}
\to q_{0}$ as the \emph{simulation branch} $\simul{q}$ and
the rightmost branch $q \to q_r \to q_{\ESC} $ as the
\emph{escape branch} $\escape{q}$. Intuitively, from a fair node $q$,
player~1 can escape to a part of the game that doesn't contain $q$ and is player~1 winning
(according to the current player~1 strategy) by taking the escape branch
and therefore paying a high negative payoff.
Since by this escape choice $q$ is guaranteed to be seen only finitely
often, the negative payoff associated with the escape edge does not change the winner of the game. 
Now assume there is no such winning escape
choice for player 1 from $q$. Then from $q$, all player 1 winning play visits $q$ infinitely often, and thus is forced to visit all of $q$'s outgoing fair edges infinitely often.
If taking one of its fair outgoing edges 
pushes the game into a player~2 winning region that does
not contain $q$
(according to some player~2 strategy), then in the gadget game player~2
can pay a high positive payoff and choose the fair branch with the correct successor to
escape to a player~2 winning region. So, in a way \enquote{\emph{fair branch is the escape branch for player~2}}.
However, player~2 can take this escape branch to win from $q$, \emph{only if} player~1 cannot take her escape branch to win.
The difference in the
amplitude of weights between the escape branches of the two players
(i.e. $\escape{q}$ and $\fair{q}$) stems from this hierarchy 
between the escape choices of the players. 
If neither player has an escape choice, then the middle
branch $\simul{q}$ faithfully simulates the fair mean-payoff game without adding any additional weights to the play via the gadget edges to alter the winner.

Using this intuition, we now prove both directions of \cref{thm:mpgadgetcorrectness} separately for 1-fair mean-payoff games.

\smallskip\noindent\textbf{1. Proof of~$\mathbf{(\Win_1(G') \, \cap \,Q \, \subseteq \,\Win_1(G))}$:} 
     Let $\sigma'$ be a positional player~1 strategy that wins $q$ in $G'$ for the regular mean-payoff objective. We construct a player~1 strategy $\sigma$ that wins $q$ in $G$.

    Consider the subgame arena $G'_{\sigma'}$ of $G'$.  All cycles in $G'_{\sigma'}$ are non-negative; else player~2 can force to eventually only visit a negative cycle and therefore construct a play in $G'_{\sigma'}$ with negative mean-payoff value.

    Using $\sigma'$, we construct $\sigma$ in $G$ as follows:
    \begin{enumerate}[label=\alph*.]
        \item\label{item:1-fair-zeroth} If $q' \not \in Q_f$, then set $\sigma$ to be positional at $q'$ with $\sigma(q') = \sigma'(q')$. 

        \item\label{item:1-fair-first} If $\sigma'(q') = q_r$ for some fair node $q' \in Q_f$, then set $\sigma$ to be positional at $q'$ with $\sigma(q') = \sigma'(q'_{\ESC})$. Intuitively this says if for a fair node $q'$, $\sigma'$ asks to choose the escape branch of the gadget, then $\sigma$ follows $\sigma'$ to choose that branch and moves to the corresponding `escape node' in $G$. 
        
        \item\label{item:1-fair-second} If $\sigma'(q') = q_l$ for some fair node $q' \in Q_f$, then we set $\sigma$ to be an infinite memory strategy at $q'$ that is played in rounds: consider an order on the fair successors of $q'$. In round $i \geq 0$, $\sigma$ chooses the `preferred successor' $\sigma'(q'_0)$ exactly $i$ times and then chooses the $i^\text{th}$ (modulo $|E_f(q')|$) fair successor. 
        
    \end{enumerate}

    \noindent Now we prove that all $\tau \in \plays_\sigma(G)$ are winning for player~1 in $G$. In particular, we will show that $\tau$ is fair and $\meanVal(\tau) \geq 0$.
    
    \smallskip
    For a play $\tau = q^0 q^1 \ldots \in \plays_\sigma(G)$, we construct its \emph{extension} $\tau'$ in $G'$ inductively as follows: we start with $q^0$ and $i=0$ and with increasing $i$,
    \begin{itemize}
      \item If $q^i \not \in Q_f$, append $\tau'$ with $q^{i+1}$;
      \item If $\sigma(H\cdot q^i)$ is defined using~\Cref{item:1-fair-first} i.e., $\sigma'(q^i) = q^i_r$ then append $\tau'$ with $\escape{q^i} \to q^{i+1}$. In particular, we append $\tau'$ with $q^i_r \cdot q^i_{\ESC}\cdot q^{i+1}$.
      \item If $\sigma(H \cdot q^i)$ is defined using~\Cref{item:1-fair-second}, then
        \begin{itemize}
          \item if $q^{i+1} = \sigma'(q^i_{0})$, append $\tau'$ with $\simul{q^i}\to q^{i+1}$ i.e., with $q^i_l \cdot q^i_{0} \cdot q^{i+1}$ ; 
          \item else, append $\tau'$ with $\fair{q^{i}} \to q^{i+1}$, i.e. $q^i_l \cdot q^i_\text{nW+1} \cdot q^{i+1}$.
        \end{itemize} 
    \end{itemize}

    Clearly, $\tau'$ is a play in $G'_{\sigma'}$; hence it contains only non-negative cycles. Further, the restriction of $\tau'$ to the nodes $Q$ of $G$, denoted by $\tau'|_Q$, is exactly $\tau$. 

    Let a `tail' of $\tau$ denote an infinite suffix $\tail$ of $\tau$ that contains only the nodes and edges visited infinitely often in $\tau$; and let $\tail'$ be the extension of $\tau$ in $G'$. We denote the subgraph of $G'_{\sigma'}$ restricted to the nodes and edges in $\tail'$ by $G'_{\tail'}$.

    We first show that $\tau$ is fair. It is sufficient to show that $\tail$ does not contain fair nodes $p$ for which $\sigma(p)$ is defined via~\Cref{item:1-fair-first}, because for all the other fair nodes in $\tail$, $\tau$ visits all their fair successors infinitely often (\Cref{item:1-fair-second}).
     Let $p$ be a node for which $\sigma(p)$ is defined via~\Cref{item:1-fair-first}. Then $\sigma'(p) = p_r$. As $\sigma'$ is positional, only the escape branch of $p$ exists in $G'_{\sigma'}$, carrying the weight $-n^2W - n$. 
    Since $p$ is visited infinitely often, there is a (simple) cycle in $G'_{\sigma'}$ that passes through $p$, and thus sees the weight $-n^2W -n$. In order to maintain a non-negative weight, the cycle needs to contain $n$-many fair branches (carrying weight $nW+1$), which is not possible since there are at most $n-1$ many fair branches in $G'_{\sigma'}$. With this we conclude that $\tau$ is fair.
    
    Now we prove that $\meanVal(\tau) \geq 0$. First let us make an observation about the cycles in $G'_{\sigma'}$ that use only simulation branches. 

    \begin{observation}[sim-cycles] We call cycles $\mathbf{c'}$ in $G'_{\sigma'}$ that do not contain any fair or escape branches \emph{sim-cycles}. That is, $\mathbf{c'}$ contains only simulation branches. Being part of $G'_{\sigma'}$, $\mathbf{c}$ has non-negative weight.
    Since all the gadget branches in $\mathbf{c'}$ carry $0$ weight, the restriction $\mathbf{c} = \mathbf{c'}|_Q$ is a non-negative weight cycle in $G$.
    \end{observation}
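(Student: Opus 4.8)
The plan is to prove the two claims of the observation in sequence: that a sim-cycle $\mathbf{c'}$ has non-negative weight $w'(\mathbf{c'})$ in $G'_{\sigma'}$, and that its restriction $\mathbf{c} = \mathbf{c'}|_Q$ is a cycle of $G$ with $w(\mathbf{c}) = w'(\mathbf{c'})$, so that $w(\mathbf{c}) \ge 0$ follows.

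The first claim is immediate from the observation already made at the start of the proof: since $\sigma'$ wins $q$ in $G'$ against the mean-payoff objective with threshold $0$, every cycle of the one-player subarena $G'_{\sigma'}$ is non-negative (otherwise player~2 could eventually confine the play to a negative cycle and drive the mean-payoff below $0$). A sim-cycle is in particular a cycle of $G'_{\sigma'}$, hence $w'(\mathbf{c'}) \ge 0$.

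For the second claim I would first argue that $\mathbf{c} = \mathbf{c'}|_Q$ is a genuine cycle of $G$. The only vertices of $\mathbf{c'}$ outside $Q$ are internal gadget nodes. Since $\mathbf{c'}$ is a sim-cycle, at each fair node $q' \in Q_f$ it visits it must take the simulation branch $\simul{q'}$, i.e.\ the segment $q' \to q'_l \to q'_0 \to p$ for some $p \in E(q')$; deleting the internal nodes $q'_l, q'_0$ collapses this segment to the single pair $(q', p)$, which is a bona fide edge of $G$ because $p \in E(q')$. At each non-fair node of $\mathbf{c'}$ the successor already lies in $Q$ and the traversed $G'$-edge is literally a $G$-edge. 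Consequently every pair of consecutive $Q$-nodes of $\mathbf{c'}$ is joined by an edge of $G$, and since $\mathbf{c'}$ is a simple cycle it repeats no node, so $\mathbf{c}$ is a simple cycle of $G$. For the weights, along every simulation branch the two gadget edges $q' \to q'_l$ and $q'_l \to q'_0$ carry weight $0$, so the sole non-zero contribution is $w(q', p)$, which is exactly the weight of the projected edge; the non-fair edges are unchanged. Hence $w(\mathbf{c}) = w'(\mathbf{c'}) \ge 0$.

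The argument is essentially bookkeeping, and the one point deserving care is the verification that the projection yields a legitimate simple cycle of $G$ rather than a mere closed walk. This is precisely where the hypothesis that $\mathbf{c'}$ contains no fair or escape branch is used: it guarantees that each gadget detour collapses onto a single original edge carrying the same weight, so that both simplicity and total weight are preserved under the restriction to $Q$.
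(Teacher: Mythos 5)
Your proof is correct and takes essentially the same route as the paper: non-negativity of $w'(\mathbf{c'})$ is inherited from the earlier observation that all cycles of $G'_{\sigma'}$ are non-negative, and the zero weights on the simulation-branch gadget edges give $w(\mathbf{c}) = w'(\mathbf{c'})$, so $\mathbf{c} = \mathbf{c'}|_Q$ is a non-negative cycle of $G$. One small correction to your closing remark: the hypothesis that $\mathbf{c'}$ contains no fair or escape branch is needed only for \emph{weight} preservation (those branches carry $nW+1$ and $-n^2W-n$), whereas the projection $\mathbf{c'}|_Q$ would be a legitimate cycle of $G$ for any branch, since fair and escape branches also terminate in $E_f(q) \subseteq E(q)$ and $E(q)$, respectively.
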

    
    We saw that $\tail$ does not contain nodes defined using~\Cref{item:1-fair-first}. I.e., for all fair nodes $p$ in $\tail$, $\sigma$ is defined via~\Cref{item:1-fair-second}, and $G'_{\tail'}$ contains the branches $\fair{p}$ and/or $\simul{p}$. Thus, all cycles in $\tail'$ are either sim-cycles, or visit a fair branch. 
    
   If $\tail$ does not contain any fair nodes, $\tail = \tail'$, and $\meanVal(\tau) \geq 0$ easily follows. Now we assume $\tail$ contains at least one fair node, and use the following easy-to-verify remark to show $\meanVal(\tau) \geq 0$:
    
    \begin{remark}\label{lem:meanvalinfix}
      Given a play $\rho$ and $\epsilon >0$, if there exists a $k \in \mathbb{N}$ and a tail $\hat{\rho}$ of $\rho$ such that every $k$-length infix $\hat{\rho}_k$ of $\hat{\rho}$ satisfies $\meanVal(\hat{\rho}_k) \geq -\epsilon$, then $\meanVal(\rho) \geq -\epsilon$.
    \end{remark}
    Let $\epsilon > 0$ be arbitrary. Let $k\in \N$ such that $\frac{n^2W}{k} < \epsilon$. Since $\tail$ has at least one fair node, \Cref{item:1-fair-second} is triggered infinitely often to construct the strategy $\sigma$. Therefore, the round (in the sense of~\Cref{item:1-fair-second}) will grow unboundedly bigger. Thus, for any $k$, we can get a tail $\tail$ of $\tau$ that is in round $k$. In $\tail$, for every fair node, their edge to their preferred successor is taken at least $k$-many times between any two occurrences of their other (fair) outgoing edges. Then in $\tail'$, the simulation branches of fair nodes are visited at least $k$-many times in between two occurrences of fair branches.
    Therefore, in $\tail'$ each infix $\tail'_k$ of length $k$ visits the fair branch of each node at most once. Consequently, there are at most $n$-many simple cycles in $\tail'_k$ that visits a fair branch, and all its other simple cycles are sim-cycles. Now, total weight of $n$-many simple cycles that are not sim-cycles are $\geq -n^2W$, and since all the sim-cycles have non-negative weight, $\meanVal(\tau_k) \geq \frac{-n^2W}{k} \geq - \epsilon$. Using the fact from the previous paragraph, we get that $\meanVal(\tau) \geq -\epsilon$. This implies that for any $\epsilon > 0$, $\meanVal(\tau) \geq -\epsilon$. Hence, $\meanVal(\tau) \geq 0$.

    This concludes the proof of the first direction. 

    \smallskip\noindent\textbf{2. Proof of~$\mathbf{(\Win_2(G') \, \cap \,Q \, \subseteq \,\Win_2(G))}$:}
     Analogous to the previous part, we consider a positional winning strategy $\pi'$ of player~2 in $G'$ and construct, this time, a \emph{positional} winning strategy $\pi$ for player~2 in $G$. Note that, none of the fair nodes in $G$ belong to player~2 and hence the construction of $\pi$ is quite straight forward: for all $q \in Q_2$, we set $\pi(q) = \pi'(q)$. We define $G'_{\pi'}$ as the subgame restricted to $\pi'$, as before. 

    For a play $\tau = q^0 q^1 \ldots \in \plays_\pi(G)$, we define its \emph{extension} $\tau'$ in $G'$ inductively as follows: we start from with $q^0$ and $i = 0$, and with increasing $i$,
    \begin{itemize}
        \item If $q^i \not \in Q_f$, append $\tau'$ with $q^{i+1}$; Otherwise,
        \item If $\simul{q^i}$ exists in $G'_{\pi'}$, append $\tau'$ with $q^i_l \cdot q^i_{0} \cdot q^{i+1}$;
        \item If $\simul{q^i}$ does not exist in $G'_{\pi'}$:
        \begin{itemize} 
            \item If $q^{i+1} = \pi'(q^i_\text{nW+1})$, append $\tau'$ with the fair branch i.e. $q^i_l \cdot q^i_\text{nW+1} \cdot q^{i+1}$, 
            \item Otherwise, append it with the escape branch, i.e. $q^i_r \cdot q^i_{\ESC} \cdot q^{i+1}$.
        \end{itemize}
    \end{itemize}
    We define the tail of $\tau$ as $\tail$ and its extension $\tail'$, analogous to the previous part. We denote the subgraph of $G'_{\pi'}$ restricted to the nodes and edges in $\tail'$ by $G'_{\tail'}$.

    Let $\tau \in \plays_\pi(G)$ be a play in $G$ that conforms with $\pi$. If $\tau$ is not fair, it is automatically won by player~2. Therefore, assume that $\tau$ is fair. We show that all the simple cycles in $\tail$ has negative weight. For this, we use the fact that its extension $\tau'$ is a play in $G'_{\pi'}$ and hence all the cycles in $G'_{\tail'}$ have negative weight. Since the length of simple cycles are bounded by $n$, this gives us $\meanVal(\tau) < 0$.

    \noindent The main argument is the absence of fair branches in $G'_{\tail'}$, introduced in~\Cref{claim:1-fair-absence}.
    
    \smallskip 
    \begin{claim}\label{claim:1-fair-absence}
      If $\tau$ is fair, there exist no fair branches $(\fair{})$ in $G'_{\tail'}$.
    \end{claim}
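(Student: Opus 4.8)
The plan is to argue by contradiction: assume some fair branch $\fair{p}$ occurs in $G'_{\tail'}$ and derive a contradiction with $\pi'$ being winning for player~2. First I would record two facts about $G'_{\tail'}$. Since $\tail'$ is the recurrent part of the play $\tau'$, the subgraph $G'_{\tail'}$ is strongly connected; and since $G'_{\tail'}\subseteq G'_{\pi'}$ and $\pi'$ wins the mean-payoff game for player~2, every cycle in $G'_{\tail'}$ has strictly negative weight (a non-negative cycle would let player~1 secure $\meanVal\ge 0$). Second, I would establish the key \emph{weight estimate}: any simple cycle of $G'_{\pi'}$ that contains at least one fair branch but no escape branch is strictly positive. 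Indeed, such a cycle passes through at most $n$ of the original nodes, so its $w$-labelled edges contribute at least $-nW$, while each fair branch contributes $+nW+1$ and every other gadget edge contributes $0$; hence its total weight is at least $-nW+(nW+1)=1>0$. Combined with the previous fact, this shows that in $G'_{\tail'}$ every fair branch lies, on each cycle through it, together with an escape branch.

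Next I would peel off the escape branches. Let $H$ be the graph obtained from $G'_{\tail'}$ by deleting all escape branches (equivalently, the nodes $q_r,q_{\ESC}$), so that fair and simulation branches survive in $H$. By the weight estimate, the two endpoints $p$ and $p^{*}:=\pi'(p_{nW+1})$ of any surviving fair branch cannot lie in the same strongly connected component of $H$ -- otherwise $H$ would contain an escape-free cycle through $\fair{p}$, which would be positive. Hence $\fair{p}$ connects two distinct SCCs of $H$, so $H$ has at least two SCCs. Now pick a \emph{sink} SCC $S$ of $H$. Because $\tail'$ is recurrent and strongly connected while $S\subsetneq \tail'$, the play leaves $S$ infinitely often; but the only edges of $G'_{\tail'}$ absent from $H$, hence the only way to leave a sink SCC, are escape branches. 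Such an escape branch departs from a gadget root $r\in S$ with $\pi'(r_l)=r_{nW+1}$. Here fairness enters: since $r\in\inf(\tau)$ is fair and $r^{*}:=\pi'(r_{nW+1})\in E_f(r)$, the fair edge $(r,r^{*})$ is taken infinitely often, so $\fair{r}$ lies in $\tail'$ and therefore in $H$. As $S$ is a sink SCC, the $H$-path $\fair{r}$ stays inside $S$, placing both $r$ and $r^{*}$ in $S$; strong connectivity of $S$ then yields an escape-free cycle through $\fair{r}$, which by the weight estimate is positive -- contradicting that all cycles of $G'_{\tail'}$ are negative.

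I expect the sink-SCC step to be the crux. The weight estimate and the negativity of cycles are routine, and the implication \enquote{escape branch at $r$ $\Rightarrow$ $\fair{r}$ present} is a direct consequence of transition fairness (every fair successor is eventually taken). The delicate point is ruling out that the fair branch merely \emph{escapes} $p$ forever: this is exactly what strong connectivity of the recurrent set forbids, and feeding a sink SCC into the weight estimate converts \enquote{player~2 sets up a fair branch inside the recurrent region} into a player~1-favourable positive cycle, contradicting the optimality of $\pi'$. Some care is needed to treat the intermediate gadget nodes $q_r,q_{\ESC}$ correctly when deleting escape branches and when checking that a transition leaving a sink SCC is necessarily the first edge of an escape branch.
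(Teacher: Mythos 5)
Your proof is correct and follows essentially the same route as the paper's: prune the escape branches, observe that a simple cycle containing a fair branch but no escape branch has weight at least $-nW+(nW+1)>0$ and hence cannot exist in $G'_{\pi'}$, and use fairness of $\tau$ to guarantee that whenever the escape branch of a node $r$ appears in $\tail'$, the fair branch $\fair{r}$ (leading to $\pi'(r_{\text{nW+1}})$) appears as well. The only divergence is the endgame: the paper concludes that the play gets trapped in the simulation-only cycles of the pruned graph and therefore cannot revisit a fair branch, whereas you argue that a sink SCC can only be left via an escape branch, whose associated fair branch then closes a positive escape-free (simple) cycle inside that SCC --- a slightly more explicit rendering of the same trapping argument, resting on identical ingredients.
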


    \noindent Before proving the claim, we discuss how it implies that all simple cycles in $\tail$ are negative, resulting $\meanVal(\tau) < 0$. From the definition of extension, absence of fair branches in $G'_{\tail'}$ implies that either (i) for all $q \in Q_f$ only $\simul{q}$ exists in $G'_{\tail'}$, or (ii) for some $q \in Q_f$, only $\escape{q}$ exists in $\tail'$. By construction of $\tau'$, case (ii) above implies the existence of a fair node $q \in Q_f$ that appears infinitely often in $\tau$ but does not take all its fair outgoing edges, in particular $q \to \pi'(q_\text{nW+1}) \in E_f(q)$ infinitely often. This makes $\tau$ unfair, contradicting our assumption. In case of (i), since all simple cycles $\mathbf{c}'$ in $\tail'$ are negative and none of the gadget nodes contribute any weight due to only $\simul{}$ existing in $\tail'$, we conclude that the cycles $\mathbf{c} = \mathbf{c'}|_{Q}$ in $\tail$ have negative weight. This implies that $\meanVal(\tau) < 0$. 
    
    By the above-mentioned argument, proving~\Cref{claim:1-fair-absence} concludes the second direction of the proof.

     \smallskip
    \noindent \textit{Proof of~\Cref{claim:1-fair-absence}.} As discussed above, whenever $\escape{q}$ is in $G'_{\tail'}$ for a node $q$, $\fair{q}$ is also in it.
    Now we remove all the escape branches from $G'_{\tail'}$ and obtain the subgraph $S$. Note that $S$ has no dead-ends, since for all $q$ for which $\escape{q}$ got removed from $G'_{\tail'}$, $\fair{q}$ was also in $G'_{\tail'}$.
    
    For all nodes $q$ in $S$, either $\fair{q}$ or $\simul{q}$ is in $S$. As a subgraph of $G'_{\pi'}$, all cycles in $S$ have negative weight.
     It is easy to see that a fair branch cannot lie on a simple cycle in $S$, because fair branches carry weight $nW+1$ and the other edges come either from $\simul{}$ carrying weight $0$, or from $G$ carrying weight $\geq -W$. Hence, a simple cycle containing a fair branch would have positive weight. 
    
    Now assume fair branches exist in $\tail'$, but not on cycles of $S$. Then all cycles in $S$ consist only of simulation branches. However, nodes with simulation branches have the same outgoing edges in 
    $G'_{\tail'}$ and in $S$. Since every node in $S$ is visited infinitely often in $\tau$, $\tau$ will eventually enter cycles of $S$ and never leave them, since these nodes have the same outgoing edges in $G'_{\tail'}$. This implies fair branches are not infinitely often visited in $\tau$, and concludes the proof of~\Cref{claim:1-fair-absence}.
    \qed


\subsection{Proof of~\Cref{thm:mpgadgetcorrectness} for  2-fair Mean-Payoff Games}\label{subsec:gadget2-correctness}

Observing page constrains, we provide the proof of~\Cref{thm:mpgadgetcorrectness} for 2-fair mean-payoff games
in \Cref{sec:app-gadget2-correctness} and only discuss its differences to \Cref{subsec:gadget1-correctness} here. 

Interestingly, the player~2 gadget for 2-fair games (\Cref{fig:Player2-gadget}) is exactly the dual of the player~1 gadget for 1-fair games (\Cref{fig:Player1-gadget}). This is surprising, as the winning objectives of player~1 and 2 are not exactly dual in these games. While player~1 in 1-fair games is expected to win the vertices with optimal value 0 (that is, the best value a player~1 strategy can achieve from these nodes is $0$, and getting $\epsilon$-close to the best possible value isn't sufficient to win); for player~2 to win a vertex in a 2-fair game, it is sufficient for him to get $\epsilon$-close to the best possible value he can achieve from this vertex. 
As we will see in the next chapter, these differences in the objectives' behavior w.r.t.\ optimal values reflect drastically in the required strategy sizes. Namely, in 2-fair games finite strategies are sufficient whereas in 1-fair games infinite strategies are required to win. 

It is therefore surprising that 
this imbalance in objectives does not demand any changes in the gadget's structure for proving \cref{thm:mpgadgetcorrectness}   in the case of 2-fair mean-payoff games. The main difference w.r.t.\ the proof form \Cref{subsec:gadget1-correctness}
lies in the construction of winning strategies for the fair player. In particular, the proof for 1-fair games (in  \Cref{subsec:gadget1-correctness}) is slightly more complicated due to the required infinite strategy construction for player~2. On the other hand, the proof of 2-fair games (in \Cref{sec:app-gadget2-correctness}) reveals that a 
finite memory 
is sufficient for a winning player~2 strategy in 2-fair games. 

\subsection{Strategy Complexity for Fair Mean-Payoff Games}\label{subsec:strategy-meanpayoff}

We list an overview of results on strategy requirements for fair mean-payoff games. Most of these results follow from the proofs of~\Cref{thm:mpgadgetcorrectness}. 
We discussed in \Cref{sec:example} that player~1 may need infinite memory to achieve an optimal value in a 1-fair mean-payoff game. \Cref{lemma:mean-payoff-finite} shows that player~1 can reach $\epsilon$-close to the optimal value with finite memory strategies.

\begin{lemma}\label{lemma:mean-payoff-finite}
  Given a 1-fair mean-payoff game $(G,E_f,\meanpayoff^f)$, let the optimal value from some node $q_0$ is $v$. Then, for all $\epsilon > 0$, there exists a finite-memory strategy of player~$1$ that is winning from $q_0$ in fair mean-payoff game $(G,E_f,\meanpayoff(v-\epsilon))$.
\end{lemma}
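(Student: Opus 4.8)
The plan is to reuse the gadget reduction of \Cref{thm:mpgadgetcorrectness}, replacing the infinite-memory construction of \Cref{item:1-fair-second} by a finite-memory one in which the unboundedly growing round index is frozen at a large constant $k=k(\epsilon)$. First I would reduce to threshold $0$ by shifting, as in the preliminaries: set $s = v-\tfrac{\epsilon}{2}$ and consider $\hat G=(Q,E,\hat w)$ with $\hat w = w-s$, so that $\meanVal_{G}(\tau)=\meanVal_{\hat G}(\tau)+s$ for every play $\tau$ (scaling by the common denominator of $s$ keeps weights integral if one insists on integer-weighted arenas). Since $s<v$, a threshold-$v$ winning strategy is a fortiori winning for the smaller threshold $s$, so player~$1$ wins $(\hat G,E_f,\meanpayoff^f_0)$ from $q_0$. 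By \Cref{thm:mpgadgetcorrectness} there is a positional player~$1$ strategy $\sigma'$ winning $q_0$ in the gadget game $\hat G'$, and, as argued in the proof of that theorem, every cycle of $\hat G'_{\sigma'}$ is non-negative.

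Next I would build $\sigma$ from $\sigma'$ exactly as in \Cref{subsec:gadget1-correctness}, keeping \Cref{item:1-fair-zeroth} and \Cref{item:1-fair-first} verbatim, but changing \Cref{item:1-fair-second} to the following finite-memory rule: at a fair node $q'$ with $\sigma'(q')=q'_l$, repeat forever the block that plays the preferred successor $\sigma'(q'_0)$ exactly $k$ times and then takes the next fair successor of $q'$ in a fixed cyclic order on $E_f(q')$. This needs only a bounded counter up to $k$ and a pointer into $E_f(q')$ per fair node, so $\sigma$ has memory $\mathcal{O}(k\cdot \max_{q}|E_f(q)|)$, i.e.\ finite. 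Fairness of every $\sigma$-play is preserved: the escape argument of the theorem's first direction still forces every infinitely-often-visited fair node to be of \Cref{item:1-fair-second} type, and the cyclic schedule makes each such node take all of its fair edges infinitely often.

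For the value I would rerun the tail argument of \Cref{subsec:gadget1-correctness} with the fixed $k$ in place of a growing round. Let $\tail$ be a tail of a $\sigma$-play $\tau$ and $\tail'$ its extension in $\hat G'_{\sigma'}$; between two consecutive uses of any fair branch the corresponding simulation branch now occurs exactly $k$ times, so each length-$k$ infix of $\tail'$ meets each fair branch at most once, and hence contains at most $n$ simple non-sim-cycles of total weight $\ge -n^2\hat W$ (where $\hat W$ is the maximum absolute weight of $\hat G$), while all its sim-cycles are non-negative. Thus every length-$k$ infix has average weight $\ge -n^2\hat W/k$, and \Cref{lem:meanvalinfix} gives $\meanVal_{\hat G}(\tau)\ge -n^2\hat W/k$. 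Choosing $k\ge 2n^2\hat W/\epsilon$ yields $\meanVal_{\hat G}(\tau)\ge -\tfrac{\epsilon}{2}$, hence $\meanVal_{G}(\tau)=\meanVal_{\hat G}(\tau)+s\ge v-\epsilon$, so $\sigma$ is a finite-memory strategy winning $\meanpayoff_{v-\epsilon}$ from $q_0$.

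The main obstacle, and the only place this differs qualitatively from \Cref{thm:mpgadgetcorrectness}, is quantifying the loss incurred by freezing the counter: with a growing round the negative contribution of non-sim cycles amortises to $0$ in the limit, whereas a constant $k$ leaves a residual $-n^2\hat W/k$ that must be absorbed by the $\epsilon$-slack. Tying $k$ to $\epsilon$ through this bound, while simultaneously preserving fairness with a finite counter via the cyclic schedule, is the crux; everything else is inherited directly from the proof of \Cref{thm:mpgadgetcorrectness}.
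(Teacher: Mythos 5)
Your proposal is correct and follows essentially the same route as the paper: the paper also reduces to threshold $0$ by shifting, then freezes the growing round index of \Cref{item:1-fair-second} at a constant $k$ with $n^2W/k<\epsilon$ and invokes \Cref{lem:meanvalinfix} on $k$-length infixes of the tail. Your only deviation is shifting by $v-\tfrac{\epsilon}{2}$ rather than by $v$ and splitting the slack, which is an equivalent bookkeeping choice.
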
 

\begin{proof}
  W.l.o.g, we show this for $v=0$. In the proof of~\Cref{thm:mpgadgetcorrectness}, we construct an infinite memory strategy $\sigma$ for player~1 where at $i^\text{th}$ round, player~1 plays the preferred successor $i$ times and then plays a fair successor. Any $\sigma$ play has the following property: for any $\epsilon$, there exists a tail of the play such that the average weight of every $k$-length infix of the tail is at least $-\epsilon$. We fix this $k$ and modify the strategy construction in~\Cref{item:1-fair-second} to play the preferred successor $k$ times and then play a fair successor in every round. By~\Cref{lem:meanvalinfix}, this finite memory strategy ensures the mean-payoff value of any play is at least $-\epsilon$.\qed
\end{proof}

The above lemma entails that in a 1-fair mean-payoff game with threshold value $0$, player~1 has a finite memory strategy from a node $q \in \Win_1$ if the optimal value is strictly larger than $0$. On the other hand, in 2-fair mean-payoff games getting $\epsilon$-close to the best value achievable from a vertex is always sufficient to ensure winning for player~2. Intuitively, for this reason, player~2 has finite memory winning strategies in 2-fair mean-payoff games.


\begin{lemma}\label{lemma:mean-payoff-2-fair-finite}
  Given a 2-fair mean-payoff game $(G,E_f,\meanpayoff^f)$, for all $q \in \Win_2$, there exists a finite-memory strategy of player~$2$ that is winning from $q$.
\end{lemma}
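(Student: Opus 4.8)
The plan is to reuse the player~2 strategy constructed in the 2-fair direction of \Cref{thm:mpgadgetcorrectness} (spelled out in \Cref{sec:app-gadget2-correctness}) and to observe that, in contrast to the player~1 construction of \Cref{subsec:gadget1-correctness}, it can be realized with \emph{finite} memory. The underlying reason is the asymmetry noted after \Cref{thm:mpgadgetcorrectness}: player~2 only needs to guarantee the \emph{strict} inequality $\meanVal(\tau)<0$, so it suffices to stay a fixed distance below $0$, whereas player~1 in a 1-fair game must attain the boundary value $0$ exactly, which forces the unbounded rounds of~\Cref{item:1-fair-second} (cf.~\Cref{lemma:mean-payoff-finite}).

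First I would fix $q\in\Win_2(G)$; by \Cref{thm:mpgadgetcorrectness} we have $q\in\Win_2(G')$, and since $G'$ is a positionally determined mean-payoff game, player~2 has a \emph{positional} winning strategy $\pi'$ on $G'$. Consequently every cycle of $G'_{\pi'}$ has strictly negative (integer) weight -- a margin that does not vanish along plays. I would then translate $\pi'$ into a finite-memory player~2 strategy $\pi$ on $G$: copy $\pi'$ positionally on non-fair player~2 nodes; at a fair node where $\pi'$ takes the escape branch, move positionally to the escape successor; and at a fair node $q$ where $\pi'$ takes $q_l$, play a \emph{bounded} round-robin that repeats the preferred successor $\pi'(q_0)$ exactly $k$ times and then takes one fair successor, cycling through $E_f(q)$, for a single fixed $k=\Theta(n^3W)$ chosen below. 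The memory is then $O\!\left(k\cdot\max_q\abs{E_f(q)}\right)$, hence finite.

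Next I would verify the two winning requirements on a tail $\tail$ of any $\pi$-play $\tau$. For \emph{fairness}, the key step is the dual of \Cref{claim:1-fair-absence}: no escape branch lies on a cycle of $G'_{\pi'}$, since such a cycle carries $+n^2W+n$ while its at most $n-1$ remaining fair branches (each $-nW-1$) together with the real edges (each $\ge -W$) contribute at least $-(n^2W+n-1)$, giving total weight $\ge 1>0$ and contradicting negativity; hence escape nodes occur only finitely often and fairness at them is vacuous, while the round-robin visits every fair successor of each infinitely-visited non-escape fair node infinitely often. For the \emph{mean-payoff}, I would decompose a length-$L$ infix $h$ of $\tail$ into simple cycles plus a path of length $<6n$, and lift each cycle to $G'_{\pi'}$: a cycle with no fair-mode edge lifts through simulation branches (which carry weight $0$) to a negative cycle of identical weight, and since it has length at most $n$ and integer weight at most $-1$ in $G$, its mean is at most $-1/n$; a cycle containing a fair-mode edge lifts through fair branches, so its real weight is at most $n^2W+n$; and the round-robin forces at most $L/(k+1)$ fair-mode traversals, bounding the number of the latter cycles.

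The hard part will be the resulting amortized estimate, which is exactly where the fixed $k$ pays off. Summing the three contributions yields
\[
\frac{w(h)}{L}\;\le\;-\frac{1}{n}+\frac{n^2W+n+1}{k+1}+\frac{6+6nW}{L},
\]
so taking $k+1>2n\,(n^2W+n+1)$ makes the middle term at most $\tfrac{1}{2n}$, and for all large $L$ the last term is at most $\tfrac{1}{4n}$, forcing $w(h)/L\le-\tfrac{1}{4n}$. By the upper-bound dual of the infix principle in \Cref{lem:meanvalinfix}, this gives $\meanVal(\tau)<0$ for every $\pi$-play, so $\pi$ is winning from $q$. The essential point -- and the contrast with \Cref{lemma:mean-payoff-finite} -- is that a \emph{single} polynomial-size $k$ works uniformly, precisely because the integer-cycle margin $-1/n$ does not shrink to $0$; the two places demanding care are the escape-branch weight computation and lining up the amortization constants so that the bounded per-excursion cost stays below that fixed margin.
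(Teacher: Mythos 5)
Your proposal is correct and follows essentially the same route as the paper's proof (given in the appendix as part of the 2-fair case of the gadget theorem): take a positional winning strategy $\pi'$ in the gadget game, copy it positionally outside the gadgets, and at fair nodes run a round-robin with a \emph{fixed} round length of order $n^3W$, proving fairness via the escape-branch-on-a-cycle contradiction and negativity of the mean payoff via an amortized count of fair-branch excursions against the $\le -1$ integer weight of simulation cycles. The only cosmetic difference is that you decompose infixes of the play in $G$ and lift cycles to $G'$, whereas the paper decomposes infixes of the lifted play in $G'$ and projects back; the constants and the key insight (the non-vanishing $-1/n$ cycle margin makes a single finite $k$ suffice, unlike the 1-fair player-1 case) coincide.
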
 

Finally, the proofs of~\Cref{thm:mpgadgetcorrectness} from \Cref{subsec:gadget1-correctness} (resp. \Cref{sec:app-gadget2-correctness}) reveal the existence of memoryless winning player~2 (resp. player~1) strategies in 1-fair (resp. 2-fair) mean-payoff games.

\begin{lemma}\label{lemma:mean-payoff-2-fair-memoryless}
  For all fair mean-payoff games, memoryless strategies are sufficient for the player who does not own the fair nodes.
\end{lemma}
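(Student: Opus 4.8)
The plan is to obtain the statement directly from the gadget reduction of \Cref{thm:mpgadgetcorrectness} together with the positional determinacy of regular mean-payoff games. Recall that in an $i$-fair game all fair nodes belong to player~$i$, so the ``player who does not own the fair nodes'' is player~$2$ in a $1$-fair game and player~$1$ in a $2$-fair game; I treat both cases uniformly by writing $j = 3-i$ for this non-fair player. The quantitative input is that regular mean-payoff games are positionally determined, so from every node of $\Win_j(G')$ player~$j$ has a single \emph{memoryless} winning strategy $\pi'$ in the gadget game $G'$. Since $\Win_j(G) = \Win_j(G') \cap Q \subseteq \Win_j(G')$ by \Cref{thm:mpgadgetcorrectness}, it suffices to turn $\pi'$ into a memoryless strategy $\pi$ for player~$j$ in $G$ that wins from every node of $\Win_j(G)$.

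The first step I would record is the structural observation that makes this restriction possible. In both gadgets (\Cref{fig:Player1-gadget} and \Cref{fig:Player2-gadget}) only the \emph{fair} nodes of $G$ are expanded, and every fair node is owned by player~$i$; this expansion introduces only \emph{fresh} gadget nodes while every original node of $G$ survives verbatim with its original outgoing edges (edges into a former fair node are merely redirected to the gadget root of the same name). In particular, \emph{no} node of $Q_j$ is expanded, so every genuine decision point of player~$j$ in $G$ appears unchanged in $G'$. Hence defining $\pi(q) = \pi'(q)$ for all $q \in Q_j$ gives a well-defined positional strategy in $G$, and it is memoryless precisely because $\pi'$ is. The $j$-owned gadget nodes created by the expansion (for instance $q_l, q_r, q_{nW+1}$ in the $1$-fair gadget) simply do not exist in $G$, so player~$j$ never has to resolve those choices when playing in $G$.

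The second step is to certify that this restriction wins, which is already carried out inside the proof of \Cref{thm:mpgadgetcorrectness}. For $1$-fair games it is exactly Part~2 of \Cref{subsec:gadget1-correctness}: the winning player~$2$ strategy is defined by $\pi(q) = \pi'(q)$ for $q \in Q_2$, and any $\pi$-play $\tau$ in $G$ is lifted to an \emph{extension} $\tau'$ in $G'_{\pi'}$ whose gadget steps are fixed by the move already taken in $\tau$, so that $\tau'$ conforms with $\pi'$ and $\tau = \tau'|_Q$; the argument there shows $\tau$ is either unfair or satisfies $\meanVal(\tau) < 0$, hence winning for player~$2$. The $2$-fair case is dual (\Cref{sec:app-gadget2-correctness}): the winning player~$1$ strategy is obtained in the same way, by restricting a positional $G'$-strategy to the unchanged original player~$1$ nodes, with the corresponding extension argument again certifying that it wins.

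The only point requiring care, and thus the main obstacle, is to rule out that collapsing a gadget secretly forces memory on player~$j$: one might fear that player~$j$'s victory in $G'$ relies on coordinating several moves \emph{inside} one gadget, which could become a history-dependent choice in $G$. This cannot arise for the non-fair player, because every $j$-owned gadget node is internal to a gadget built around a node owned by the opponent player~$i$; in $G$ it is player~$i$, not player~$j$, who selects the successor of that node, and the extension construction reads player~$j$'s gadget-internal moves off $\pi'$ rather than off the history of $\tau$. Consequently player~$j$'s only real decision points in $G$ are the preserved nodes of $Q_j$, on which $\pi'$ is already memoryless, which completes the argument.
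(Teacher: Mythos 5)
Your proposal is correct and matches the paper's own justification: the paper derives \cref{lemma:mean-payoff-2-fair-memoryless} as a corollary of the proofs of \cref{thm:mpgadgetcorrectness}, where the non-fair player's winning strategy is obtained exactly as you describe, by restricting a positional winning strategy $\pi'$ of the gadget game to the unchanged nodes of $Q_{3-i}$ and certifying it via the extension construction. Your additional remark that the $j$-owned gadget-internal nodes disappear in $G$ and are resolved by $\pi'$ inside the extension is precisely the point that makes the restriction well-defined and memoryless.
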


\section{Solving Fair Energy Games}\label{sec:solving-fair-energy}
In this section, we aim to extend our gadget-based approach to solve fair energy games. Intuitively, gadget-based approaches transfer the determinacy of regular (qualitative or quantitative) objectives to their fair variants, as demonstrated in \cite{irmak1} for fair parity games and by the proofs of~\Cref{thm:mpgadgetcorrectness} for fair mean-payoff games. Conceptually, these techniques build on finding winning strategies for either player in the fair game -- based on their winning strategies in the gadget game -- with similar infinite behavior. Therefore, before attempting a gadget-based approach for fair games, we need to ensure that these games are determined.


In this section, we show that 2-fair energy games are \emph{not determined}, whereas 1-fair energy games are. While due to the above-mentioned reason we cannot construct gadgets for 2-fair energy games, we give simple algorithms to compute both $\Win_1$ and $\Win_2$. For 1-fair energy games, we observe that the gadget in~\Cref{fig:Player2-gadget} falls short, and we present new gadgets.



\subsection{Discussion on Determinacy of Energy Games}\label{subsec:discussion-short}
Whenever $\varphi_c$ is a Borel set, the following holds due to Borel determinacy~\cite{Martin75}:
 \begin{align*}
\forall \sigma \in \Sigma_1.\, \exists \pi \in \Sigma_2, \, \play_{\sigma, \pi}(G, q) \not \in \varphi_c \Leftrightarrow
\exists \pi \in \Sigma_2.\, \forall \sigma \in \Sigma_1,\,  \play_{\sigma, \pi}(G, q) \not \in \varphi_c 
\end{align*}
As $\energy_c$ and $\energy^f_c$ are Borel sets, the equation holds, showing that energy and fair energy games are determined under a fixed credit $c$. Furthermore, this equality combined with the negation of~\Cref{eq:determinacyLine1} defining $\Win_1$, yields the following formulation of $Q \setminus \Win_1$:
\begin{equation}
  q \not \in \Win_1 \iff 
  \forall c \in \mathbb{N}.  \,\exists \pi \in \Sigma_2.\, \forall \sigma \in \Sigma_1, \, \play_{\sigma, \pi}(G, q) \not \in \varphi_c \label{eq:energyDeterminacyPlayer2-triv-equivalent}
\end{equation}
Again, this formulation holds for both energy and fair energy games. Clearly it follows that a (fair) energy game is determined if and only if~\Cref{eq:energyDeterminacyPlayer2} is equivalent to~\Cref{eq:energyDeterminacyPlayer2-triv-equivalent}. In fact, if we can restrict the quantification over $c$ in~\Cref{eq:energyDeterminacyPlayer2-triv-equivalent} to a finite set, then we can swap $\forall c \in \mathbb{N}$ and $\exists \pi \in \Sigma_2$, which yields the desired equivalence, showing the game is determined. 
This is indeed the case in energy games: Whenever player~1 wins from a node, it wins with initial credit $nW$. Dually, whenever player~2 wins w.r.t. initial credit $nW$, it wins against every initial credit. That is, the quantification over the $c$ can be restricted to $[0, nW]$. 

Using this trick, we show in~\Cref{sec:1-fair-energy} that 1-fair energy games are determined. However, we demonstrate in the next section that this does not hold for 2-fair energy games and that these games are not determined.

\subsection{2-fair Energy Games}\label{sec:2-fair-energy}
We start by showing that 2-fair energy games are not determined.
\begin{theorem}
  2-fair energy games are not determined.
\end{theorem}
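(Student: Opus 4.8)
The plan is to prove non‑determinacy by exhibiting an explicit 2-fair energy arena together with a single node that lies in neither winning region, which by definition means $Q \neq \Win_1 \cup \Win_2$. Concretely, I would take the arena with a player-2 node $q$ and a player-1 node $p$, where $q$ carries a regular self-loop $(q,q)$ of weight $-1$ and a fair edge $(q,p)\in E_f$ of weight $0$, while $p$ carries a self-loop $(p,p)$ of weight $0$ and an edge $(p,q)$ of weight $0$. Since the only fair node $q$ belongs to player~2, this is a 2-fair arena, and all out-degrees are at least one. I claim that $q\notin\Win_1$ and $q\notin\Win_2$. The guiding intuition is that from $q$ player~2 can lower the energy by an arbitrary but \emph{finite} amount before fairness forces him to release control to player~1 at $p$, where player~1 can freeze the play; thus player~2 can defeat any fixed initial credit but can never drive the energy to $-\infty$ along a single fair play.

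For $q\notin\Win_1$, recall from the negation of~\eqref{eq:energyDeterminacyPlayer1} that this requires, for every credit $c\in\N$ and every $\sigma\in\Sigma_1$, a response $\pi\in\Sigma_2$ with $\play_{\sigma,\pi}(G,q)\notin\energy^f_c$, i.e.\ a \emph{fair} play whose energy dips below $-c$. Fix $c$ and let $\pi_c$ take the self-loop $(q,q)$ until the accumulated weight is strictly below $-c$ (which happens after $c+1$ steps, giving $c+w(\text{prefix})=-1<0$), and thereafter always take the fair edge $(q,p)$ whenever at $q$. Against any $\sigma$ the play first reaches energy $-(c+1)<-c$, so it is not in $\energy_c$; moreover whenever $q$ is visited it is eventually left via $(q,p)$, so either $q\in\inf(\tau)$ together with $(q,p)\in\inf(\tau)$, or $q\notin\inf(\tau)$, and in both cases the play is fair. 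Hence $\play_{\sigma,\pi_c}(G,q)\notin\energy^f_c$ for every $\sigma$, establishing $q\notin\Win_1$.

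The crux is $q\notin\Win_2$, which is exactly the failure of the quantifier swap between $\forall c\in\N$ and $\exists\pi\in\Sigma_2$ highlighted in \Cref{subsec:discussion-short}. By the negation of~\eqref{eq:energyDeterminacyPlayer2}, I must show that for \emph{every} $\pi\in\Sigma_2$ there are $c$ and $\sigma$ with $\play_{\sigma,\pi}(G,q)\in\energy^f_c$. Here I would let player~1 play the \emph{freeze} strategy $\sigma^{*}$ that takes $(p,p)$ forever once $p$ is reached, and inspect the unique play $\tau=\play_{\sigma^{*},\pi}(G,q)$. If $\pi$ never takes the fair edge, then $\tau$ stays at $q$ via the self-loop, so $q\in\inf(\tau)$ while $(q,p)\notin\inf(\tau)$: the play is unfair and therefore lies in $\energy^f_c$ for every $c$, so player~1 wins. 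Otherwise $\pi$ eventually takes $(q,p)$; letting $-t$ be the (finite) minimum energy reached along the resulting prefix, the play is then trapped in $p^{\omega}$, the node $q$ occurs only finitely often (so $\tau$ is vacuously fair), and the energy never drops below $-t$, whence $\tau\in\energy_t\subseteq\energy^f_t$. Either way player~1 wins with a suitable credit, so $q\notin\Win_2$.

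I expect the only real obstacle to be the $q\notin\Win_2$ direction, namely arguing that \emph{no single} player-2 strategy works against all credits at once. The freeze strategy is what makes this manageable: it collapses the analysis to a single play and exploits the key structural fact that keeping the play fair forces player~2 to surrender control at $p$ after only finitely many energy-decreasing steps, capping the total loss. This is precisely the reason the $\forall c$ quantifier in~\eqref{eq:energyDeterminacyPlayer2-triv-equivalent} cannot be restricted to a finite set as it can for regular energy games, and hence the reason determinacy fails.
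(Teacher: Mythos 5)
Your proposal is correct and follows essentially the same route as the paper: the same two-node counterexample (a player-2 node with a $-1$ self-loop and a weight-$0$ fair edge to a player-1 node that can loop forever at weight $0$), with the same two arguments that player~2 can defeat any fixed credit by looping long enough before being forced by fairness to release control, while no single player-2 strategy works against all credits because fairness caps the total loss along any fair play. The only cosmetic difference is your extra edge $(p,q)$, which makes the explicit freeze strategy necessary but changes nothing substantive.
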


  \begin{proof}Recall that energy games are determined if and only if~\Cref{eq:energyDeterminacyPlayer2} is equivalent to~\Cref{eq:energyDeterminacyPlayer2-triv-equivalent}. We provide a counterexample to this equivalence by the game graph in \cref{fig:2-fair-energy-game}.
  Any fair strategy $\pi$ for player~2 takes the edge $(q,q')$ after finitely many steps, say after $c$ steps. Then, the unique strategy $\sigma$ for player~1, which takes the self-loop on $q'$, wins the play $\play_{\sigma}(G,q)$ with respect to the initial credit $c$.  Thus, player~$2$ has no winning strategy. 
  However, player~$1$ has no winning strategy either since for any initial credit $c$, there exists a player~$2$ strategy that ensures the energy of every play drops below $0$. Namely, any strategy that takes the self-loop of $q$ more than $c$ times achieves this.\qed
\end{proof}

\begin{wrapfigure}[6]{r}{0.3\textwidth}
  \centering
  \vspace{-1cm}
  \begin{tikzpicture}
    \node[player1] (q) {$q$};
    \node[player0] (qprime) [right of=q] {$q'$};
  
    \path
      (q) edge [loop above] node {-1} (q)
          edge [dashed] node {0} (qprime)
      (qprime) edge [loop above] node {0} (qprime);
  \end{tikzpicture}
 
  \caption{}~\label{fig:2-fair-energy-game}
  \end{wrapfigure}


This occurs as player~2 can delay taking his fair edges enough to violate any given $c$ while still satisfying fairness condition in the suffix. Hence, if player~$2$ can force a negative cycle, then he can use it to violate any initial credit. In contrast, player~$1$'s objective remains the same as in the regular energy game: she wins from a node iff she can prevent player~$2$ from forcing a negative cycle from that node. This leads to the following result, which is proven in \Cref{sec:app-proofs-2-fair-energy}.
\begin{theorem}\label{thm:2-fairEnergyWin1}
  Given a 2-fair energy game $(G,E_f,\energy)$, the set of winning nodes for player~$1$ is the same as the set of winning nodes for player~$1$ in the (regular) energy game $(G,\energy)$.
\end{theorem}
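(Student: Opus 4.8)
The plan is to prove the two inclusions separately, leveraging that regular energy games are positionally determined (hence $Q=\Win_1(G,\energy)\uplus\Win_2(G,\energy)$, as stated in the preliminaries). The inclusion $\Win_1(G,\energy)\subseteq\Win_1(G,E_f,\energy^f)$ is immediate: the two games share the same arena, so if a player~1 strategy $\sigma$ ensures $\energy_c$ from $q$ against every player~2 strategy in the regular game, then every $\sigma$-play lies in $\energy_c\subseteq\energy^f_c$, and thus $\sigma$ also wins the 2-fair energy game from $q$ with the same credit $c$.

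For the converse I would argue the contrapositive, showing every node of $\Win_2(G,\energy)$ lies outside $\Win_1(G,E_f,\energy^f)$. Negating the definition of $\Win_1$ for the fair game, $q\notin\Win_1(G,E_f,\energy^f)$ unfolds to $\forall c\in\N.\,\forall\sigma\in\Sigma_1.\,\exists\pi\in\Sigma_2.\,\play_{\sigma,\pi}(G,q)\notin\energy^f_c$. Since $\energy^f_c=\energy_c\cup\bigl(\plays(G)\setminus\plays^f(G)\bigr)$, the witness play must be simultaneously \emph{fair} and energy-violating for credit $c$. So fix $q\in\Win_2(G,\energy)$, a player~2 strategy $\pi_0$ witnessing $q\in\Win_2(G,\energy)$, an arbitrary credit $c$, and an arbitrary $\sigma\in\Sigma_1$. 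Because $\pi_0$ defeats every credit against every player~1 strategy, the energy along $\tau=\play_{\sigma,\pi_0}(G,q)$ is unbounded below, so there is a finite time $T$ with $c+w(\tau[0;T])<0$.

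The key idea is then to build the required $\pi$ by \emph{decoupling} the energy violation from the fairness obligation. Concretely, $\pi$ mimics $\pi_0$ until the play reaches length $T$ -- reproducing exactly the prefix $\tau[0;T]$ against $\sigma$ and hence pushing the energy strictly below $0$ -- and from then on switches to a round-robin strategy that, at each player~2 fair node, cyclically selects the fair outgoing edges. Two observations close the argument. First, the energy objective is \emph{not} prefix-independent: once $c+w(\tau[0;T])<0$, the constraint fails at index $T$, so $\play_{\sigma,\pi}(G,q)\notin\energy_c$ regardless of the suffix. Second, fairness depends only on $\inf\bigl(\play_{\sigma,\pi}(G,q)\bigr)$, and in a 2-fair arena all fair nodes belong to player~2 ($Q_f\subseteq Q_2$); hence the round-robin suffix guarantees that every fair edge whose source is visited infinitely often is taken infinitely often, making the play fair. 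Thus the witness play is fair and energy-violating, giving $q\notin\Win_1(G,E_f,\energy^f)$. Combining the two inclusions with regular-energy determinacy yields $\Win_1(G,E_f,\energy^f)=\Win_1(G,\energy)$.

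I expect the main obstacle to be justifying this decoupling cleanly: that player~2 can always complete the fairness requirement from step $T$ onward, no matter where $\sigma$ and the prefix have steered the play, and that doing so cannot retroactively repair the already-violated energy constraint. The first reduces to noting that round-robin on fair edges meets fairness at any infinitely-visited player~2 node (immediate, as player~2 fully controls his own successors), and the second is immediate from prefix-sensitivity of $\energy_c$. A point worth stating explicitly is that $\pi$ may depend on both $c$ and $\sigma$, which is precisely what the $\forall c\,\forall\sigma\,\exists\pi$ form of the complement allows; crucially, we do \emph{not} assert the swapped $\forall c\,\exists\pi\,\forall\sigma$ statement, which would contradict the non-determinacy of 2-fair energy games established earlier.
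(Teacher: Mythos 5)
Your proposal is correct and follows essentially the same route as the paper's proof: the easy inclusion via $\energy_c\subseteq\energy^f_c$, and for the converse, using positional determinacy of regular energy games to obtain a player~2 strategy that forces the energy below the given credit, then switching to a fair continuation (the paper says ``play fairly'', you concretize this as round-robin), exploiting that energy violations are irreparable while fairness is prefix-independent and player~2 owns all fair nodes. Your closing remark that the witness strategy may depend on $c$ and $\sigma$ without contradicting non-determinacy mirrors the paper's own caveat.
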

%
%
%
\begin{lemma}\label{thm:2-fairEnergMemoryless}
  In 2-fair energy games, player~1 has memoryless winning strategies. 
\end{lemma}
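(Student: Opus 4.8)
The plan is to derive this lemma as a direct consequence of \Cref{thm:2-fairEnergyWin1} together with the positional determinacy of regular energy games recalled in \Cref{sec:preliminaries}. The key structural observation is that, in a 2-fair energy game, player~1's objective is only \emph{weakened} by the fairness constraint: for every initial credit $c$ we have $\energy_c \subseteq \energy^f_c$, since $\energy^f_c = \energy_c \cup (\plays(G) \setminus \plays^f(G))$ additionally hands player~1 every \emph{unfair} play. Consequently, any strategy that guarantees the plain energy objective $\energy_c$ also guarantees the fair objective $\energy^f_c$.

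First, I would pin down the winning region. By \Cref{thm:2-fairEnergyWin1}, the set $X$ of player~1's winning nodes in the 2-fair energy game $(G,E_f,\energy)$ coincides with her winning region in the corresponding regular energy game $(G,\energy)$. Second, I would invoke positional determinacy of regular energy games: there is a single memoryless player~1 strategy $\sigma$ that, from every node of $X$, secures the plain energy objective against all player~2 strategies, and by the bound discussed in \Cref{subsec:discussion-short} a uniform initial credit $c = nW$ suffices, i.e.\ every $\sigma$-play from a node in $X$ lies in $\energy_{nW}$.

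Finally, I would transfer $\sigma$ unchanged to the 2-fair game. Since $\energy_{nW} \subseteq \energy^f_{nW}$, the same memoryless $\sigma$ with the same credit $nW$ forces every play from a node in $X$ into $\energy^f_{nW}$, regardless of player~2's responses. Thus $\sigma$ witnesses that player~1 wins from all of $X$ using only memory of size one, which establishes the lemma.

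I do not expect a genuine obstacle here, as all the combinatorial content is absorbed into \Cref{thm:2-fairEnergyWin1}, whose underlying characterization is that player~1 wins exactly when she can prevent player~2 from forcing a negative cycle -- a purely positional property that is insensitive to the fairness condition. The only point demanding a little care is the handling of the initial-credit quantifier: one must check that the \emph{same} finite credit that works in the regular game still works after the objective is weakened, which is immediate from the inclusion $\energy_{nW} \subseteq \energy^f_{nW}$ and avoids any need to re-quantify over $c \in \N$.
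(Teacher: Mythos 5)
Your proof is correct and follows essentially the same route as the paper, which also derives this lemma as a corollary of \Cref{thm:2-fairEnergyWin1} via the observation that a memoryless winning strategy of player~1 in the regular energy game (which exists by positional determinacy) remains winning in the 2-fair game because the fairness assumption only weakens player~1's objective. Your additional care about the initial-credit quantifier, using the inclusion $\energy_{nW} \subseteq \energy^f_{nW}$, is a sound and slightly more explicit version of the paper's remark that player~1 wins ``following the same strategy.''
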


We note that \Cref{thm:2-fairEnergMemoryless} is a corollary of~\Cref{thm:2-fairEnergyWin1}. 
As 2-fair energy games are not determined, $\Win_2 \neq Q \setminus \Win_1$.
We present a simple reduction to compute $\Win_2$ and resulting memory requirements of strategies as a corollary of its proof.

\begin{lemma}\label{lemma:2-fairEnergyWin2}
  Given a 2-fair energy game $(G,E_f,\energy)$, the set of winning nodes for player~$2$ is the same as the set of winning nodes for player~$2$ in the 2-fair mean-payoff game $(G,E_f, \meanpayoff^f_0)$.
\end{lemma}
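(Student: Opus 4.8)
The plan is to prove the two inclusions $\Win_2(G,E_f,\meanpayoff^f_0) \subseteq \Win_2(G,E_f,\energy)$ and $\Win_2(G,E_f,\energy) \subseteq \Win_2(G,E_f,\meanpayoff^f_0)$ separately. In both games player~2 wins from $q$ exactly when he has a strategy forcing every conforming play to be \emph{fair} and, in addition, to exhibit a ``bad'' quantitative behaviour for player~1 (mean-payoff strictly below $0$, respectively energy unbounded below). The whole argument rests on the elementary fact that, along a single play, $\meanVal(\tau) < 0$ forces the partial weights $w(\tau[0;i])$ to tend to $-\infty$, whereas the converse fails pointwise; reconciling this gap in the \emph{forcing} sense is the heart of the proof.

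For the first inclusion I would simply reuse player~2's winning strategy. Let $\pi$ be a (finite-memory, by \Cref{lemma:mean-payoff-2-fair-finite}) strategy winning the $2$-fair mean-payoff game from $q$, so that every $\pi$-play $\tau$ is fair with $\meanVal(\tau) < 0$. Then the partial weights $w(\tau[0;i])$ are unbounded below, hence for every initial credit $c$ there is a prefix with $c + w(\tau[0;i]) < 0$, so $\tau$ violates $\energy_c$ for all $c$. Since this holds against every player~1 strategy, $\pi$ is already a winning strategy for player~2 in the $2$-fair energy game, giving $q \in \Win_2(G,E_f,\energy)$ (and, incidentally, that player~2 needs only finite memory here).

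The converse inclusion is the difficult direction, and I would argue it by contraposition using determinacy of the mean-payoff game. Suppose $q \notin \Win_2(G,E_f,\meanpayoff^f_0)$. Because fair mean-payoff games are determined (via \Cref{thm:mpgadgetcorrectness}) and memoryless strategies suffice for the player not owning the fair nodes (\Cref{lemma:mean-payoff-2-fair-memoryless}), player~1 has a \emph{memoryless} strategy $\sigma^*$ winning the $2$-fair mean-payoff game from $q$; hence every fair play in $G_{\sigma^*}$ has $\meanVal \geq 0$. I would then show $q \notin \Win_2(G,E_f,\energy)$ directly from the definition: given an arbitrary player~2 strategy $\pi$, consider $\tau = \play_{\sigma^*,\pi}(G,q)$. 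If $\tau$ is unfair, player~1 already wins the energy game on $\tau$ for any credit. If $\tau$ is fair, the key claim is that $\tau$ cannot have $w(\tau[0;i])$ unbounded below; granting this, its partial weights admit a finite lower bound $-c_\pi$, so $\tau \in \energy_{c_\pi}$, and---crucially, the quantifier order of the unknown-initial-credit objective lets $c$ depend on $\pi$---player~1 defeats $\pi$. As $\pi$ was arbitrary, $q \notin \Win_2(G,E_f,\energy)$.

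It remains to establish the key claim, which is where the main obstacle lies: in $G_{\sigma^*}$ no fair play can have partial weights unbounded below. I would prove this by contradiction, turning such a play into a fair play of negative mean-payoff, contradicting that $\sigma^*$ wins the mean-payoff game. Given a fair play with unbounded-below partial weights, its tail eventually stays inside one strongly connected component $S$ of $G_{\sigma^*}$, and a cycle decomposition of the walk yields a negative simple cycle $C^-$ in $S$ traversed infinitely often. Since $S$ is strongly connected, I can fix a closed walk $R$ in $S$ that traverses \emph{every} fair edge emanating from fair nodes of $S$. The play that, from $q$, reaches $S$ and then forever alternates ``$C^-$ repeated $k$ times, followed by $R$ once'' lies in $G_{\sigma^*}$; it is fair because $R$ is taken infinitely often and covers all fairness obligations of the infinitely-visited nodes, while its mean-payoff tends to $w(C^-)/|C^-| < 0$ as $k \to \infty$ and is therefore negative for $k$ large. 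This contradicts $\sigma^*$ winning the mean-payoff game. The delicate points to get right are the cycle extraction (a negative \emph{simple} cycle recurring infinitely often, which follows since finitely many cycle types carry the total weight to $-\infty$) and verifying that the single closed walk $R$ simultaneously discharges the fairness obligations of all nodes the constructed play visits infinitely often.
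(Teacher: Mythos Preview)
Your proof is correct and follows essentially the same approach as the paper: the first inclusion is identical, and for the second you argue the contrapositive of the paper's direct argument, but both hinge on the same technical core---given a positional player~1 strategy, a fair play in $G_{\sigma}$ with unbounded-below energy yields (via a negative cycle combined with a walk covering the relevant fair edges) a fair play of negative mean-payoff. One small tightening: work with the set of nodes and edges visited infinitely often in $\tau$ (the ``tail'', which is strongly connected and closed under fair edges because $\tau$ is fair) rather than the ambient SCC $S$ of $G_{\sigma^*}$, since fair edges from nodes of $S$ outside the tail need not stay in $S$, so your closed walk $R$ may not exist as stated; this is exactly the delicate point you flag, and the paper handles it by taking the covering cycle $\mathbf{c}$ inside the tail directly.
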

 
\begin{proof}
  Take a player~2 strategy $\pi$ that wins from $q$ in the 2-fair mean-payoff game. Then, for any $\tau \in \plays_{\pi}(G, q)$, $\meanVal(\tau) = - \epsilon$ for some positive $\epsilon$. That is, there exists an infinite subsequence $I$ of $\mathbb{N}$ such that for all $i \in I$, $w(\tau[0;i]) < - i \cdot \epsilon$. Thus, the energy of $\tau$ drops unboundedly.
  For the opposite direction, due to the determinacy of mean-payoff games and~\Cref{lemma:mean-payoff-2-fair-memoryless}, it suffices to show that for each positional player~1 strategy $\sigma$, there exists a player~2 strategy $\pi$ such that
 $\tau = \play_{\sigma, \pi}(G, q)$ has negative limit average weight. Let $\sigma$ be a positional player~1 strategy and $\pi$ a player~2 strategy that wins $q$ in the 2-fair energy game. Consider $\tau = \pr \cdot \tail$, where $\tail$ is a tail of $\tau$ consisting of infinitely often visited nodes and edges of $\tau$. Since the energy of $\tail$ drops unboundedly, there exists a negative cycle $\mathbf{c}$ in $\tail$ that visits all nodes and (fair) edges of $\tail$. Let $\pi_{\sigma}$ be a player~2 strategy that mimics $\pi$ until $\mathbf{c}$ is seen, and then repeats $\mathbf{c}$. Then $\rho = \play_{\sigma, \pi_{\sigma}}(G, q)$ has a tail $\mathbf{c}^\omega$, and since $w(\mathbf{c})< 0$, we have $\meanVal(\rho) < 0$. 
\qed\end{proof}

 \begin{lemma}\label{lemma:2-fairEnergyFiniteMemory}
   Player 2 has finite memory winning strategies in 2-fair energy games.
\end{lemma}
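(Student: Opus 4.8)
The plan is to reuse the two results already established for player~2 in the 2-fair setting: that the player~2 winning regions of a 2-fair energy game and of the associated 2-fair mean-payoff game coincide (\Cref{lemma:2-fairEnergyWin2}), and that in 2-fair mean-payoff games player~2 wins from every node of $\Win_2$ with a finite-memory strategy (\Cref{lemma:mean-payoff-2-fair-finite}). The key conceptual observation is that no new strategy needs to be constructed: a strategy that witnesses a player~2 win in the mean-payoff game is \emph{already} a winning strategy in the energy game, so finite memory carries over for free.

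Concretely, I would fix a node $q$ in the player~2 winning region of the 2-fair energy game $(G, E_f, \energy)$. By \Cref{lemma:2-fairEnergyWin2} this same $q$ lies in $\Win_2$ of the 2-fair mean-payoff game $(G, E_f, \meanpayoff^f_0)$, and by \Cref{lemma:mean-payoff-2-fair-finite} player~2 has a finite-memory strategy $\pi$ winning from $q$ there. Unpacking the 2-fair mean-payoff objective, winning means that every play $\tau \in \plays_\pi(G, q)$ is fair and satisfies $\meanVal(\tau) < 0$.

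The remaining step is to verify that this same $\pi$ also wins in the energy game, which is exactly the forward direction already argued in the proof of \Cref{lemma:2-fairEnergyWin2}: a strictly negative limit-average weight forces the accumulated weight $w(\tau[0;i])$ to dip below $-c$ for every fixed credit $c$, so $\tau \notin \energy_c$ for all $c$. Combined with the fairness of $\tau$, this yields $\tau \notin \energy^f_c$ for every $c$, i.e.\ $\pi$ witnesses a player~2 win in the 2-fair energy game from $q$. Since $\pi$ has finite memory, the lemma follows.

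I do not expect a genuine obstacle here, as essentially all the analytic content is inherited from the cited lemmas. The one point that requires care is the quantifier order in the player~2 energy-game winning condition: one must confirm that the single finite-memory strategy $\pi$ defeats all initial credits \emph{simultaneously}. This holds precisely because the mean-payoff win delivers a strictly negative mean value---not merely a non-positive one---which guarantees an energy drop that is unbounded along the play and hence uniform over all choices of $c$.
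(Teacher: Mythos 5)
Your proposal is correct and follows the paper's intended route: the paper states this lemma as a corollary of the proof of \Cref{lemma:2-fairEnergyWin2}, whose forward direction is exactly your argument that a player~2 strategy winning the 2-fair mean-payoff game (taken finite-memory via \Cref{lemma:mean-payoff-2-fair-finite}) forces the energy of every conforming fair play to drop unboundedly, hence defeats all initial credits simultaneously. Your explicit attention to the quantifier over $c$ is exactly the right point to check, and it is handled correctly.
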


\subsection{1-fair Energy Games}\label{sec:1-fair-energy}
In this section, we prove that 1-fair energy games are determined and using that introduce a gadget to reduce 1-fair energy games to regular energy games.

\noindent The determinacy of 1-fair energy games does not follow from prior work on $\omega$-regular energy games~\cite{EnergyMuCalculus2020} as the definition of $\Win_2$ there uses~\Cref{eq:energyDeterminacyPlayer2-triv-equivalent} instead of~\Cref{eq:energyDeterminacyPlayer2}.
We establish determinacy by showing that, as in energy games, one can restrict the quantification over $c \in \mathbb{N}$ to a finite set (namely to $c \in [0, nW]$) in~\Cref{eq:energyDeterminacyPlayer2-triv-equivalent}.
That is, if there exists a player~2 strategy $\pi$ that satisfies $\forall \sigma \in \Sigma_1, \play_{\sigma, \pi}(G, q) \not \in \energy^f_{nW}$, then there exists a winning player~2 strategy $\pi^{\ext}$ (winning against all $c$) as the energy level of plays conforming to this strategy $\pi^{\ext}$ drops unboundedly. It then follows from the discussion in~\Cref{subsec:discussion-short} that 1-fair energy games are determined. The proof of~\Cref{thm:determinacy1-fairenergy} can be found in~\Cref{sec:app-proofs-P1-fair-energy}.

\begin{theorem}\label{thm:determinacy1-fairenergy}
1-fair energy games are determined.
\end{theorem}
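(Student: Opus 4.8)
The plan is to follow the recipe already laid out in \Cref{subsec:discussion-short}. Since $\energy^f_c$ is Borel, the fixed-credit game is determined, which yields the characterisation of $Q\setminus\Win_1$ in \Cref{eq:energyDeterminacyPlayer2-triv-equivalent}; determinacy is then exactly the statement that \Cref{eq:energyDeterminacyPlayer2-triv-equivalent} implies \Cref{eq:energyDeterminacyPlayer2}, the converse being immediate (a single strategy beating every credit beats each credit separately). Because $\energy^f_c\subseteq\energy^f_{c'}$ whenever $c\le c'$, a strategy $\pi$ forcing $\play_{\sigma,\pi}(G,q)\notin\energy^f_{nW}$ against every $\sigma$ already forces $\play_{\sigma,\pi}(G,q)\notin\energy^f_c$ for all $c\le nW$, so only credits above $nW$ remain open. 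Hence I would reduce the whole theorem to the escalation lemma: \emph{if player~2 has a strategy $\pi$ with $\play_{\sigma,\pi}(G,q)\notin\energy^f_{nW}$ for all $\sigma\in\Sigma_1$, then player~2 has one strategy $\pi^{\ext}$ with $\play_{\sigma,\pi^{\ext}}(G,q)\notin\energy^f_c$ for all $c$ and all $\sigma$.} Instantiating \Cref{eq:energyDeterminacyPlayer2-triv-equivalent} at $c=nW$ supplies the hypothesis, and the conclusion is precisely \Cref{eq:energyDeterminacyPlayer2}, i.e.\ $q\in\Win_2$.

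To build $\pi^{\ext}$ I would play $\pi$ in \emph{phases}: a phase re-initialises $\pi$'s memory at the current node and records the energy level there, and it ends as soon as the accumulated weight has dropped by more than $nW$ since the phase began, whereupon the next phase starts at the node then reached. For a play $\tau=\play_{\sigma,\pi^{\ext}}(G,q)$ there are two cases. If infinitely many phases occur, each contributes a drop of more than $nW$, so $\liminf_{i\to\infty} w(\tau[0;i])=-\infty$ and hence $\tau\notin\energy_c\supseteq\energy^f_c$ for every $c$. If only finitely many phases occur, the final phase is an infinite suffix conforming to a single re-initialised copy of $\pi$ along which the energy never drops by more than $nW$ relative to the phase's start; applying the hypothesis to this suffix, the energy condition cannot be the cause of its defeat, so it must be \emph{unfair}. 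Since fairness depends only on the infinitely-often visited nodes and edges, unfairness of a suffix makes $\tau$ unfair, so $\tau\notin\plays^f(G)$ and therefore $\tau\notin\energy^f_c=\energy_c\cap\plays^f(G)$ for every $c$. In both cases $\tau\notin\energy^f_c$ for all $c$, as required.

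The step that needs care — and which I expect to be the main obstacle — is that the hypothesis on $\pi$ guarantees a drop (or unfairness) only \emph{from $q$}, whereas the phase argument invokes it at each node where a new phase is re-started. Because the energy objective is not prefix-independent while fairness is a tail property, a phase may end at a node from which player~2 can no longer force a further $nW$-drop; at such a node player~1 could keep the energy bounded from below while staying fair and would then win with a large enough initial credit, breaking the escalation. Resolving this requires ensuring the phase-start nodes remain in the region where player~2 keeps his defeating power. I would first show that $Q\setminus\Win_1$ is a \emph{trap} player~2 can stay inside: using that unknown-credit winning for player~1 is monotone under the bounded weights (a player-1 node with some successor in $\Win_1$, and a player-2 node with all successors in $\Win_1$, is itself in $\Win_1$, since one extra edge costs at most $W$ of credit and a finite prefix does not affect fairness), every player-1 node of $Q\setminus\Win_1$ has all its successors in $Q\setminus\Win_1$ and every player-2 node has at least one, so player~2 can confine the phased play to $Q\setminus\Win_1$.

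The remaining, genuinely quantitative ingredient is the finite bound itself, which is what makes the trap coincide with the region $\{q:\ \exists\pi.\ \forall\sigma.\ \play_{\sigma,\pi}(G,q)\notin\energy^f_{nW}\}$ and lets me re-initialise, via a region-uniform finite-memory witness $\pi$, from every phase-start node. I would establish it by a cycle/pumping analysis on the product of $G$ with such a finite-memory witness: a fair play whose energy stays bounded below must, within its eventual strongly connected component, discharge its fairness obligations along cycles of non-negative total weight, and these can be rescheduled so that the energy never falls below one simple cycle's worth, i.e.\ below $-nW$. Thus inability to defeat credit $nW$ is genuinely inability to defeat some credit, justifying the restriction of $\forall c$ to $[0,nW]$. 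This is the analogue, under fairness, of the classical "$nW$ credit suffices" bound for energy games, and reconciling that prefix-sensitive bound with the tail nature of the fairness condition is the crux of the proof.
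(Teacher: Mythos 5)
Your top-level reduction is the same as the paper's: restrict the credit quantifier in \Cref{eq:energyDeterminacyPlayer2-triv-equivalent} to $[0,nW]$ and prove an escalation lemma turning one strategy that defeats credit $nW$ into one strategy that defeats every credit. Your escalation construction, however, is genuinely different from the paper's: the paper works with a single strategy tree rooted at $q$, extracts from every fair branch a non-positive prefix followed by a negative simple cycle (its ``signature''), pumps those cycles, and then handles plays that hop between different pumped cycles via a shuffle-equivalence argument; it never needs to reason about the winning status of other nodes. Your phase-restart scheme would, if completed, be arguably more modular — but as written it has a genuine gap exactly at the point you flag.

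The gap is in how you license re-initialisation. You propose to show that the defeat-$nW$ region coincides with $Q\setminus\Win_1$. The inclusion you actually need for restarting, $Q\setminus\Win_1\subseteq\{q:\exists\pi.\forall\sigma.\,\play_{\sigma,\pi}(G,q)\notin\energy^f_{nW}\}$, is immediate from fixed-credit Borel determinacy; the converse inclusion is equivalent (again via fixed-credit determinacy) to the claim that \emph{whenever player~1 wins she wins with initial credit $nW$}. That claim is at least as hard as the theorem itself — player~1's winning strategies here are not positional and must traverse negative simple cycles of $G$ to discharge fairness, so the standard ``all cycles in $G_\sigma$ are non-negative, hence every prefix is $\ge-(n-1)W$'' argument breaks — and your one-sentence rescheduling sketch does not address the real difficulty of turning a per-play reordering of cycles into a uniform strategy against all of player~2's behaviours. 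Fortunately you do not need that direction at all. What you do need, and what your trap lemma alone does not give, is that from each $p\in Q\setminus\Win_1$ player~2 can defeat credit $nW$ \emph{while remaining inside} $Q\setminus\Win_1$: the defeating strategy supplied by fixed-credit determinacy may deliberately steer into $\Win_1$ once the drop below $-nW$ is achieved, stranding your next phase at a node where player~2 has no power. This is fixable — restrict to the subarena on $Q\setminus\Win_1$ (well-defined by your trap lemma), apply fixed-credit determinacy there, and rule out the alternative by noting that a player-1 strategy ensuring $\energy^f_{nW}$ inside the subarena, combined with switching to a $\Win_1$-strategy the moment player~2 exits, would win from $p$ with credit $nW+W+\max_{q'\in\Win_1}c_{q'}$, contradicting $p\notin\Win_1$ — but this step must be made explicit, and it replaces, rather than follows from, the coincidence of the two regions that your write-up leans on.
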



\noindent Next we will show that the gadget in~\Cref{fig:1-fair-energy-gadget} turns 1-fair energy games into regular energy games. 
The difference of this gadget from~\Cref{fig:Player1-gadget} mostly lies in the different treatment of the zero-weight-cycles. We first discuss briefly why such cycles are important, and then introduce this gadget.

\smallskip\noindent \textbf{The importance of 0-cycles in 1-fair energy games.}
It is known that the winning regions of a player in energy games and mean-payoff games (with objective $\meanpayoff_0$) coincide~\cite{FasterAlgorithmsforMeanPayoffBCDGR11}.
%
%
This is not true for the 1-fair variants of these games, as player~1 may need infinite memory to win in 1-fair mean-payoff games (see \Cref{sec:example,subsec:gadget1-correctness}), but not in 1-fair energy games.
However, the equivalence of winning regions still holds in fair games without $0$-cycles.
Intuitively, if there are no $0$-cycles, for a winning player~1 (player~2) strategy $\sigma$ ($\pi$) in the fair energy game, the energy of each play $\tau$ that conforms with $\sigma$ ($\pi$) grows (drops) unboundedly, which makes $\tau$ also winning in the fair mean-payoff game. 
For instance, in a 1-fair energy game with a player~1 node with a single fair edge with $0$ weight is winning for player~1; but if we add another fair edge to this node with weight $-1$, it becomes losing for player~1. However, when we consider the graph under the mean-payoff objective, both nodes remain in $\Win_1$.

\begin{figure}
   \raggedright
   \centering 
   \scalebox{0.8}{
    \begin{tikzpicture}[auto,node distance=1.5cm,semithick]
      \node[draw, circle, minimum size=1cm, inner sep=0pt, font=\large] (q) {$q$}; 
      \node[rectangle,draw, minimum width=0.85cm, minimum height=0.85cm, inner sep=0pt, font=\large] (ql) [below left=4em of q]{$q_l$};
      \node[rectangle,draw, minimum width=0.85cm, minimum height=0.85cm, inner sep=0pt, font=\large] (qr) [right=9em of ql]{$q_r$};
      \node[draw, rectangle, minimum width=1cm, minimum height=0.85cm,inner sep=0pt, font=\large] (qnw) [below left= 1.5em of ql]{$q_\text{\scriptsize nW+1}$}; 
      \node[draw, circle, minimum size=1cm, inner sep=0pt, font=\large](qval) [below right=1.5em of ql] {$q_\text{val}$}; 
     \node[draw, circle, minimum size=1cm, inner sep=0pt, font=\large](qpos) [below left=1.5em of qval] {$q_\text{pos}$}; 
     \node[draw, rectangle, minimum size=1cm, inner sep=0pt, font=\large](q0) [right=4em of qpos] {$q_0$}; 
     \node[draw, circle, minimum size=1cm, inner sep=0pt, font=\large](qesc) [right=1.45em of q0] {$q_{\,\texttt{ESC}}$};
     \node (efq) [below=7em of qnw,yshift=0.8cm] {$E_f(q)$};
     \node (eqm) [right=1em of efq] {$E(q)$};
     \node (efqm) [right=3.6em of eqm] {$E_f(q)$};
     \node (eqr) [right=1.45em of efqm] {$E(q)$};
 
      \draw[->] (q) -- node[above left=0.1cm] {$0$} (ql);
      \draw[->] (q) -- node[above right=0.1cm] {$0$}(qr);
      \draw[->] (ql) -- node[above left=0.1cm] {$nW+1$} (qnw);
      \draw[->] (ql) -- node[above right=0.1cm] {$0$} (qval);
      \draw[->] (qval) -- node[above left=0.1cm] {$- \varepsilon$} (qpos);
      \draw[->] (qval) -- node[above right=0.1cm] {$0$} (q0);
      \draw[->] (qr) -- node[right] {$-n^2W-n$} (qesc);
      \draw[->] (qnw) -- node[right] {$w$} (efq);
      \draw[->] (qesc) -- node[right] {$w$} (eqr);
       \draw[->] (q0) -- node[right] {$w$} (efqm);
       \draw[->] (qpos) -- node[right] {$w$} (eqm);
   \end{tikzpicture}
    }
    \caption{1-fair energy gadget where $\epsilon = 1/(n+1)$}~ \label{fig:1-fair-energy-gadget}
    \vskip -0.6cm
  \end{figure}
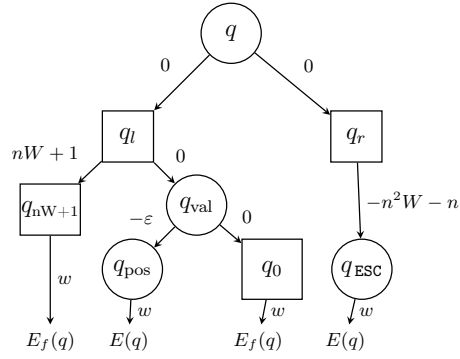
Therefore, a 1-fair energy gadget should distinguish between two kinds of nodes, A and B, both of which achieve mean-payoff value $0$ but
node A (in $\Win_1$) achieves value $0$ due to a strategy that visits only $0$-cycles while node B (in $\Win_2$) achieves it due to an infinite memory strategy that sees a $0$-cycle more and more often. 
 The middle branch of the gadget in~\Cref{fig:1-fair-energy-gadget} serves this purpose. The gadget matches~\Cref{fig:Player1-gadget} on the leftmost ($\fair{}$) and rightmost ($\escape{}$) branches, which still serve as the escape branches for player~1 and player~2, as in~\Cref{subsec:gadget1-correctness}.
The middle branch, called the \emph{value branch} $\valu{q}$ distinguishes nodes of type A and B. Intuitively, if both players do not have escape strategies from $q$, then the value branch is taken. If player~1 can visit a positive cycle from $q$, then she wins from $q$ since intuitively she can take this cycle enough times before visiting its fair edges. In this case, she chooses the left successor $q_\texttt{pos}$ of $q_\texttt{val}$, which is called the \emph{positive value branch} $\pos{q}$ and visits the weight $- \epsilon$. The game is won by player~1 despite the weight $-\epsilon$, due to the positive cycle. On the other hand, if player~1 is winning from $q$ but not by visiting a positive cycle, then she is winning as all the live edges lead to $0$-cycles (Node A). In this case, player~1 takes the right successor $q_\texttt{0}$ of $q_\texttt{val}$, which is called the \emph{zero value branch} $\zero{q}$, and wins. If neither of these cases holds for a node $q$ (Node B), player~2 wins from both successors of the value branch.

\begin{theorem}\label{thm:energygadgetcorrectness}
  Let $(G, E_f, \meanpayoff^f_0)$ be a 1-fair energy game where $G = (Q, E, w)$, $w : E \to [-W, W]$, and $|Q|=n$. Then there exists an energy game $G'= (Q',E',w')$ where $Q' \supseteq Q$, $|Q'|\leq 8n$ and $w' : E' \to [-W', W']$ with $W' = (n^2W+n)(n+1)$ such that
  $\Win_i(G) = \Win_i(G') \cap Q$ for $i \in \{1, 2\}$.
\end{theorem}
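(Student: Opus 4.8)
The plan is to exploit that, by \Cref{thm:determinacy1-fairenergy}, the 1-fair energy game $G$ is determined, while the constructed $G'$ is a regular energy game and hence positionally determined. It therefore suffices to prove the two inclusions $\Win_1(G')\cap Q\subseteq\Win_1(G)$ and $\Win_1(G)\subseteq\Win_1(G')\cap Q$: these give the equality for $i=1$, and the equality for $i=2$ then drops out of the determinacy of both games. Before starting I would fix the integer rescaling implicit in \Cref{fig:1-fair-energy-gadget}: since $\epsilon=1/(n+1)$ is not integral, I multiply every gadget weight and every game weight $w$ by $(n+1)$, turning the value-branch penalty $-\epsilon$ into $-1$, the fair branch into $(nW+1)(n+1)$ and the escape branch into $-(n^2W+n)(n+1)$; this explains the bound $W'=(n^2W+n)(n+1)$ and does not change winners. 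Throughout I use the positional characterisation of energy: a positional strategy wins a regular energy game from a node iff its restricted arena ($G'_{\sigma'}$ resp.\ $G'_{\pi'}$) has no reachable negative cycle.

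For $\Win_1(G')\cap Q\subseteq\Win_1(G)$ I take a positional $\sigma'$ winning in $G'$, so every reachable cycle of $G'_{\sigma'}$ is non-negative, and classify each fair node $q$ by $\sigma'$'s choices into \emph{escape} ($\sigma'(q)=q_r$), \emph{positive-value} ($\sigma'(q_{\mathrm{val}})=q_{\mathrm{pos}}$) and \emph{zero-value} ($\sigma'(q_{\mathrm{val}})=q_0$) nodes. Escape nodes are visited only finitely often along any resulting play, by the same cycle-weight counting as in \Cref{subsec:gadget1-correctness}: a single $\escape{}$ branch cannot be offset by the at most $n-1$ available $\fair{}$ branches. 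For zero-value nodes I keep the extension argument of the mean-payoff proof, routing fair-edge discharges through $\zero{q}$, which carries weight $0$, so non-negativity of cycles in $G'_{\sigma'}$ transfers verbatim to the projected $G$-cycles and keeps the energy bounded below. The genuinely new case is a positive-value node: here the calibrated penalty forces the underlying $G$-cycle through the preferred successor to be \emph{strictly} positive, since a non-negative cycle of $G'_{\sigma'}$ using $1\le p\le n$ copies of $\pos{}$ satisfies $(n+1)w_G-p\ge 0$, whence $w_G\ge 1$. I would then build directly in $G$ a \emph{finite}-memory fair strategy that repeats this strictly positive cycle enough times to accumulate a reserve covering a bounded detour discharging the remaining fair edges, and then repeats; strict positivity guarantees that a bounded number of repetitions suffices, so the energy stays bounded below while every fair edge of every infinitely visited fair node is taken infinitely often. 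Combining the three node types shows the resulting $\sigma$ is fair and energy-maintaining, i.e.\ wins the objective $\energy^f$ in $G$.

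For the reverse inclusion $\Win_1(G)\subseteq\Win_1(G')\cap Q$ I start from a winning strategy of player~1 in $G$ and read off, at each fair node, whether she maintains non-negative energy by forcing a strictly positive cycle or only by $0$-cycles (or abandons the node); I use this to define a positional $\sigma'$ in $G'$, setting $\sigma'(q)=q_r$ when $q$ is abandoned and $\sigma'(q)=q_l$ otherwise, and choosing $\pos{q}$ at $q_{\mathrm{val}}$ exactly in the strictly-positive case and $\zero{q}$ in the $0$-cycle case. The verification that $G'_{\sigma'}$ has no negative cycle again turns on the calibration: a $\pos{}$-cycle satisfies $w_{G'}=(n+1)w_G-p\ge (n+1)-n=1>0$ because $w_G\ge 1$; a $\zero{}$-cycle satisfies $w_{G'}=(n+1)w_G\ge 0$ because the corresponding discharge cycle in $G$ is non-negative; and a cycle through $\fair{}$ only gains the positive bonus $(nW+1)(n+1)$, while escape cycles are irrelevant as the node is abandoned. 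Here player~2's freedom at $q_l$ (taking $\fair{}$ to a fair successor) only adds this bonus or lands in a region where player~1's $G$-strategy keeps winning, so no negative cycle arises; hence $q\in\Win_1(G')$.

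The main obstacle I anticipate is the positive-value node. Unlike the mean-payoff proof, the $G'$-energy bound cannot simply be transported to $G$, because the fairness-forced discharges of non-preferred fair successors are routed through the $\fair{}$ branch, whose $+(nW+1)(n+1)$ bonus inflates the $G'$-energy and hides possibly negative $G$-cycles; this is exactly what forces the separate, direct finite-memory positive-cycle argument in the first inclusion. Dually, the delicate point in the reverse inclusion (equivalently, in ruling out $\Win_2(G')\cap Q\setminus\Win_2(G)$) is to exclude player~1 surviving via a $G$-cycle of weight exactly $0$ routed through $\pos{}$: this is precisely what $\epsilon=1/(n+1)$ prevents, since such a cycle becomes strictly negative in $G'$ and so cannot lie in a winning $G'_{\sigma'}$. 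Getting these constants to interlock --- the unit separation between $0$- and strictly-positive cycles together with the $n(nW+1)$-versus-$(n^2W+n)$ escape/fair hierarchy --- and threading them through a case analysis at every fair node, mirroring but refining \Cref{claim:1-fair-absence}, is where essentially all the work lies.
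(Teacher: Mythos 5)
Your gadget reading, the integer rescaling by $n+1$, the three-way classification of fair nodes, and the observation that a non-negative $G'_{\sigma'}$-cycle through $\pos{}$ projects to a $G$-cycle of weight $\ge 1$ all match the paper. But two steps that carry essentially all the weight are missing or misdirected. First, in the inclusion $\Win_1(G')\cap Q\subseteq\Win_1(G)$, your treatment of positive-value nodes assumes the fairness discharge is ``a bounded detour'' that a fixed reserve can absorb. It is not: between two discharges of $p$'s fair edge, the play passes through other fair nodes, each of which is itself repeating its own positive cycle and interleaving its own discharges, and those discharges recursively contain further ones. The infix between two visits to $\fair{p}$ is therefore a long nested structure whose minimal energy is not obviously bounded. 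The paper's proof spends its main effort exactly here (\cref{claim:boundedness-of-infix}): an induction on the number of distinct fair branches that repeat inside the infix, with the round length $n^3W$ chosen so that at level $k$ the accumulated positive-cycle weight dominates the $-(n-k)n^2W$ dip contributed by the inner levels. Without this (or an equivalent quantitative argument), ``strict positivity guarantees that a bounded number of repetitions suffices'' is an assertion, not a proof.

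Second, you attack the reverse inclusion from the wrong side. Extracting a \emph{positional} $\sigma'$ in $G'$ from an arbitrary winning player~1 strategy in $G$ requires assigning to each fair node a single classification (escape / positive / zero) and a single preferred successor, but a winning strategy in the fair energy game may treat the same node differently depending on history and may have no canonical positional residue; you give no construction for this extraction, and it is precisely the kind of positionality transfer that is false for player~1 in these games (player~1 genuinely needs memory). The paper instead proves $\Win_2(G')\cap Q\subseteq\Win_2(G)$, which combined with determinacy of $G$ (\cref{thm:determinacy1-fairenergy}) and of $G'$ yields both equalities. That direction is easy to set up because all fair nodes belong to player~1, so a positional winning $\pi'$ in $G'$ restricts verbatim to a positional $\pi$ in $G$; the content is then \cref{claim:1-fair-energy-absence} (a fair play's tail contains no $\fair{}$ branches in $G'_{\pi'}$) together with the $\epsilon=1/(n+1)$ calibration showing every projected simple cycle in the tail is non-positive and some strictly negative one recurs, so the energy drops unboundedly. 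I would redo your second half along these lines rather than trying to positionalise player~1.
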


We obtain the game $G'$ in \Cref{thm:energygadgetcorrectness} by replacing each fair node in the 1-fair energy game $G$ via the gadget in~\Cref{fig:1-fair-energy-gadget}, as in~\Cref{subsec:gadget1-correctness}. We note that the size of the game given in
\Cref{thm:energygadgetcorrectness} is w.r.t. the equivalent energy game on integer weights, obtained by multiplying every weight in the game with $1 / \epsilon = n+1$. 
The state-of-the-art algorithm for energy games~\cite{FasterAlgorithmsforMeanPayoffBCDGR11} and~\Cref{thm:energygadgetcorrectness} imply that 1-fair energy games can be solved in time $O(n^4mw)$. Furthermore, the proof reveals that player~1 has finite memory strategies (a local memory of $log(n^3W + n)$ each node) and player~2 has positional strategies. 

Unlike fair mean-payoff games, the determinacy of 1-fair energy games \emph{does not} follow from~\Cref{thm:energygadgetcorrectness}. Instead, the proof hinges on the determinacy result itself, as detailed in~\Cref{sec:app-proofs-P1-fair-energy}. Below, we outline the key ideas.

Recall that in the first direction of the proof of~\Cref{thm:mpgadgetcorrectness}, we derive a player~1 strategy $\sigma$ in $G$ from a positional winning strategy $\sigma'$ in $G'$, reasoning about the tails $\tail$ of plays $\tau \in \plays_{\sigma}(G)$. We use the fact that in a mean-payoff (and similarly, in an energy) game, all cycles in $G'_{\sigma'}$ are non-negative. This gives us that the cycles $\mathbf{c'}$ using only simulation branches in $G'_{\sigma'}$ have projections $\mathbf{c} = \mathbf{c'}|_G$ with non-negative weight. In the 1-fair mean-payoff case, with infinite memory we let $\sigma$ take these cycles increasingly often to ensure that the projection play $\tau = \tau'|_Q$ has non-negative mean-payoff. 

Now, in 1-fair energy games, the cycles $\mathbf{c'}$ in $G'_{\sigma'}$ that visit a positive-value branch have \emph{positive} weight projections $\mathbf{c} = \mathbf{c'}|_Q$, due to the $-\epsilon$ weights. We refer to the choice of the positive-value branch, $\sigma'(q_{\texttt{pos}})$ as the `preferred successor' of $q$ that induces positive weight cycles in $G$.
Thus, player~1 can adopt a strategy $\sigma$ in $G$ that repeatedly selects the preferred successors at fair nodes to accumulate sufficient positive weight before traversing the fair edges to satisfy fairness. 
 As a result, the weight of every tail $\tail$ of all plays $\tau \in \plays_{\sigma}(G)$ 
 is monotonically increasing, and bounded below by some $-d$. However, unlike mean-payoff games, energy games require reasoning about prefixes of plays. Specifically, we need to ensure that the weight of every prefix of each play conforming to $\sigma$ is bounded below by $-c$ for some $c \in \N$. This is challenging, as the prefix lengths are unbounded, making the determinacy result crucial.

 As these games are determined, player~2 wins from a node $q$ if and only if it has a winning strategy $\pi$ that wins against all initial credits $c$ and all player~1 strategies (\Cref{eq:energyDeterminacyPlayer2}).
Therefore, the weight of all plays in $\plays_\pi(G)$ drops unboundedly. 
Dually, a player~1 strategy $\lambda$ is winning if all plays in $\plays_\lambda(G)$ are fair with weights bounded below (not necessarily by the same $c$).
Then for all $\tau = \uu \cdot \tail$, the length of the finite prefix $\uu$, together with the lower bound $-d$ of the tail $\tail$ naturally yields a lower bound of $-d -W\cdot |\uu|$ for the weight of $\tau$.

The proof of the second direction is similar to the case of 1-fair mean-payoff, where we construct a positional winning player 2 strategy $\pi$ in $G$ based on a positional winning strategy $\pi'$ in $G'$.  
This shows that memoryless strategies suffice for player 2. We give an alternative, simpler proof of this in~\Cref{sec:app-proofs-P1-fair-energy}. 
\begin{theorem}\label{thm:1-fairenergypositionality}
  In 1-fair energy games, player~1 has finite memory and player~2 has memoryless winning strategies. 
\end{theorem}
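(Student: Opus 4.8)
The plan is to obtain both memory bounds as consequences of the gadget reduction in \Cref{thm:energygadgetcorrectness} together with the positional determinacy of regular energy games~\cite{CAHS03}. Since regular energy games are positionally determined, in the gadget game $G'$ both players have positional winning strategies on their respective winning regions, and since $\Win_i(G) = \Win_i(G') \cap Q$, it suffices to transfer such a positional strategy in $G'$ back to a strategy in $G$ of the claimed memory, reusing the extension/projection machinery already developed for \Cref{subsec:gadget1-correctness}.

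For player~2 I would exploit that in a 1-fair game all fair nodes belong to player~1 ($Q_f \subseteq Q_1$), so the gadget construction leaves every original player~2 node of $G$ untouched. Given a positional winning strategy $\pi'$ of player~2 in $G'$, I set $\pi(q) = \pi'(q)$ for all $q \in Q_2$; this is manifestly memoryless. Correctness is the analogue of the second direction of \Cref{subsec:gadget1-correctness}: for $\tau \in \plays_\pi(G,q)$ with $q \in \Win_2(G)$ I build its extension $\tau'$ in $G'_{\pi'}$, note that $\tau'$ conforms with $\pi'$ and hence has unboundedly dropping energy, and transfer this drop back to $\tau = \tau'|_Q$, using that the fair and escape gadget branches contribute only a bounded one-time correction while the simulation and zero branches carry the original weights.

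For player~1 the construction is more delicate, since player~1 owns the fair nodes. From a positional winning strategy $\sigma'$ in $G'$ I define $\sigma$ in $G$ by case distinction on the branch $\sigma'$ selects at each fair node $q$: if $\sigma'$ takes the escape branch, $\sigma$ plays positionally to the corresponding escape successor; if it takes the zero-value branch $\zero{q}$, player~1 wins through $0$-cycles and $\sigma$ again plays positionally; and if it takes the positive-value branch $\pos{q}$, then $\sigma$ repeats the preferred (positive-cycle-inducing) successor a \emph{fixed, bounded} number of times before taking a fair edge, cycling through all fair successors to preserve fairness. The crucial point --- and the reason finite memory suffices here whereas 1-fair mean-payoff requires infinite memory --- is that cycles through a positive-value branch have strictly positive projected weight (the $-\epsilon$ edge is dominated by the positive cycle), so a uniformly bounded number of repetitions accumulates enough surplus to absorb one round of fairness-edge cost while keeping the running energy bounded below. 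This yields a local counter at each fair node of magnitude $O(n^3W)$, i.e.\ $\log(n^3W+n)$ bits per node.

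The main obstacle is exactly this prefix-sensitivity: unlike the mean-payoff setting, an energy strategy must keep the running energy above $-c$ for a \emph{fixed} $c$ along every finite prefix, not merely in the limit, and prefix lengths are unbounded. The argument I would give is that, because each positive-value round contributes a fixed positive surplus, the energy along any $\sigma$-play is monotone non-decreasing up to a bounded per-round fluctuation, hence bounded below by some constant $-d$ on every tail $\tail$; combining $-d$ with the worst-case weight $-W\cdot|\uu|$ of the finite prefix $\uu$ preceding the tail gives a global lower bound and thus a valid initial credit. Here I would invoke \Cref{thm:determinacy1-fairenergy} to guarantee that a winning strategy exists against \emph{all} initial credits, which is precisely what legitimizes restricting player~1 to the bounded counter described above.
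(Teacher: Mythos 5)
Your overall route is the same as the paper's: both memory bounds are read off from the proof of \Cref{thm:energygadgetcorrectness}, with player~2's positional strategy obtained by direct transfer (all fair nodes lie in $Q_1$, so player~2 nodes are untouched by the gadgets) and player~1's strategy built by case distinction on which gadget branch the positional strategy $\sigma'$ selects, using a bounded per-node counter in the positive-value case. Two points, however, need repair. First, in the zero-value case you let $\sigma$ ``play positionally'' at the fair node $q$; this is inconsistent with fairness whenever $|E_f(q)|>1$, since a positional choice fixes one fair successor and starves the others. The paper instead has $\sigma$ cycle through $E_f(q)$ in a fixed order, which costs a $\log|E_f(q)|$ pointer per node --- the finite-memory conclusion survives, but the strategy is not positional there.

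Second, and more substantively, the sentence ``the energy along any $\sigma$-play is monotone non-decreasing up to a bounded per-round fluctuation'' asserts precisely the hard step without an argument. A single round at a fair node $p$ (the $n^3W$ visits to $p$ between two of its fair edges) can contain arbitrarily long sub-segments in which \emph{other} fair nodes complete arbitrarily many of their own rounds, so it is not immediate that the dip within one round of $p$ is bounded by a play-independent constant. The paper closes this gap with \Cref{claim:boundedness-of-infix}, an induction on the number of fair nodes whose fair branches repeat inside an infix, which yields that each simple-cycle segment over $\fair{p}$ has positive projected weight while its prefixes never drop below $-n^3W$. Some argument of this nested kind (or an equivalent potential-function bound) is needed; as written, your proposal's justification of the prefix lower bound does not go through. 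The paper also records an alternative, determinacy-based proof of the player~2 half (if every positional player~2 choice at a node leads into $\Win_1$, then $\max(c_1-w_1,c_2-w_2)$ is a sufficient credit at that node), which is simpler than the gadget transfer you use but not required for correctness.
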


\section{Conclusion}

In this work, we study the complexity of solving mean-payoff and energy games under strong transition fairness assumptions. We show that when combined with quantitative goals, fairness comes for free in the sense that the complexity of solving these games do not become computationally expensive. We discussed the determinacy and strategy complexity of these games and provided gadget-based techniques to solve them. 

A possible future direction is to study the complexity of solving the doubly fair counterpart of these games, where both players are restricted by fairness constraints on their moves, similar to the extension of fair $\omega$-regular games in~\cite{irmak1}.

In the $\omega$-regular setting, a slight extension of strong transition fairness called \emph{group fairness} is sufficient to make the problem $\NP$-hard, even when combined with the reachability objective~\cite{irmak3}. Group fairness is defined via tuples $(G_1, E_1), \ldots ,(G_k, E_k)$ where for each $i \in [1, k]$, $G_i$ is a subset of player 2 nodes, and $E_i$ is a subset of edges such that the source node of each edge $e \in E_i$ belongs to $G_i$. 
A play $\tau$ is called \emph{group fair} if, for each $v \in V$ and $i \in [1, k]$, whenever $v \in \inf(\tau) \cap G_i$, there exists an $e \in E_i$ that is taken infinitely often in $\tau$. 

In a group fair game with an $\omega$-regular objective $\varphi$ and tuples $(G_1, E_1), \ldots$ $, (G_k, E_k)$ defining the group fairness, player 1 wins a play $\tau$ if $\tau$ satisfies $\varphi$ or \emph{is not group fair}.
Being a subclass of Streett fairness (or, strong fairness), group fairness already unravels the full complexity of Streett fairness\footnote{Note that in a game where fairness is inflicted on player 2 (as in the group fairness definition), the fairness condition is a subclass of Rabin objective rather than Streett. Thus, the derived complexity lower bound is $\NP$ rather than $\coNP$.}. Since reachability can be expressed as a mean-payoff/energy objective, this $\NP$-hardness result carries over to 2-group-fair\footnote{By this, we indicate a game with group fairness where all $G_i$ are player 2 nodes, as in the original definition of group fairness.} mean-payoff games, making these games $\NP$-complete. 2-group-fair energy games again suffer from a lack of determinacy, since they subsume 2-fair energy games.
Similarly, as the dual of reachability, the safety objective can be expressed as a mean-payoff/energy objective. This allows us  to derive a $\coNP$ lower bound for 1-group-fair\footnote{By this, we indicate a game with group fairness where all $G_i$ are player 1 nodes. player 1 wins a play $\tau$ of a 1-group-fair game with objective $\varphi$ if $\tau$ satisfies $\varphi$ and \emph{is group fair}.} mean-payoff and 1-group-fair energy games, making these games $\coNP$-complete. 
However, the exploration of more general notions of fairness -- such as other subclasses of Streett fairness -- that could provide better complexity bounds or lead to elegant solution algorithms for qualitative or quantitative games remains a significant open area of research.

\label{beforebibliography}
\newoutputstream{pages}
\openoutputfile{main.pg}{pages}
\addtostream{pages}{\getpagerefnumber{beforebibliography}}
\closeoutputstream{pages}
\bibliographystyle{splncs04}
\bibliography{main}

\pagebreak

\newpage
\appendix

\section{Formal construction of the player~1 gadget in~\Cref{thm:mpgadgetcorrectness}}\label{sec:app-formal-construction-of-1-mean-payoff-gadget}
Formally, the `gadget game' $G'= (Q',E',w')$ with $Q' = Q'_1 \uplus
Q'_2$ is defined as follows:
\begin{itemize}
  \item For every fair node $q \in Q_f$, we introduce two new player~1
nodes $q_{0}, q_{\ESC}$ and three new player~2 nodes $q_l, q_r,
q_{\text{nW+1}}$ in $G'$.
    \begin{align*}
        Q'_1 = &Q_1 \cup \{q_{0}, q_{\ESC}\, \mid \, q \in Q_f\}\\
        Q'_2 = &Q_2 \cup \{q_l, q_r, q_\text{nW+1}\, \mid \, q \in Q_f\}
    \end{align*}
   \item With the introduction of the new nodes in the gadgets, we
introduce new edges as follows:
   \begin{align*}
    E' \,= &\,\, \{\,qE  \,\mid\, q \not \in Q_f\,\}\\
    &\,\cup \,\{\, (q, q_l),\, (q, q_r), \,(q_l, q_\text{nW+1}),\, (q_l,
q_{0}),\, (q_r, q_{\ESC}) \,\mid \, q \in Q_f\,\} \\
    &\,\cup \,\{\, (q_\text{nW + 1}, q') \mid q' \in E_f(q)\} \cup
\{(q_{0}, q'),\, (q_{\ESC}, q')\mid q' \in E(q)\}
   \end{align*}
   \item We define the new weight function $w' : E' \to [-W',W']$ with
$W' = n^2W+n$ where $n=|Q|$ as follows: for every $q \not \in Q_f,  w'(e) = w(e)$, for all $e \in qE$, and for
every $q \in Q_f$,
    \begin{align*}
      &w'(q,q_l) = w'(q,q_r) =w'(q_l,q_0) = 0,\\
      &w'(q_l, q_{\text{nW+1}}) = nW + 1,\\
      &w'(q_r, q_{\ESC}) = -n^2W - n,\\
      &w'(q_\text{nW+1}, q') = w(q, q') \text{ for } q' \in E_f(q),\\
      &w'(q_{0}, q') = w(q, q') \text{ for } q' \in E(q),\\
      &w'(q_{\ESC}, q') = w(q, q') \text{ for } q' \in E(q)\\
    \end{align*}
\end{itemize}

   \section{Proof of~\cref{thm:mpgadgetcorrectness} for 2-fair mean-payoff games}\label{sec:app-gadget2-correctness}

   In this section, we will prove~\cref{thm:mpgadgetcorrectness} for 2-fair mean-payoff games.
   We will do so via providing the correctness proof of the gadget in~\cref{fig:Player2-gadget}, parallel to the proof in~\cref{subsec:gadget1-correctness}.
   
   Although the proof is similar to that of 1-fair mean-payoff games, there is a substantial difference in strategy sizes. In 2-fair mean payoff games, the size of player~2 winning strategies are bounded by a local memory of $\log(n^3+ n^2+2n+2)$, as opposed to 1-fair mean payoff games that has possibly infinite memory winning strategies for player~1. 
   The reader curious about the reason of this difference in strategy sizes can jump to the second direction of the proof, and visit the definition of the winning player~2 strategy $\pi$, focusing on the bullet point c..

   We start by giving a formal definition of the gadget game $G'$ obtained by replacing each fair node in a 2-fair game with the 3-step gadget given in~\cref{fig:Player2-gadget}.

   Formally, the `gadget game' $G'= (Q',E',w')$ with $Q' = Q'_1 \uplus
   Q'_2$ is defined as follows:
   \begin{itemize}
     \item For every fair node $q \in Q_f$, we introduce two new player~1
   nodes $q_{0}, q_{\ESC}$ and three new player~2 nodes $q_l, q_r,
   q_{\text{nW+1}}$ in $G'$.
       \begin{align*}
        Q'_1 = &Q_1 \cup \{q_l, q_r, q_\text{-nW-1}\, \mid \, q \in Q_f\}\\
        Q'_2 = &Q_2 \cup \{q_{0}, q_{\ESC}\, \mid \, q \in Q_f\}
       \end{align*}
      \item With the introduction of the new nodes in the gadgets, we
   introduce new edges as follows:
   \begin{align*}
    E' \,= &\,\, \{\,qE  \,\mid\, q \not \in Q_f\,\}\\
    &\, \cup \,\{\, (q, q_l),\, (q, q_r), \,(q_l, q_\text{-nW-1}),\, (q_l, q_{0}),\, (q_r, q_{\ESC}) \,\mid \, q \in Q_f\,\} \\
    &\,\cup \,\{\, (q_\text{-nW-1}, p^f), \,(q_{0}, p),\, (q_{\ESC}, p) \, \mid \, q \in Q_f,\, p \in E(q) \text{ and } p^f \in E_f(q)\,\}
   \end{align*}
      \item We define the new weight function $w' : E' \to [-W',W']$ with
   $W' = -n^2W-n$ where $n=|Q|$ as follows: for every $q \not \in Q_f,  w'(e) = w(e)$, for all $e \in qE$, and for
   every $q \in Q_f$,
       \begin{align*}
         &w'(q,q_l) = w'(q,q_r) =w'(q_l,q_0) = 0,\\
         &w'(q_l, q_{\text{-nW-1}}) = - nW -1,\\
         &w'(q_r, q_{\ESC}) = n^2W + n,\\
         &w'(q_\text{-nW-1}, q') = w(q, q') \text{ for } q' \in E_f(q),\\
         &w'(q_{0}, q') = w(q, q') \text{ for } q' \in E(q),\\
         &w'(q_{\ESC}, q') = w(q, q') \text{ for } q' \in E(q)\\
       \end{align*}
   \end{itemize}

   For each $q \in Q_f$, we denote the leftmost
branch $q \to q_l \xrightarrow{-nW-1} q_\text{-nW-1}$ of the gadget as
\emph{fair branch} $\fair{q}$, the middle branch $q \to q_{l}
\xrightarrow{0} q_{0}$ as the \emph{simulation branch} $\simul{q}$ and
the rightmost branch $q \to q_r \xrightarrow{n^2W+n} q_{\ESC} $ as the
\emph{escape branch} $\escape{q}$. The intuition behind the gadgets is idential to the 1-fair case.

We now prove the correctness of
\cref{thm:mpgadgetcorrectness} for 2-fair mean-payoff games.
\begin{proof}[\Cref{thm:mpgadgetcorrectness} for 2-fair mean-payoff games]
     We first show $\Win_1(G') \, \cap \,Q \, \subseteq \,\Win_1(G)$. 
     Let $\sigma'$ be a positional player~1 strategy that wins $q$ in $G'$ for the regular mean-payoff objective. 

    Consider the subgame arena $G'_{\sigma'}$ of $G'$.  All cycles in $G'_{\sigma'}$ are non-negative; else player~2 can force to eventually only visit a negative cycle and therefore construct a play in $G'_{\sigma'}$ with negative mean-payoff value. 
    
    We construct a positional player~1 strategy $\sigma$ that wins $q$ in $G$ in a straight-forward manner: for all $q \in Q_1$, we set $\sigma(q) = \sigma'(q)$.

    For a play $\tau = q^0 q^1 \ldots \in \plays_\sigma(G)$, we define its \emph{extension} $\tau'$ in $G'$ inductively as follows: we start from with $q^0$ and $i = 0$, and with increasing $i$,
    \begin{itemize}
        \item If $q^i \not \in Q_f$, append $\tau'$ with $q^{i+1}$; Otherwise,
        \item If $\simul{q^i}$ exists in $G'_{\sigma'}$, append $\tau'$ with $q^i_l \cdot q^i_{0} \cdot q^{i+1}$;
        \item If $\simul{q^i}$ does not exist in $G'_{\sigma'}$:
        \begin{itemize} 
            \item If $q^{i+1} = \sigma'(q^i_\text{-nW-1})$, append $\tau'$ with the fair branch i.e. $q^i_l \cdot q^i_\text{-nW-1} \cdot q^{i+1}$, 
            \item Otherwise, append it with the escape branch, i.e. $q^i_r \cdot q^i_{\ESC} \cdot q^{i+1}$.
        \end{itemize}
    \end{itemize}
    Clearly, $\tau'$ is a play conforming wtih $\sigma'$; hence contains only non-negative cycles. Further, the restriction of $\tau'$ to the nodes $Q$ of $G$, denoted by $\tau'|_Q$, is exactly $\tau$. 
    We define the tail of $\tau$ as $\tail$ and its extension $\tail'$ similar to the proof for 1-fair mean-payoff. We denote the subgraph of $G'_{\sigma'}$ restricted to the nodes and edges in $\tail'$ by $G'_{\tail'}$.

    Let $\tau \in \plays_\sigma(G)$ be a play in $G$ that conforms with $\sigma$. If $\tau$ is not fair, it is automatically won by player~1. Therefore, assume that $\tau$ is fair. We show that all the simple cycles in $\tail$ has non-negative weight. For this, we use the fact that its extension $\tau'$ is a play in $G'_{\sigma'}$ and hence all the cycles in $G'_{\tail'}$ have non-negative weight. Since the length of simple cycles are bounded by $n$, this gives us $\meanVal(\tau) \geq 0$.

    The main argument is the absence of fair branches in $G'_{\tail'}$, introduced in~\Cref{claim:2-fair-absence}.
    
    \smallskip 
    \begin{claim}\label{claim:2-fair-absence}
      If $\tau$ is fair, there exist no fair branches $(\fair{})$ in $G'_{\tail'}$.
    \end{claim}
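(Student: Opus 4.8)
The plan is to mirror the proof of \Cref{claim:1-fair-absence} from the 1-fair case, dualizing every sign. Since $\sigma'$ is a positional \emph{player~1} winning strategy in $G'$, every cycle of $G'_{\sigma'}$ — and hence every cycle of the recurrent subgraph $G'_{\tail'}$ — is now \emph{non-negative}, whereas in the 1-fair proof the relevant strategy made all cycles negative. Moreover, in the 2-fair gadget the fair branch $\fair{q}$ carries the large \emph{negative} weight $-nW-1$ instead of the positive $nW+1$ of \Cref{fig:Player1-gadget}. The two sign flips cancel, so the same combinatorial argument goes through: I will exploit non-negativity of cycles, together with the large negative fair-branch weight, to forbid fair branches on cycles.

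First I would record the analogue of the observation used in the 1-fair proof: whenever $\escape{q}$ occurs in $G'_{\tail'}$, so does $\fair{q}$. Indeed, $q \in Q_f \subseteq Q_2$ lies in $\inf(\tau)$, and the successor $\sigma'(q_{\text{-nW-1}})$ is a fair successor, i.e.\ $(q,\sigma'(q_{\text{-nW-1}})) \in E_f(q)$; since $\tau$ is fair this edge is taken infinitely often, which by the definition of the extension forces $\fair{q}$ into $\tail'$. Consequently I can delete all escape branches from $G'_{\tail'}$ to obtain a subgraph $S$ with no dead-ends — each deleted escape leaves behind the edge $q \to q_l$ of its accompanying fair branch — in which every fair node still carries either $\fair{q}$ or $\simul{q}$.

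The core step is a weight count on $S$. A simple cycle of $S$ that uses a fair branch visits each gadget at most once, so it projects onto a simple cycle of $G$ of length at most $n$, whose total $G$-weight is therefore at most $nW$; the internal gadget edges contribute $0$ along simulation branches but $-nW-1$ along each fair branch. Hence a simple cycle containing at least one fair branch has weight at most $nW - (nW+1) = -1 < 0$, contradicting the non-negativity of all cycles of $S \subseteq G'_{\sigma'}$. This shows that no fair branch lies on a cycle of $S$.

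The remaining — and I expect hardest — case is that a fair branch lies in $\tail'$ but on no cycle of $S$, and adapting the trapping argument of the 1-fair proof is the main obstacle. Since the fair nodes carrying these branches are off every cycle of $S$, the bottom strongly connected components of $S$ consist only of simulation-branch nodes and non-fair nodes. At each such node the set of outgoing edges is the same in $S$ as in $G'_{\tail'}$: a node whose gadget uses the simulation branch never triggers the escape branch in the extension (escape is selected only when $\simul{q}$ is absent from $G'_{\sigma'}$), and escape deletion does not affect non-fair nodes. Therefore, once the play reaches a bottom SCC it can never leave it; and since every node of $S$, including those of a bottom SCC, lies in $\tail'$ and is thus visited infinitely often, the play is eventually confined to such an SCC and can no longer visit the fair node, contradicting its presence in $\inf(\tau)$. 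The delicate point to get right is precisely that escape edges live in $G'_{\tail'}$ but not in $S$, so the confinement has to be justified through the out-edge agreement, which holds exactly at the simulation-branch and non-fair nodes that make up the bottom SCCs.
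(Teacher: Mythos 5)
Your proposal is correct and follows essentially the same route as the paper's proof: the observation that an escape branch in $G'_{\tail'}$ forces the corresponding fair branch to be present (via fairness of $\tau$ and the designated fair successor $\sigma'(q_{\text{-nW-1}})$), the removal of escape branches to obtain a dead-end-free subgraph $S$, the weight count showing a simple cycle through a fair branch would be negative (contradicting non-negativity of cycles in $G'_{\sigma'}$), and the confinement argument ruling out fair branches off the cycles of $S$. Your phrasing of the last step via bottom SCCs and explicit out-edge agreement is a slightly more careful rendering of the paper's "eventually enters cycles of $S$ and never leaves them," but it is the same argument.
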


    \noindent Before proving the claim, we discuss how it implies that all simple cycles in $\tail$ are non-negative, resulting $\meanVal(\tau) \geq 0$. From the definition of extension, absence of fair branches in $G'_{\tail'}$ implies that either (i) for all $q \in Q_f$ only $\simul{q}$ exists in $G'_{\tail'}$, or (ii) for some $q \in Q_f$, only $\escape{q}$ exists in $\tail'$. By construction of $\tau'$, case (ii) above implies the existence of a fair node $q \in Q_f$ that appears infinitely often in $\tau$ but does not take all its fair outgoing edges, in particular $q \to \sigma'(q_\text{-nW-1}) \in E_f(q)$ infinitely often. This makes $\tau$ unfair, contradicting our assumption. In case of (i), since all simple cycles $\mathbf{c}'$ in $\tail'$ are non-negative and none of the gadget nodes contribute any weight due to only $\simul{}$ existing in $\tail'$, we conclude that the cycles $\mathbf{c} = \mathbf{c'}|_{Q}$ in $\tail$ have non-negative weight. This implies that $\meanVal(\tau) \geq 0$. 
    
    By the above mentioned argument, proving~\Cref{claim:2-fair-absence} concludes the first direction of the proof. We now prove the claim.

    \smallskip
    \noindent \textit{Proof of~\Cref{claim:2-fair-absence}.} As discussed above, whenever $\escape{q}$ is in $G'_{\tail'}$ for a node $q$, $\fair{q}$ is also in it.
    Now we remove all the escape branches from $G'_{\tail'}$ and obtain the subgraph $S$. Note that $S$ has no dead-ends, since for all $q$ for which $\escape{q}$ got removed from $G'_{\tail'}$, $\fair{q}$ was also in $G'_{\tail'}$.
    
    For all nodes $q$ in $S$, either $\fair{q}$ or $\simul{q}$ is in $S$. As a subgraph of $G'_{\sigma'}$, all cycles in $S$ have non-negative weight.
     It is easy to see that a fair branch cannot lie on a simple cycle in $S$, because fair branches carry weight $-nW-1$ and the other edges come either from $\simul{}$ carrying weight $0$, or from $G$ carrying weight $\leq W$. Hence, a simple cycle containing a fair branch would have negative weight. 
    
    Now assume fair branches exist in $\tail'$, but not on cycles of $S$. Then all cycles in $S$ consist only of simulation branches. However, nodes with simulation branches have the same outgoing edges in 
    $G'_{\tail'}$ and in $S$. Since every node in $S$ is visited infinitely often in $\tau$, $\tau$ will eventually enter cycles of $S$ and never leave them, since these nodes have the same outgoing edges in $G'_{\tail'}$. This implies fair branches are not infinitely often visited in $\tau$, and concludes the proof of~\Cref{claim:2-fair-absence}.
    \qed
    This concludes the proof of the first direction. Now let us prove the other direction, i.e. $\Win_2(G') \, \cap \,Q \, \subseteq \,\Win_2(G)$. 
     
    Once more, let $\pi'$ be a positional player~2 strategy that wins $q$ in $G'$. We construct a player~2 strategy $\pi$ that wins $q$ in $G$.
    Consider the subgame arena $G'_{\pi'}$ of $G'$.  All cycles in $G'_{\pi'}$ are negative; else player~1 can force to eventually only visit a non-negative cycle and therefore construct a play in $G'_{\pi'}$ with non-negative mean-payoff value.

    Using $\pi'$, we construct $\pi$ in $G$ as follows:
    \begin{enumerate}[label=\alph*.]
        \item \label{item:2-fair-zeroth} If $q' \not \in Q_f$, then set $\pi$ to be positional at $q'$ with $\pi(q') = \pi'(q')$. Note that this is well-defined since we do not use gadgets for nodes in $Q \setminus Q_f$.

        \item \label{item:2-fair-first} If $\pi'(q') = q_r$ for some fair node $q' \in Q_f$, then set $\pi$ to be positional at $q'$ with $\pi(q') = \pi'(q'_{\ESC})$. Intuitively this says if for a fair node $q'$, $\pi'$ asks to choose the escape branch of the gadget, then $\pi$ follows $\pi'$ to choose that branch and moves to the corresponding `escape node' in $G$. 
        
        \item \label{item:2-fair-second} If $\pi'(q') = q_l$ for some fair node $q' \in Q_f$, then we set $\pi$ to be a finite memory strategy that is played in rounds, using a local memory of size $O(n^3W + n^2 + n + |E_f(q')| + 1)$ at $q'$: consider an order on the fair successors of $q'$ and keep a pointer $j$ in the set $E_f(q')$. In round $i \geq 0$, $\pi$ choses the `preferred successor' $\pi'(q_0')$ if $i \mod n^3W + n^2 + n + 1 \neq 0$. Otherwise, it takes the $j^{\text{th}}$ (modulo $|E_f(q')|$) fair edge of $q'$ and increments $j$.
    \end{enumerate}

    \noindent Now we prove that all $\tau \in \plays_\pi(G)$ are winning for player~2 in $G$. In particular, we will show that $\tau$ is fair and $\meanVal(\tau) < 0$.
    
    \smallskip
    For a play $\tau = q^0 q^1 \ldots \in \plays_\pi(G)$, we construct its \emph{extension} $\tau'$ in $G'$ inductively as follows: we start with $q^0$ and $i=0$ and with increasing $i$,
    \begin{itemize}
      \item If $q^i \not \in Q_f$, append $\tau'$ with $q^{i+1}$;
      \item If $\pi(H\cdot q^i)$ is defined using~\Cref{item:2-fair-first} i.e., $\pi'(q^i) = q^i_r$ then append $\tau'$ with $\escape{q^i} \to q^{i+1}$. In particular, we append $\tau'$ with $q^i_r \cdot q^i_{\ESC}\cdot q^{i+1}$.
      \item If $\pi(H \cdot q^i)$ is defined using~\Cref{item:2-fair-second}, then
        \begin{itemize}
          \item if $q^{i+1} = \pi'(q^i_{0})$, append $\tau'$ with $\simul{q^i}\to q^{i+1}$ i.e., with $q^i_l \cdot q^i_{0} \cdot q^{i+1}$ ; 
          \item else, append $\tau'$ with $\fair{q^{i}} \to q^{i+1}$, i.e. $q^i_l \cdot q^i_\text{-nW-1} \cdot q^{i+1}$.
        \end{itemize} 
    \end{itemize}

    Observe that again, $\tau'$ is a play in $G'_{\pi'}$; hence it contains only negative cycles, and $\tau'|_Q = \tau$. 
    We define a tail $\tail'$ of $\tau'$ as well as $\tail$ and $G'_{\tail'}$, as before.

    We first show that $\tau$ is fair. It is sufficient to show that $\tail$ does not contain fair nodes $p$ for which $\pi(p)$ is defined via~\Cref{item:2-fair-first}, because for all the other fair nodes in $\tail$, $\tau$ visits all their fair successors infinitely often (\Cref{item:2-fair-second}).
     Let $p$ be a node for which $\pi(p)$ is defined via~\Cref{item:2-fair-first}. Then $\pi'(p) = p_r$. As $\pi'$ is positional, only the escape branch of $p$ exists in $G'_{\tail'}$, carrying the weight $n^2W + n$. 
    Since $p$ is visited infinitely often, there is a (simple) cycle in $G'_{\pi'}$ that passes through $p$, and thus sees the weight $n^2W +n$. In order to maintain a non-negative weight, the cycle needs to contain $n$-many fair branches (carrying weight $-nW-1$), which is not possible since there are at most $n-1$ many fair branches in $G'_{\pi'}$, due to $p$ taking its escape branch. With this we conclude that $\tau$ is fair.
    
    Now we prove that $\meanVal(\tau) < 0$. First let us make an observation about the cycles in $G'_{\pi'}$ that use only simulation branches. 

    \smallskip
    \begin{observation}[sim-cycles] We call cycles $\mathbf{c'}$ in $G'_{\pi'}$ that do not contain any fair or escape branches \emph{sim-cycles}. That is, $\mathbf{c'}$ contains only simulation branches. Being part of $G'_{\pi'}$, $\mathbf{c}$ has negative weight.
    Since all the gadget branches in $\mathbf{c'}$ carry $0$ weight, the restriction $\mathbf{c} = \mathbf{c'}|_Q$ is a negative weight cycle in $G$.
    \end{observation}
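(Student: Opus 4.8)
The plan is to derive both assertions of the observation directly from the anatomy of the $2$-fair gadget, combined with the fact already established in this proof that every cycle in $G'_{\pi'}$ has negative weight (otherwise player~1 could force the play to settle on a non-negative cycle, contradicting that $\pi'$ wins for player~2 in $G'$). Since a sim-cycle $\mathbf{c'}$ is, by definition, a cycle lying in $G'_{\pi'}$, the first assertion — that $w'(\mathbf{c'}) < 0$ — is immediate from this global property, with no further work needed.

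For the second assertion I would first record the anatomy of a simulation branch. Recall that $\simul{q}$ is the path $q \xrightarrow{0} q_l \xrightarrow{0} q_0 \xrightarrow{w(q,q')} q'$ with $q' \in E(q)$: its two internal gadget edges $(q,q_l)$ and $(q_l,q_0)$ carry weight $0$, while its exit edge $(q_0,q')$ carries exactly the original weight $w(q,q')$. By definition a sim-cycle uses only simulation branches (at fair nodes) and original edges (at non-fair nodes), and in particular contains no fair branch (weight $-nW-1$) and no escape branch (weight $n^2W+n$). I would then define the restriction $\mathbf{c} = \mathbf{c'}|_Q$ and check it is a genuine cycle in $G$: each maximal segment $q \to q_l \to q_0 \to q'$ of $\mathbf{c'}$ projects, after deleting the intermediate gadget nodes $q_l, q_0 \notin Q$, onto the single edge $(q,q') \in E$, which is valid since $q' \in E(q)$; and each non-fair node contributes its original $G$-edge unchanged. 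Because every simulation branch begins at a root $q \in Q_f \subseteq Q$ and ends at a successor $q' \in Q$, the cycle $\mathbf{c'}$ necessarily passes through nodes of $Q$, so its projection $\mathbf{c}$ is a closed walk on $G$-edges, i.e.\ a cycle in $G$.

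Finally, weight preservation follows from the bookkeeping above: every internal gadget edge on $\mathbf{c'}$ carries weight $0$, and every weight-bearing edge of $\mathbf{c'}$ (original edges and exit edges of simulation branches) equals the corresponding $G$-weight, so $w(\mathbf{c}) = w'(\mathbf{c'})$; combined with $w'(\mathbf{c'}) < 0$ this gives $w(\mathbf{c}) < 0$, as claimed. The argument is essentially routine, and the only point requiring care is the edge-by-edge weight accounting — namely verifying that no fair or escape branch (the only branches whose weights differ from $G$-weights) can appear on a sim-cycle, which is guaranteed immediately by the defining restriction of a sim-cycle to simulation branches.
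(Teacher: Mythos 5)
Your proof is correct and takes essentially the same route as the paper: negativity of $w'(\mathbf{c'})$ follows from the previously established fact that all cycles in $G'_{\pi'}$ are negative (since $\pi'$ wins the mean-payoff game in $G'$), and the projection $\mathbf{c} = \mathbf{c'}|_Q$ inherits this weight because the internal edges of each simulation branch carry weight $0$ while its exit edge carries the original weight $w(q,q')$. Your explicit verification that the projection is a genuine cycle in $G$ merely spells out bookkeeping the paper leaves implicit.
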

    
    We saw that $\tail$ does not contain nodes defined using~\Cref{item:2-fair-first}. I.e., for all fair nodes $p$ in $\tail$, $\pi$ is defined via~\Cref{item:2-fair-second}, and $G'_{\tail'}$ contains the branches $\fair{p}$ and/or $\simul{p}$. 
    
   If $\tail$ does not contain any fair nodes, $\tail = \tail'$, and $\meanVal(\tau) < 0$ easily follows. Now we assume $\tail$ contains at least one fair node. Then in order to show $\meanVal(\tau)< 0$, it is sufficient to show the existence of a number $k\in \mathbb{N}$ and a positive rational number $\epsilon$ such that every $k$-length infix of $\tail$ has average weight smaller than $-\epsilon$.
    We show this for $k = n^3W + n^2 + n$ and $\epsilon = 1/k$.

    Observe that every $k$-length infix $\uu'$ of $\tail'$ contains at least $(n^2W + n +1)$ simple cycles, and the length of $\uu = \uu'|_Q$ is at most $k$. Moreover, every fair branch is visited at most once in $\uu'$ (due to~\Cref{item:2-fair-second}). Since $\tail'$ does not contain escape branches, every simple cycle in $\tail'$ is either a sim-cycle, or contains a fair branch. Since each fair branch is seen at most once in $\uu'$, at most $n$ of the cycles $\tail'$ contains are not sim-cyles. So, $\tail'$ contains at most $n$ non-sim-cycles, and at least $n^2W + 1$ sim-cycles. Since the projection $\mathbf{c} = \mathbf{c'}|_Q$ of a sim-cycle $\mathbf{c'}$ to $\uu$ has weight at most $-1$; and the projection of a non-sim-cycle has weight at most $nW$; $\uu$ has weight at most $-1$. This concludes that every $k$-length infix of $\tail$ has averge weight at most $-1/k$. \qed

    This concludes the proof of the second direction and that of the theorem. 

\end{proof}

\section{Proof of \Cref{thm:2-fairEnergyWin1}}
\label{sec:app-proofs-2-fair-energy}

\begin{proof}
    Let $q$ be a node in $G$. It is easy to see that, if player~1 wins from $q$ in the regular energy game, then she also wins from $q$ in the 2-fair energy game following the same strategy. 
  
  For the opposite side, we show that, if player~1 does not win from $q$ in the regular energy game, then she does not win from $q$ in the 2-fair energy game. We prove it by contradiction. Assume that player~1 does not win from $q$ in $(G, \energy)$ but wins from $q$ in $(G, E_f, \energy)$. Then, there exists a finite initial credit $c$ and player~1 strategy $\sigma \in \Sigma_1$ such that for all player~2 strategies $\pi \in \Sigma_2$, the play $\play_{\sigma,\pi}(G, q)$ either satisfies $\energy_c$ or it is not fair. On the other hand, positional determinacy of energy games implies that there exists a positional player~2 strategy $\pi'$ in $(G, \energy)$ that wins from $q$, i.e., all the plays conforming to $\pi'$ violates energy. Consider the player~2 strategy $\pi'_c$ in $(G,E_f,\energy)$ that mimics $\pi'$ in $(G, E_f, \energy)$ until the accumulated energy of a play drops below $0$ and then play fairly (satisfy the fairness condition). Note that, this is always possible since for a fixed initial credit $c$, the energy of a play drops below $0$ after finitely many steps (due to determinacy of energy games), and fairness is a prefix-independent objective. Clearly, the play $\play_{\sigma,\pi'_c}(G, q)$ is fair and it does not satisfy $\energy_c$, which leads to a contradiction. Also note that, this argument does not make the node $q$ winning for player~2 in the 2-fair energy game, since the strategy $\pi'_c$ depends on the finite initial credit $c$.\qed
  \end{proof}

\section{Proofs of~\Cref{sec:1-fair-energy}}\label{sec:app-proofs-P1-fair-energy}

\subsection{Proof of~\Cref{thm:determinacy1-fairenergy}}

    For this proof, we will need to define \emph{strategy trees}.

    \smallskip
    \noindent \textbf{Strategy Trees.} The \emph{strategy tree} $\tree_{\pi(q)}$ of a player~$i$ strategy $\pi$ is obtained by unfolding the game $(G, E_f)$ starting from node $q$, according to $\pi$. The nodes of the tree are prefixes $u_1 \ldots u_k$ of $\plays_\pi(G, q)$. Therefore, each infinite branch corresponds to a play in $\plays_\pi(G, q)$. We will use nodes (branches) and the finite prefixes (plays) they represent, interchangably. Furthermore, we say the node $u_1 \ldots u_k$ has weight $w(u_1 \ldots u_k)$.

    In a 1-fair energy game a player~1 strategy $\sigma$ is winning from $q$ iff there exists a $c \in \mathbb{N}$ such that all branches of $\tree_{\sigma(q)}$ are fair plays whose weights don't drop below $c$.
    A player~2 strategy $\pi$ is winning from $q$ if all branches of $\tree_{\pi(q)}$ are plays that are either unfair, or whose weights drop unboundedly (i.e. for each $k \in \mathbb{N}$, the weight of each fair branch in $\tree_{\pi(q)}$ drops below $-k$ at some point). 
    Observe that strategies and strategy trees for player~$i$ on game $G$ have a 1-to-1 correspondance.

  \begin{proof}[\Cref{thm:determinacy1-fairenergy}]
%
     Let $q \in Q \setminus \Win_1$. Then, \Cref{eq:energyDeterminacyPlayer1,eq:energyDeterminacyPlayer2} holds. That is, for each $c \in \mathbb{N}$, there exists a player~2 strategy $\pi_c$ that wins against all player~1 strategies. In particular for $c = nW$, there exists such a $\pi_{nW}$. Then the weight of all the fair branches of $\tree_{\pi_{nW}(q)}$ drop below $-nW$ at some point. We will extend $\tree_{\pi_{nW}(q)}$ to obtain a winning player~2 strategy tree $\tree_{\pi^\ext(q)}$. 
   
    Let $\rho$ be a fair branch in $\tree_{\pi_{nW}(q)}$. Then we are guaranteed to 
    find a prefix $\pr \cdot \w$ of $\rho$ such that $\w$ is a simple loop on a state $p$, $w(\pr)$ is non-positive and $w(\w)$ is negative. To see this, observe that $\rho$ has a prefix $\uu = u_1 \ldots u_k$ with $w(\uu) < -nW$. For each $i \leq k$, we denote $u_1 \ldots u_i$ with $\uu_i$. 
    Let $\uu_i$ be the longest prefix of $\uu$ with positive weight. Then, $w(\uu_{i+1}) \leq 0$. 
     Since $w(u_{i} \ldots u_k) < -nW$, it contains a negative weight simple cycle. 
   Let $u_j \ldots u_{j'}$ be the first such cycle with $u_j = u_{j'} = p$. Since $j \geq i$, clearly $w(\uu_j)$ is non-positive. Then, we set $\pr = \uu_j$, and $\w = u_j \ldots u_{j'}$.
   Therefore, every fair branch $\rho$ of $\tree_{\pi_{nW}(q)}$ has a unique signature, $(\pr_\rho, \w_\rho)$.
   
   Now we are ready to build $\tree_{\pi^\ext(q)}$. We build the tree roughly by cutting each fair branch at $\pr\cdot \w$ and pasting the $\w$ part from there on.
   
   Formally, we start building $\tree_{\pi^\ext(q)}$ through $\tree_{\pi_{nW}(q)}$, branch by branch. For a finite prefix $\uu$ of each branch,
   $\pi^\ext$ prescribes the same successors to player~2 nodes as $\tree_{\pi_{nW}(q)}$, until
   the finite prefix $\uu$ is equal to $\pr \cdot \w$ for a fair branch that has $\uu$ as a prefix. That is, the trees are identical until some $\pr \cdot \w$ is encountered. Then, for every player~2 node $u \in \w$, $\pi^\ext$ prescribes the unique successor of $u$ in $\w$ to all $u$ with history $\pr \cdot (\w)^i \cdot \w'$ where $i \geq 1$ and $\w'$ a prefix of $\w$. 
   So, $\pi^\ext$ mimics the strategy of $\w$ for all plays with history $\pr \cdot (\w)^i$. 
   For player~2 nodes with history $\pr \cdot (\w)^i \cdot \w'$ where $\w'$ is not a prefix of $\w$; 
   $\pi^\ext$ identifies these histories with $\pr \cdot \w'$ and prescribe such player~2 nodes, the same successors as nodes with history $\pr \cdot \w'$. Note that, if $\pr \cdot \w'$ is a prefix of a fair branch $\rho$ with signature $(\pr_\rho, \w_\rho)$, then $\pr$ is a prefix of $\pr_\rho \cdot \w_\rho$. If this was not the case, while assigning the strategy to $\pr$ in $\tree_{\pi^\ext(q)}$, we would mimic the strategy of $\rho$ wrt $(\pr_\rho, \w_\rho)$. 
   
   Let us look at branches $\rho'$ in $\tree_{\pi^\ext(q)}$. Then $\rho'$ satisfies one of the following three cases:
   \begin{enumerate}[label=(\roman*)]
     \item It eventually mimics a branch $\rho$ in $\tree_{\pi_{nW}(q)}$,\label{item:p1}
   \item It has a tail of $(\w)^\omega$ where $(\pr, \w)$ is the signature of a fair branch $\rho$ in $\tree_{\pi_{nW}(q)}$, \label{item:p2}
   \item It visits cycles $\w_1, \w_2, \ldots $ that belong to signatures  $(\pr_1, \w_1),  (\pr_2, \w_2),  \ldots$ of fair branches $\rho_1, \rho_2, \ldots $ infinitely often, but it does not have a tail $(\w_i)^\omega$ that cycles in only one of them. That is, $\rho'$ is of the form $\w'_1 \cdot (\w_1)^{j_1} \cdot \w'_2 \cdot (\w_2)^{j_2} \cdot \w'_3 \cdot (\w_3)^{j_3} \cdots$ for $j_k \neq 0$ for all $k \in \mathbb{N}$ and all $\w'_i$ are minimal, in the sense that $\w'_i$ does not contain $\w_{i-1}$ as a prefix or $\w_i$ as a suffix.\label{item:p3}
   \end{enumerate}
   In case of \Cref{item:p1} $\rho'$ is either unfair, or it eventually mimics a play of the form $\pr \cdot (\w)^\omega$ in $\tree_{\pi_{nW}(q)}$. Since $w(\w)$ is negative, the weight of $\rho'$ drops unboundedly.
   Similarly, in case of \Cref{item:p2}, the weight of $\rho'$ drops unboundedly.
   
   For the rest of the proof, we will show that the weight of $\rho'$ drops unboundedly in case of \Cref{item:p3}. Recall that for all $i$, $\pr_i$ is a prefix of $\pr_{i+1} \cdot \w_{i+1}$. 
   Furthermore, the history $\pr_i \cdot (\w_i)^{j_i} \cdot \w'_{i+1}$ is identified with the history $\pr_i \cdot \w'_{i+1}$ in $\pi^\ext$. The fact that the history $\pr_i \cdot (\w_i)^{j_i} \cdot \w'_{i+1}$ is continued by $\w_{i+1}$ in $\tree_{\pi^\ext}$ implies that $\pr_i \cdot \w'_{i+1}$ is equal to $\pr_{i+1}\cdot (\w_i)^k$ for some $k \in \mathbb{N}$.
   
   In order to finish the proof, we will introduce the concept of shuffle-equivalence and prove \Cref{claim:cyclefreerho}.  
   Two finite plays $\uu_1$ and $\uu_2$ are called \emph{shuffle-equivalent} iff they see the same cycles, the same number of times. Clearly, the weights of two shuffle-equivalent finite plays are equivalent.
   
   We denote $\rho'$ without the cycles $\w_i$ as $\rho^- = \w'_1 \cdot \w'_2 \cdots$ and denote the $k$-element prefix $\w'_1 \cdot \w'_2 \cdots \w'_k$ of $\rho^-$ with $\rho^{-}_k$. 
   We show that $\rho^-_k$ has non-positive weight for every $k$, which directly follows from the following claim.
   \begin{claim} \label{claim:cyclefreerho}
     $\rho^-_k$ is shuffle equivalent to $\pr_{k+1} \cdot (\w_1)^{z_1} \cdot (\w_2)^{z_2} \cdots (\w_{k+1})^{z_{k+1}}$ 
     for some $z_1, \ldots, z_k \in \mathbb{N}$.
   \end{claim}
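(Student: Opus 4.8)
The plan is to prove the claim by induction on $k$, driven by the two structural identities established just above the claim: the \emph{prefix relation} that $\pr_i$ is a prefix of $\pr_{i+1}\cdot\w_{i+1}$, and the \emph{loop-unfolding relation} $(\star)$: for each $i$, $\pr_i\cdot\w'_{i+1}=\pr_{i+1}\cdot(\w_i)^{k_i}$ for some $k_i\in\N$. The base case is the observation that $\w'_1=\pr_1$: since $\pi^{\ext}$ follows $\pi_{nW}$ verbatim until the first signature is triggered, the play reads exactly $\pr_1$ before it first enters the loop $\w_1$, so $\rho^-_1=\pr_1$ and there is nothing yet to collect.

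First I would pass from the literal play-equality $(\star)$ to an identity on \emph{cycle content}. Recall that two finite plays are shuffle-equivalent precisely when they traverse the same simple backbone and the same multiset of simple cycles, and that weight is an invariant of the shuffle class. Since $\pr_i\cdot\w'_{i+1}$ and $\pr_{i+1}\cdot(\w_i)^{k_i}$ are \emph{equal} plays, they share a backbone trivially; cancelling it yields the multiset identity that the cycles of $\pr_i$ together with those of $\w'_{i+1}$ equal the cycles of $\pr_{i+1}$ together with $k_i$ copies of $\w_i$. This is the version of $(\star)$ that is stable under reordering and is exactly what the induction needs.

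The heart of the argument is then a telescoping of these cycle-content identities over $i=1,\dots,k-1$. Writing $\rho^-_k=\w'_1\cdots\w'_k=\pr_1\cdot\w'_2\cdots\w'_k$ and combining the identities, the intermediate prefixes $\pr_2,\dots,\pr_{k-1}$ cancel, leaving the multiset of cycles of $\rho^-_k$ equal to those of $\pr_k$ together with a nonnegative combination $\bigcup_{i<k} k_i\{\w_i\}$ of the loops. A final re-anchoring step uses the prefix relation $\pr_k\preceq\pr_{k+1}\cdot\w_{k+1}$ to trade the cycles carried by $\pr_k$ for those of $\pr_{k+1}$ at the cost of at most one further copy of $\w_{k+1}$, bringing the representative into exactly the advertised shape $\pr_{k+1}\cdot(\w_1)^{z_1}\cdots(\w_{k+1})^{z_{k+1}}$. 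Reassembling the collected loops into a genuine walk of that \emph{precise} shape — rather than merely matching cycle multisets — is where care is required: at each insertion I must verify the backbone is currently at the base vertex of the loop being inserted, which holds because every $\w_i$ is a simple loop sitting on the path traced by the relevant $\pr_j$.

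The main obstacle is exactly this shuffle-equivalence bookkeeping: justifying that the multiset cancellations correspond to legitimate rearrangements of walks, and that the prefix relation can be applied to move from the naturally-surviving prefix $\pr_k$ to $\pr_{k+1}$ without introducing spurious cycles. Once the claim is established, the payoff is immediate and is the whole point of the construction: shuffle-equivalence preserves weight, $\pr_{k+1}$ is non-positive by its defining signature property, and each $\w_i$ is a negative simple loop, so $w(\rho^-_k)\le w(\pr_{k+1})\le 0$ for every $k$. Feeding this bound back into \Cref{item:p3} — where $\rho'$ interleaves the cycle-free part $\rho^-$ with the genuine loops $(\w_i)^{j_i}$ — shows the weight of $\rho'$ drops unboundedly, which is the remaining case needed to complete the determinacy proof.
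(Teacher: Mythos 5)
Your proof is essentially the paper's: both induct on $k$ and telescope the identity $\pr_i\cdot\w'_{i+1}=\pr_{i+1}\cdot(\w_i)^{k_i}$ under shuffle equivalence, commuting the accumulated loop copies past each new segment so that the unfolding identity can be applied to the surviving prefix. The one small divergence is your final re-anchoring from $\pr_k$ to $\pr_{k+1}$ via the prefix relation (the paper instead applies the unfolding identity once more), which strictly speaking yields only a containment of cycle multisets rather than the stated equality; but since the claim is used solely to conclude $w(\rho^-_k)\le 0$, which already follows from the representative anchored at $\pr_k$ (as $\pr_k$ is non-positive and each $\w_i$ is a negative loop), this is harmless --- and the paper's own indexing at this point is equally loose.
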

   \begin{proof}[\Cref{claim:cyclefreerho}] We will proceed by induction on $k$. The base case for $k = 1$ follows from our previous observation that $\pr_1 \cdot \w'_2$ is equivalent to $\pr_{2} \cdot (\w_2)^{z_2}$ for some $z_2 \in \mathbb{N}$. 
   We assume the claim holds for $k$, and show it holds for $k+1$.
   By the inductive hypothesis, $\rho^-_{k+1}$ is shuffle equivalent to $\pr_{k+1} \cdot (\w_1)^{z_1} \cdot (\w_2)^{z_2} \cdots (\w_{k+1})^{z_k} \cdot \w'_{k+1}$. Then the claim follows the observation that $\pr_{k+1} \cdot \w'_{k+1}$ is of the form $\pr_{k+2} \cdot (\w_{k+2})^{z_{k+2}}$ for some $z_{k+2} \in \mathbb{N}$. 
   \end{proof}
   
   \Cref{claim:cyclefreerho} immediately implies that $\rho^-_k$ has non-negative weight for each $k$, since $\pr_i$ and $\w_i$ both have non-positive weight. This in turn implies that $w(\rho'_k) \leq -k$ where $\rho'_k = \w'_1 \cdot (\w_1)^{j_1}\cdot \w'_2 \cdots \w'_k \cdot (\w_k)^{j_k}$ since all $j_i$ are positive. This shows that the weight of $\rho'$ drops unboundedly, which finishes the proof by revealing that $\pi^{\ext}$ is a winning player~2 strategy from $q$. \qed
   \end{proof}

\subsection{Proof of~\Cref{thm:energygadgetcorrectness}}

   Let $(G, E_f, \energy^f)$ be a 1-fair energy game. We
   construct $G'$ by replacing every fair node in $q \in Q_f$ in $G$, with
   its corresponding gadget presented in \cref{fig:1-fair-energy-gadget}.
   Formally, the `gadget game' $G'= (Q',E',w')$ with $Q' = Q'_1 \uplus
   Q'_2$ is defined as follows: 

\begin{itemize}
    \item For every fair node $q \in Q_f$, we introduce three new player~1
  nodes $q_{\texttt{val}}, q_{\texttt{pos}},  q_{\ESC}$ and four new player~2 nodes $q_l, q_r,
  q_{\text{nW+1}}, q_{\texttt{0}}$ in $G'$.
      \begin{align*}
          Q'_1 = &Q_1 \cup \{q_{\texttt{val}}, q_{\texttt{pos}},  q_{\ESC} \, \mid \, q \in Q_f\}\\
          Q'_2 = &Q_2 \cup \{q_l, q_r, q_\text{nW+1}, q_{\texttt{0}}\, \mid \, q \in Q_f\}
      \end{align*}
     \item With the introduction of the new nodes in the gadgets, we
  introduce new edges as follows:
     \begin{align*}
      E' \,= &\,\, \{\,qE  \,\mid\, q \not \in Q_f\,\}\\
      &\,\cup \,\{\, (q, q_l),\, (q, q_r), \,(q_l, q_\text{nW+1}),\, (q_l,
  q_{\texttt{val}}),\,(q_{\texttt{val}}, q_{\texttt{pos}}), \, (q_{\texttt{val}}, q_{\texttt{0}})\, (q_r, q_{\ESC}) \,\mid \, q \in Q_f\,\} \\
      &\,\cup \,\{\, (q_\text{nW + 1}, q'), \,  (q_\texttt{0}, q') \mid q' \in E_f(q)\} \cup
  \{(q_{\texttt{pos}}, q'),\, (q_{\ESC}, q')\mid q' \in E(q)\}
     \end{align*}
     \item We define the new weight function $w' : E' \to [-W',W'] \cup \{-\varepsilon\}$ with $\varepsilon = \frac{1}{n+1}$ and
  $W' = n^2W+n$ where $n=|Q|$ as follows: for every $q \not \in Q_f,  w'(e) = w(e)$, for all $e \in qE$, and for
  every $q \in Q_f$,
      \begin{align*}
        &w'(q,q_l) = w'(q,q_r) =w'(q_l,q_{\texttt{val}}) = w'(q_\texttt{val}, q_{\texttt{0}}) = 0,\\
        &w'(q_l, q_{\text{nW+1}}) = nW + 1,\\
        &w'(q_{\texttt{val}}, q_{\texttt{pos}}) = - \epsilon,\\
        &w'(q_r, q_{\ESC}) = -n^2W - n,\\
        &w'(q_\text{nW+1}, q') = w'(q_{\texttt{0}})= w(q, q') \text{ for } q' \in E_f(q),\\
        &w'(q_{\texttt{pos}}, q') = w'(q_{\ESC}, q') = w(q, q') \text{ for } q' \in E(q)\\
      \end{align*}
  \end{itemize}

  For each $q \in Q_f$, we denote the leftmost
branch $q \to q_l \xrightarrow{nW+1} q_\text{nW+1}$ of the gadget as
\emph{fair branch} $\fair{q}$, the middle branch $q \to q_{\texttt{val}}$ as the \emph{value branch} $\valu{q}$ and
the rightmost branch $q \to q_r \xrightarrow{-n^2W-n} q_{\ESC} $ as the
\emph{escape branch} $\escape{q}$. Furthermore, we denote the left branch $q \to q_\texttt{val} \xrightarrow{-\epsilon} q_{\texttt{pos}} $ of the value branch 
as the \emph{positive-value branch} $\pos{q}$ and its right branch $q \to q_\texttt{val} \xrightarrow{0} q_\texttt{0} $ as the \emph{zero-value branch} $\zero{q}$. 

We note that the game graph as formally defined above, is equivalent to a game graph on integer weights obtained by multiplying every weight by $\epsilon = \frac{1}{n+1}$. The size of the gadget game $G'$ is stated in~\Cref{thm:energygadgetcorrectness} is calculated with respect to the integer weights. For simplicity, in the formal definition and the proof we use the rational weights depicted on the figure.

We now formally prove the correctness of
\cref{thm:energygadgetcorrectness}.  
\begin{proof}

    We first show that $\Win_1(G') \, \cap \,Q \, \subseteq \,\Win_1(G)$ i.e., if player~1 wins the regular energy game from a node $q$ in $G'$, then player~1 also wins the fair energy game from $q$ in $G$ if $q$ is in $G$. 

    Let $\sigma'$ be a winning strategy for player~1 in $G'$ for the regular energy objective starting from node $q$ with $q \in \Win_1(G') \cap Q$. Without loss of generality, let $\sigma'$ be a positional strategy. We will construct a winning strategy $\sigma$ for player~1 in $G$ for the fair energy objective starting from node $q$.

    Consider the subgame arena $G'_{\sigma'}$ of $G'$ with respect to the strategy $\sigma'$. Similar to the mean-payoff case, all cycles in $G'_{\sigma'}$ are non-negative; else player~2 can force to eventually only visit a negative cycle and therefore construct a play in $G'_{\sigma'}$ whose energy drops unboundedly. This leads to a contradiction since $\sigma'$ is winning for player~1.
    
    Using $\sigma'$, we construct $\sigma$ in $G$ as follows:
    \begin{enumerate}[label=\alph*.]
        \item \label{item:1-fair-energy-zeroth} If $q' \not \in Q_f$, then set $\sigma$ to be positional at $q'$ with $\sigma(q') = \sigma'(q')$. Note that this is well-defined since we do not use gadgets for nodes in $Q \setminus Q_f$.

        \item \label{item:1-fair-energy-first} If $\sigma'(q') = q_r$ for some fair node $q' \in Q_f$, then set $\sigma$ to be positional at $q'$ with $\sigma(q') = \sigma'(q'_{\ESC})$. Intuitively this says if for a fair node $q'$, $\sigma'$ asks to choose the escape branch of the gadget, then $\sigma$ follows $\sigma'$ to choose that branch and moves to the corresponding `escape node' in $G$. 
        
        \item \label{item:1-fair-energy-second} If $\sigma'(q') = q_l$ and $\sigma(q_\texttt{val}) = q_\texttt{pos}$ for some fair node $q' \in Q_f$, then we set $\sigma$ to be a finite memory strategy that is played in rounds, using a local memory of size $\log(n^3W  + |E_f(q)| + 1)$ at $q'$: consider an order on the fair successors of $q'$ and keep a pointer $j$ in the set $E_f(q')$. In round $i \geq 0$, $\sigma$ chooses the `preferred sucessor' $\sigma'(q'_0)$ if $i \mod n^3W + 1\neq 0$. Otherwise, it takes the $j^{th}$ (modulo $|E_f(q')|$) fair edge of $q'$ and increments $j$.  
        
        \item \label{item:1-fair-energy-third} If $\sigma'(q') = q_l$ and $\sigma(q_\texttt{val}) = q_\texttt{0}$ for some fair node $q' \in Q_f$, then we let $\sigma$ fix an order on the fair successors of $q'$ as before, and loop through this set. Intuitively this choice of $\sigma$ implies that all fair successors of $q'$ lead to non-negative value cycles. 
    \end{enumerate}

    \noindent Recall that in 1-fair energy games, player~2 wins a node $q$ iff it has a winning strategy $\pi$ that wins against all initial credits $c$ and all player~1 stragies (\Cref{eq:energyDeterminacyPlayer2}). Therefore, the weight in all plays in $\plays_\pi(G)$ drops unboundedly. 
    This in turn implies that a player~1 strategy $\lambda$ is winning if all plays in $\plays_\lambda(G)$ are fair and have energies bounded from below (not necessarily by the same initial credit $c$).

    Now we prove that all $\sigma$ plays $\tau \in \plays_\sigma(G)$ are winning for player~1 in $G$. In particular, we will show that (i) $\tau$ is fair and (ii) $ \tau = \mathbf{w} \cdot \tail$ for a finite prefix $\mathbf{w}$ and an infinite suffix $\tail$ whose weight do not drop below $-n^2W$. Observe that (ii) implies that the weight of $\tau$ is bounded from below. 
   
   \smallskip
   For a play $\tau = q^0 q^1 \ldots \in \plays_\sigma(G)$, we construct its \emph{extension} $\tau'$ in $G'$ inductively as follows: we start with $q^0$ and $i=0$ and with increasing $i$,
   \begin{itemize}
     \item If $q^i \not \in Q_f$, append $\tau'$ with $q^{i+1}$;
     \item If $\sigma(H\cdot q^i)$ is defined using~\Cref{item:1-fair-energy-first} i.e., $\sigma'(q^i) = q^i_r$ then append $\tau'$ with $\escape{q^i} \to q^{i+1}$. In particular, we append $\tau'$ with $q^i_r \cdot q^i_{\ESC}\cdot q^{i+1}$.
     \item If $\sigma(H \cdot q^i)$ is defined using~\Cref{item:1-fair-energy-second},  $\sigma'(q^i) = q^i_l$ and $\sigma'(q^i_{\texttt{val}}) = q^i_{\texttt{pos}}$ then
       \begin{itemize}
         \item if $q^{i+1} = \sigma'(q^i_{\texttt{pos}})$, append $\tau'$ with $\pos{q^i}\to q^{i+1}$ i.e., with $q^i_l \cdot q^i_{\texttt{val}}  \cdot q^i_{\texttt{pos}} \cdot q^{i+1}$; 
         \item else, append $\tau'$ with $\fair{q^{i}} \to q^{i+1}$, i.e. $q^i_l \cdot q^i_\text{nW+1} \cdot q^{i+1}$.
       \end{itemize} 
    \item If $\sigma(H \cdot q^i)$ is defined using~\Cref{item:1-fair-energy-third}, $\sigma'(q^i) = q^i_l$ and  $\sigma'(q^i_{\texttt{val}}) = q^i_{\texttt{0}}$ then append $\tau'$ with $\zero{q^i} \to q^{i+1}$, i.e., with $q^i_l \cdot q^i_{\texttt{val}}  \cdot q^i_{\texttt{0}} \cdot q^{i+1}$.
   \end{itemize}

   Clearly, $\tau'$ is a play in $G'_{\sigma'}$ and hence it contains only non-negative cycles. Furthermore, the restriction of $\tau'$ to the nodes $Q$ of $G$, denoted by $\tau'|_Q$, is exactly $\tau$. 

   Let a `tail' of $\tau$ denote the infinite suffix $\tail$ of $\tau$ that contains only the nodes and edges visited infinitely often in $\tau$; and let $\tail'$ be the extension of $\tau$ in $G'$. We denote the subgraph of $G'_{\sigma'}$ restricted to the nodes and edges in $\tail'$ by $G'_{\tail'}$.

   The proof of (i), i.e. that $\tau$ is fair, is exactly the same as in 1-fair mean-payoff games. To show fairness of $\tau$, it is sufficient to show that $\tail$ does not contain fair nodes $p$ for which $\sigma(p)$ is defined as~\Cref{item:1-fair-energy-first}. Let $p$ be a node for which $\sigma(p)$ is defined via~\Cref{item:1-fair-energy-first}. Then $\sigma'(p) = p_r$. As $\sigma'$ is positional, among all the branches from the gadget of $p$, only $\escape{p}$ exists in $G'_{\sigma'}$ carrying the weight $-n^2W - n$. 
   Since from our assumption $p$ is visited infinitely often, there is a (simple) cycle in $G'_{\sigma'}$ that passes through $p$, and thus sees the weight $-n^2W -n$. This cycle must be non-negative
   since it is a cycle in $G'_{\sigma'}$, but in order to counteract the negative weight $-n^2W -n$, the cycle needs to contain $n$-many gadget edges with weight $nW+1$, which is not possible. Therefore, $\tau$ is fair.

   Now we prove that the weight of the prefixes $\tail_i$ of $\tail$ do not go below $-n^2W$. Similar to the proof of 1-fair mean-payoff, we make an observation about cycles in $G_{\tail'}$ that do not use left branches. Different from the mean-payoff case, we have 2 different kinds of simple cycles that do not see a left branch. One of them sees at least one positive value branch, and therefore collects weight $-\epsilon$. Since the cycle have non-negative value in $G'$, and the gadget nodes contribute only negative weights; we projection of this cycle in $G$ have positive weight. The second kind of cycles only see zero value branches. The projection of such cycles in $G$ can have zero or positive weight.

   \smallskip
    \begin{observation}[positive-value and zero-value cycles] Let $\mathbf{c'}$ be a cycle in $G'_{\sigma'}$ that does not contain any fair or escape branches. If $\mathbf{c'}$ sees a positive value branch, we call it a \emph{positive-value cycle}; else, we call it a \emph{zero-value cycle}. For both positive- and zero-value cycles $\mathbf{c'}$, since the cycles have non-negative weight and the gadget nodes do not contribute positively to the weight, $\mathbf{c'}|_{Q}$ has non-negative weight. Furthermore in a positive-value cycle $\mathbf{c'}$ the gadget nodes contribute negatively to the weight of the cycle. Therefore, $w(\mathbf{c'}_{Q}) \geq 1$.
    \end{observation}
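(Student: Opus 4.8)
The plan is to relate the $G'$-weight of the cycle $\mathbf{c'}$ to the $G$-weight of its projection $\mathbf{c'}|_Q$ by bookkeeping exactly the extra weights that the gadget edges contribute. Since $\mathbf{c'}$ avoids all fair and escape branches, every fair node $q$ on the cycle is traversed through its value branch, i.e.\ via either $\pos{q}$ or $\zero{q}$. First I would record the per-step contributions along $\mathbf{c'}$: a non-fair node contributes its original weight $w(q,q')$; a zero-value traversal $q \to q_l \to q_{\texttt{val}} \to q_0 \to q'$ contributes $0+0+0+w(q,q')$; and a positive-value traversal $q \to q_l \to q_{\texttt{val}} \to q_{\texttt{pos}} \to q'$ contributes $-\epsilon + w(q,q')$. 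Summing around the cycle yields the key identity $w(\mathbf{c'}) = w(\mathbf{c'}|_Q) - \epsilon k$, where $k \geq 0$ is the number of positive-value branches traversed by $\mathbf{c'}$.

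Next, I would invoke the fact recorded at the start of this proof that every cycle in $G'_{\sigma'}$ has non-negative weight, so $w(\mathbf{c'}) \geq 0$. Rearranging the identity gives $w(\mathbf{c'}|_Q) = w(\mathbf{c'}) + \epsilon k \geq 0$, which settles the non-negativity claim for both cycle types simultaneously: a zero-value cycle has $k = 0$ and a positive-value cycle has $k \geq 1$, but in either case $\epsilon k \geq 0$.

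For the sharper bound on positive-value cycles I would combine two observations of different flavor. By definition a positive-value cycle has $k \geq 1$, so the identity forces $w(\mathbf{c'}|_Q) \geq \epsilon > 0$; and since $\mathbf{c'}|_Q$ is a genuine cycle of $G$, its weight is a sum of integer edge weights and hence an integer. A positive integer is at least $1$, so $w(\mathbf{c'}|_Q) \geq 1$.

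The step I would be most careful about is precisely this final jump from the non-strict bound $\geq 0$ to the strict bound $\geq 1$: it is the whole reason the $-\epsilon$ perturbation was introduced, and it hinges on the interaction between two facts of distinct origin, namely integrality of the projected weight (inherited from $G$ having integer weights) and the strictly positive slack $\epsilon k$ injected by the perturbing edges (living in $G'$). I would make explicit that $\epsilon = 1/(n+1) < 1$ is chosen small enough that it can only lift an already non-negative integer strictly above zero, thereby certifying $\geq 1$ for positive-value cycles, while never being large enough to turn a negative weight non-negative, so that the same calculation consistently yields non-negativity for zero-value cycles.
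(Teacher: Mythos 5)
Your proposal is correct and follows essentially the same route as the paper: the paper likewise uses that all cycles in $G'_{\sigma'}$ are non-negative and that the gadget edges on a value branch contribute only $0$ or $-\epsilon$, so the projected weight is at least the cycle weight (hence $\geq 0$), and strictly larger when a positive-value branch occurs, which combined with integrality of weights in $G$ gives $\geq 1$. Your write-up merely makes explicit the identity $w(\mathbf{c'}) = w(\mathbf{c'}|_Q) - \epsilon k$ and the integrality step that the paper leaves implicit.
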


    We have shown above that $\tail$ contains no nodes for which $\sigma$ has been defined via~\Cref{item:1-fair-first}. Therefore, for all fair nodes $p$ in $\tail$, $\sigma$ is defined via~\Cref{item:1-fair-energy-second} or~\Cref{item:1-fair-energy-third}, and $G'_{\tau'}$ contains only the branches $\fair{p}$ and/or $\valu{p}$. 

   If no fair branches exist in $\tail'$, then for no $p \in \tail$, $\sigma$ is defined via~\Cref{item:1-fair-energy-third}. Therefore the only difference beween $\sigma'$ and $\sigma$ are the zero-value branches which do not alter the weight of the game. The fact that the weight of $\tail$ is bounded below easily follows. 
   
   Otherwise, there exists a node $p \in \tail$ such that $\fair{p}$ is in $G'_{\tail'}$.
   We fix such a node $p$, and show in~\Cref{claim:boundedness-of-infix} that for every infix $\uu'$ of $\tail'$ that corresponds to a simple cycle over $\fair{p}$ in $\tail'$, $\uu = \uu'|_Q$ has positive weight. Moreover, the weight of $\uu$ does not drop below $-n^2W$ for any prefix. The proof of the claim concludes that the weight of $\tail$ is bounded below by $-n^2W$.
   
   \begin{claim}\label{claim:boundedness-of-infix}
     Let $\uu'$ be an infix of $\tail'$ which starts and ends at node $p$, the first branch of $\uu'$ as well as the branch following $\uu'$ in $\tail'$ are $\fair{p}$. Then $w(\uu) > 0$, for $\uu = \uu'|_Q$; and $w(\uu_i) \geq -n^3W$ for every prefix $\uu_i$ of $\uu$. 
   \end{claim}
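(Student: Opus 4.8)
The plan is to transfer the non-negativity of cycles in $G'_{\sigma'}$ to a strict positivity statement about $\uu = \uu'|_Q$ in $G$, using the $-\epsilon$ weights on positive-value branches as the source of strictness. First I would record the exact branch content of $\uu'$: by the round-based construction of $\sigma$ in \Cref{item:1-fair-energy-second}, between the two occurrences of $\fair{p}$ delimiting $\uu'$ the node $p$ is re-entered exactly $n^3W$ times, each time through the positive-value branch $\pos{p}$. Thus $\uu'$ traverses $\fair{p}$ once and $\pos{p}$ exactly $n^3W$ times, and (since $\tail'$ contains no $\escape{}$ branches, as already shown) no escape branch at all. Removing the gadget padding when projecting to $Q$ gives the identity
\begin{equation*}
  w(\uu) \;=\; w(\uu') \;-\; F\,(nW+1) \;+\; P\,\epsilon,
\end{equation*}
where $F$ and $P$ count, respectively, all fair and all positive-value branches occurring in $\uu'$ (over every fair node, not only $p$), and $w(\uu') \geq 0$ because $\uu'$ is a closed walk in $G'_{\sigma'}$ and every cycle there is non-negative.

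For the positivity $w(\uu) > 0$ I would decompose $\uu'$ into simple cycles and invoke the Observation on positive- and zero-value cycles: a positive-value cycle projects to weight $\geq 1$, a zero-value cycle to $\geq 0$, and a cycle carrying $k$ fair branches to $\geq -k(nW+1)$. The $n^3W$ copies of $\pos{p}$ are the engine of positivity, while the single $\fair{p}$ costs at most $nW+1$. The decisive structural fact is the spacing imposed by $\sigma$: in $\tail'$ every fair branch of a node is separated from the next fair branch of the \emph{same} node by exactly $n^3W$ of that node's positive-value branches. Hence positive-value branches dominate fair branches by a factor $n^3W$ up to at most one unpaired fair branch per node, i.e.\ $P \geq n^3W\,(F - n)$, together with the clean bound $P \geq n^3W$ contributed by $p$ itself. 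Substituting these into the identity above and using $\epsilon = 1/(n+1)$, the plan is to show that the positive-value credit outweighs the total fair-branch cost; as $w(\uu)$ is an integer, this yields $w(\uu) \geq 1$.

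For the prefix bound I would scan $\uu$ along its cycle structure, viewing the currently unfinished cycles as a nested stack. Every \emph{completed} positive- or zero-value cycle contributes non-negatively, so the weight can only drop on account of the at most $n$ simultaneously open partial cycles and the completed cycles that carry a fair branch; bounding each open partial cycle by $-nW$ and each completed fair cycle by $-(nW+1)$ keeps $w(\uu_i)$ above $-n^3W$ for every prefix $\uu_i$ (the stated bound being deliberately loose).

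The main obstacle is the bookkeeping caused by fair branches of nodes other than $p$: a simple cycle through $p$'s positive-value branch may also contain a fair branch of some other node $x$, turning it into a negative-projecting cycle, and the positive-value branches of $x$ meant to compensate for it can lie partly outside $\uu'$. Making the domination rigorous therefore hinges on the global spacing estimate (each fair branch matched with $n^3W$ same-node positive-value branches, with at most one unmatched occurrence per fair node) and on verifying that $n^3W$, against the scale $\epsilon = 1/(n+1)$, is large enough to absorb all $\leq n$ unmatched fair branches simultaneously. This counting, rather than any conceptual ingredient, is where the real difficulty lies.
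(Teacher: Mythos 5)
Your proposal correctly isolates the two driving facts — the spacing of exactly $n^3W$ occurrences of $\pos{x}$ between consecutive occurrences of $\fair{x}$, and the $\geq 1$ projected weight of positive-value cycles — but both halves of the argument have a genuine gap in the quantitative step, which is exactly where the paper's proof does its real work (via an induction on the number of nodes whose fair branches repeat in $\uu'$, with the strengthened statement $w(\uu)\geq (k-1)n^2W$ and prefix bound $-(n-k+1)n^2W$). For positivity, the identity $w(\uu)=w(\uu')-F(nW+1)+P\epsilon$ together with $w(\uu')\geq 0$ and $P\geq n^3W\max(1,F-n)$ does \emph{not} give $w(\uu)>0$: already at $F=n+1$ you only have $P\epsilon \geq n^3W/(n+1) < (n+1)(nW+1)=F(nW+1)$, so the branch-level credit of $\epsilon$ per positive-value branch is too weak. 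You must pass to the cycle decomposition and use that the $n^3W$ occurrences of the edge into $q_{\texttt{pos}}$ at $p$ lie in $n^3W$ \emph{distinct} simple cycles, each projecting to integer weight $\geq 1$ unless it also carries a fair branch; you mention this decomposition but then ``substitute into the identity above,'' and the decisive inequality (which needs a case split between moderate and large $F$, using the $p$-based count for small $F$ and the $P/n$ count for large $F$) is precisely the part you defer.

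The prefix bound is where the approach actually breaks. You credit completed positive- and zero-value cycles only with $\geq 0$ and debit each completed fair-carrying cycle with $-(nW+1)$; but the number of fair-branch occurrences of nodes $x\neq p$ inside $\uu'$ is \emph{unbounded} (the excursions between consecutive visits to $p$ can be arbitrarily long, so $\fair{x}$ can recur arbitrarily often within $\uu'$ — a fact your own bound $P\geq n^3W(F-n)$ already presupposes). Hence the total debit $-F(nW+1)$ is not bounded by $-n^3W$, and the claimed prefix bound does not follow. To repair this you would have to run the positivity counting on \emph{every} prefix (matching each in-prefix $\fair{x}$ with its $n^3W$ preceding in-prefix $\pos{x}$'s), which is essentially what the paper's hierarchical induction does: each maximal block of repeated fair branches of a node $x$ is packaged into cycles on $\fair{x}$ whose projections are non-negative (indeed $\geq kn^2W$) by the inductive hypothesis, so that at each level only the $\leq n$ non-repeating fair branches can contribute negatively, yielding the $-(n-k+1)n^2W$ prefix bound. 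As written, your flat scan does not recover this.
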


   \begin{proof}[\Cref{claim:boundedness-of-infix}]
   Observe that since $\uu'$ is a simple cycle on $\fair{p}$; the fair branch of $p$ is seen only once in $\uu'$, i.e. as the first branch. However, the fair branches of the other nodes can repeat arbitrarily often in $\uu'$. We will proceed by induction on the number of such nodes. 
   We will show the following: If there are $\leq (n-k)$-many nodes whose fair branches are taken more than once in $\uu'$ for some $ k \in [1, n]$, then $w(\uu) \geq (k-1) n^2 W$ and $w(\uu_i) \geq -(n-k+1)n^2W$ for every prefix $\uu_i$ of $\uu$.
    
   \textbf{Base case (k = n):} In this case, the fair branches of all nodes are seen at most once in $\uu'$. That is, there are at most $n$-many fair branches in $\uu'$, and the rest are positive- or zero-value branches. Consequently, in $\uu'$ there are at most $n$-many simple cycles that are neither positive- nor zero-value cycles. These are the only cycles whose projections to $Q$ can have negative weight in $\uu$; and since they are simple cycles they can collect the minimal weight of $-n^2W$. Therefore, the weight of any infix of $\uu$ is lower bounded by $-n^2W$. Furthermore, we know that the positive-value branch $\pos{p}$ is visited exactly $n^3W$-many times in $\uu'$. Therefore, there are at least $n^3W$-many positive-value cycles in $\uu'$, collecting the minimum weight of $n^3W$ in $\uu$. Therefore, $w(\uu) \geq n^3W - n^2W = (n-1) n^2 W$.

   \smallskip
   \textbf{Inductive Hypothesis:} Assume the statement holds for $i \in [k+1, n]$, i.e. if $\uu'$ has at most $n-i$ different fair branches repeating for some $i \in [k+1, n]$, then $w(\uu) \geq (i-1) n^2 W$ and $w(\uu_i) \geq -(n-i+1) n^2W$ for any prefix $\uu_i$ of $\uu$. 

   \smallskip 
   \textbf{Inductive Step:} We will show that the statement holds for $k < n$. 
   WLOG we assume exactly $(n-k)$-many different fair branches repeat in $\uu'$, since otherwise the desired inequality follows from the inductive hypothesis.
   We can find infices $\mathbf{w^1}, \ldots, \mathbf{w^t}$ of $\uu'$ for some $t \in [1, n-k]$ such that each $\mathbf{w^i}$ 
   is a (not necessarily simple) cycle on the fair branch $\fair{p_i}$ of a unique node $p_i$. Note that the fair nodes of $\mathbf{w^i}$ and $\mathbf{w^j}$ are different for $i \neq j$. 
   Furthermore, we can repeat this operation exhaustively such that if we remove the infices $\mathbf{w^1}, \ldots, \mathbf{w^t}$ from $\uu'$, in the remaining sequence $\uu'_{-}$
   there are no repeating fair branches.    
   
   Observe that $\fair{p_i}$ can be visited multiple times in $\mathbf{w^i}$. Therefore, we view each $\mathbf{w^i}$ as the non-empty concetanation 
   $\mathbf{w^i}_1\cdot \mathbf{w^i}_2 \cdots \mathbf{w^i}_r$ where each $\mathbf{w^i}_j$ is a simple cycle on $\fair{p_i}$ for $j \in [1, r]$.
   Then the fair branches of at most $(n-(k+1))$-many nodes repeat in $\mathbf{w^i}_j$. By inductive hypothesis, 
   $w(\mathbf{w^i_j}|_{Q}) \geq k n^2 W$, and the weight of $w^i_j|_Q$ does not drop below $-(n-k)n^2W$. Consequently, $w(\mathbf{w^i}|_Q) \geq k n^2 W $ and the weight of $w^i|_Q$ does not drop below $-(n-k)n^2W$.

   Recall that in $\uu'_{-}$ there are no repeating fair branches; therefore, there are at most $n$-many fair branches. Similar to the base case, 
   there are at most $n$-many simple cycles $\mathbf{c'}$ in $\uu'$ for which $\mathbf{c} = \mathbf{c'}|_Q$ have negative weight. These cycles can collect the minimum weight of $-n^2W$. Due to our assumption $k \neq n$, there is at least one infix $\mathbf{w^1}$ of $\uu'$. Since the weight of $\mathbf{w^1}|_Q$ is at least $ k n^2 W$; $w(\uu) \geq (k - 1) n^2W$; and since the weight of $\mathbf{w^1}|_Q$ does not drop below $-(n-k)n^2W$, the weight of $\uu$ does not drop below $-(n-k+1)n^2W$.

   This finishes the induction, and the proof of the claim.  \qed
   \end{proof}

   This shows that $\sigma$ is a winning player~1 strategy, thereby concluding the proof of the first direction. Now let us prove the other direction. In particular, we show that, if a node is winning for player~2 in $G'$, then it is also winning for player~2 in G i.e., $q \in \Win_2(G') \Rightarrow q \in \Win_2(G)$. 

   \smallskip
   Analogous to the previous part, we consider a positional winning strategy $\pi'$ of player~2 in $G'$ and construct, this time, a \emph{positional} winning strategy $\pi$ for player~2 in $G$. Note that, none of the fair nodes in $G$ belong to player~2 and hence the construction of $\pi$ is quite straight forward: for all $q \in Q_2$, we set $\pi(q) = \pi'(q)$. We define $G'_{\pi'}$ as the subgame restricted to the strategy $\pi'$, as before. 

   For a play $\tau = q^0 q^1 \ldots \in \plays_\pi(G)$, we define its \emph{extension} $\tau'$ in $G'$ inductively as follows: we start from with $q^0$ and $i = 0$, and with increasing $i$,
   \begin{itemize}
       \item If $q^i \not \in Q_f$, append $\tau'$ with $q^{i+1}$; Otherwise,
       \item If the value branch does not exist in $G'_{\pi'}$:
       \begin{itemize} 
           \item If $q^{i+1} = \pi'(q^i_\text{nW+1})$, append $\tau'$ with the fair branch i.e. $q^i_l \cdot q^i_\text{nW+1} \cdot q^{i+1}$, 
           \item Otherwise, append it with the escape branch, i.e. $q^i_r \cdot q^i_{\ESC} \cdot q^{i+1}$.
       \end{itemize}
       \item If the value branch exists in $G'_{\pi'}$:
       \begin{itemize} 
        \item If $q^{i+1} = \pi'(q^i_\texttt{0})$, append $\tau'$ with the zero-value branch i.e. $q^i_l \cdot q^i_\texttt{val} \cdot q^i_{\texttt{0}} \cdot q^{i+1}$, 
        \item Otherwise, append it with the positive-value branch, i.e. $q^i_l \cdot q^i_\texttt{val} \cdot q^i_{\texttt{pos}} \cdot q^{i+1}$.
        \end{itemize}
   \end{itemize}

   We define the tail of $\tau$ as $\tail$ and its extension $\tail'$, analogous to the previous part. We denote the subgraph of $G'_{\pi'}$ restricted to the nodes and edges in $\tail'$ by $G'_{\tail'}$.

   Let $\tau \in \plays_\pi(G)$ be a play in $G$ that conforms with $\pi$. If $\tau$ is not fair, it is automatically won by player~2. Therefore, let us assume that $\tau$ is fair. We will show that the weight of $\tau$ drops unboundedly, i.e. for every $c \in \mathbb{N}$ there exists a prefix $\tau_i$ of $\tau$ whose weight is below $c$. Therefore, $\pi$ is a winning player~2 strategy. 
   
   We show that all simple cycles in $\tail$ have non-positive weight, and moreover, negative weight cycles are visited infinitely often. It will then follow directly that the weight of $\tau$ drops unboundedly. To prove this, we will use the fact that its extension $\tau'$ is a play in $G'_{\pi'}$ and hence all the cycles in $G'_{\tail'}$ have negative weight as $\pi'$ is winning for player~2 for the energy objective in $G'$.

   Our main argument is the absence of fair branches in $G'_{\tail'}$, which we introduce in \Cref{claim:1-fair-energy-absence}.

   \smallskip 
   \begin{claim}\label{claim:1-fair-energy-absence}
     If $\tau$ is fair, there exist no fair branches $(\fair{})$ in $G'_{\tail'}$.
   \end{claim}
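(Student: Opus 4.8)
The plan is to mirror, step by step, the proof of the 1-fair mean-payoff analogue \Cref{claim:1-fair-absence}, letting the \emph{value} branch $\valu{q}$ (together with its sub-branches $\pos{q}$ and $\zero{q}$) take over the role that the simulation branch $\simul{q}$ played there. The whole argument is structural, carried out inside the subgraph $G'_{\tail'} \subseteq G'_{\pi'}$, all of whose cycles are negative since $\pi'$ is a winning player~2 strategy for the energy objective in $G'$; recall moreover that every node of $G'_{\tail'}$ is visited infinitely often in $\tau$.

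First I would record the two facts that drive the argument. (a) Whenever $\escape{q}$ occurs in $G'_{\tail'}$, so does $\fair{q}$: by the definition of the extension, the escape branch is appended only when the value branch is absent in $G'_{\pi'}$ (that is $\pi'(q_l)=q_{\text{nW+1}}$) and player~1 moves to a successor different from $\pi'(q_{\text{nW+1}})$; since $\pi'(q_{\text{nW+1}}) \in E_f(q)$ is a fair successor, fairness of $\tau$ forces the fair edge $q \to \pi'(q_{\text{nW+1}})$ to be taken infinitely often, which the extension realises as $\fair{q}$. (b) Every fair node whose value branch is present in $G'_{\pi'}$ has the same outgoing edges in $G'_{\tail'}$ as in the subgraph $S$ obtained by deleting all escape branches, because when the value branch exists the extension appends only $\zero{}$ or $\pos{}$, never $\escape{}$. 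By (a), deleting escape branches leaves $S$ without dead-ends.

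Next I would show that a fair branch cannot lie on a simple cycle of $S$. This is the genuinely new step compared to the mean-payoff case, and it is precisely where the choice $\epsilon = 1/(n+1)$ is used. A simple cycle through $\fair{q}$ collects the weight $nW+1$ from that branch, at most $n$ real-edge weights totalling at least $-nW$, and at most $n$ penalties of $-\epsilon$ coming from positive-value branches of the other gadgets it traverses; its total weight is therefore at least $nW + 1 - nW - n\epsilon = 1 - n/(n+1) = 1/(n+1) > 0$ (and only larger if several fair branches are used). Since every cycle of $S \subseteq G'_{\pi'}$ is negative, no cycle of $S$ can carry a fair branch.

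Finally I would derive the contradiction. Assume some $\fair{q}$ lies in $G'_{\tail'}$; then it lies in $S$ but, by the previous paragraph, on no cycle of $S$, so every cycle of $S$ uses only value branches. By fact (b) the value-carrying nodes have identical successors in $G'_{\tail'}$ and in $S$, so once a play settles into such a value-cycle it cannot deviate towards a fair or an escape branch; as every node of $S$ is visited infinitely often in $\tau$, the play is eventually trapped inside value-cycles and takes $\fair{q}$ only finitely often, contradicting $\fair{q}\in G'_{\tail'}$. I expect this final trapping step to be the main obstacle: one must argue carefully that confinement to value-nodes (which really have no fair or escape alternative available in $G'_{\tail'}$) propagates along the cycle through the interspersed original-game and player-2 gadget nodes, so that the play genuinely cannot return to the fair branch infinitely often, rather than merely observing that value-nodes alone offer no exit.
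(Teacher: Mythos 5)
Your proposal is correct and follows essentially the same route as the paper, which proves \cref{claim:1-fair-energy-absence} by transplanting the proof of \cref{claim:1-fair-absence} with value branches in place of simulation branches: form $S$ by deleting escape branches (justified by the fairness-forces-$\fair{q}$ observation), rule out fair branches on simple cycles of $S$ by a weight count, and then trap the play in value-only cycles to contradict infinitely many visits to a fair branch. Your explicit computation $nW+1-nW-n\epsilon = 1/(n+1)>0$ is exactly the accounting the paper compresses into the remark that the $-\epsilon$ edges ``do not alter the proof at all.''
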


   The proof of the claim is idential to the proof of~\Cref{claim:1-fair-absence}. The replacement of simulation branches with value branches in the proof of~\Cref{claim:1-fair-absence} is sufficient for it to serve as the proof of~\Cref{claim:1-fair-energy-absence}. Note that the addition of $-\epsilon$ edges into the weights accumulated from the gagdet branches does not alter the proof at all.

   \noindent Now we show the absence of fair branches in $G'_{\tail'}$ implies that in $\tail$ (i) all simple cycles have non-positive weight, and (ii) a negative weight cycle is taken infinitely often.
   From the definition of extension, (I) for all $q \in Q_f$ only the value branch $\valu{q}$ exists in $G'_{\tail'}$, or (II) for some $q \in Q_f$, only the escape branch $\escape{q}$ exists in $\tail'$. By construction of $\tau'$, case (II) implies that there exists a fair node $q \in Q_f$ that appears infinitely often in $\tau$ but does not take all its fair outgoing edges, in particular $q \to \pi'(q_\text{nW+1}) \in E_f(q)$ infinitely often. This implies that $\tau$ is unfair, contradicting our assumption. Now assume (I) holds and let $\mathbf{c'}$ be a simple cycle in $G'_{\tail'}$. There exists at most $n$-many gadget branches in $\mathbf{c'}$ and each one contributes weight $-\epsilon$ or $0$ to $w(\mathbf{c'})$. So, the minimum weight collected by the gadget branches in $\mathbf{c'}$ is $ - n \cdot\epsilon > - 1$, due to our initial choice of $\epsilon = 1/(n+1)$. Since $w(\mathbf{c'}) < 0$, the non-gadget nodes contribute non-positive weights to $\mathbf{c'}$; that is $w(\mathbf{c}) \leq 0$ for $\mathbf{c} = \mathbf{c'}|_Q$. Therefore, we conclude that all simple cycles in $\tail$ have non-positive weight, i.e. (i) holds. Moreover, since we assumed $\tail$ is fair, each $q$ with $\valu{q}$ in $G'_{\tail'}$ will take its zero-value branch infinitely often; therefore, $\zero{q}$ is in $G'_{\tail'}$. Therefore, a zero-value cycle $\mathbf{c'}$, i.e. a cycle that does not see any other gadget branches but $\zero{}$, is guaranteed to exist in $G'_{\tail'}$. Note that, if there are no value branches in $G'_{\tail'}$, any cycle in $G'_{\tail'}$ is trivially a zero-value cycle. Since $w(\mathbf{c'}) < 0$, this time $w(\mathbf{c}) < 0$ as well. Lastly, since every cycle in $G'_{\tail'}$ is taken infinitely often in $\tail'$, in particular any zero-value cycle $\mathbf{c'}$ is taken infinitely often in $\tail'$; and therefore a negative weight cycle $\mathbf{c}$ is taken infinitely often in $\tail$, i.e. (ii) holds.
   \qed 

   With this, we conclude the proof of the second direction, and that of the theorem.
   
\end{proof}

\subsection{Alternative Proof of Second Part of \Cref{thm:1-fairenergypositionality}}
As in~\Cref{lemma:mean-payoff-2-fair-memoryless}, the sufficiency of positional winning strategies for player~2 in 1-fair energy games follow as a corollary of the proof of~\Cref{thm:energygadgetcorrectness}. Here, using determinacy, we provide an alternative, simpler proof. 

The proof follows from~\cite[Lemma 2]{multiDimEnergyParity} The idea is to show that if player~2 is losing from a vertex $q$ with all positional strategies, it loses from $q$. Due to determinacy, this gives us that if player~2 is winning from $q$, it is winning with a positional strategy.
\begin{proof}
  Let $q$ be a player~2 node, and assume WLOG that it has 2 outgoing edges $q \xrightarrow{w_1} q_1$ and 
  $q \xrightarrow{w_2} q_2$. Assume both $q_1$ and $q_2$ are winning for player~1. Then, there exist minimal initial credits (due to \Cref{eq:energyDeterminacyPlayer1}) $c_1$ and $c_2$ with which player~1 wins from $q_1$ and $q_2$, respectively. Then, the initial credit $\max(c_1 - w_1, c_2 - w_2)$ is sufficient for player~1 to win from $q$.\qed
\end{proof}

%
%
%

\label{endofdocument}
\newoutputstream{pagestotal}
\openoutputfile{main.pgt}{pagestotal}
\addtostream{pagestotal}{\getpagerefnumber{endofdocument}}
\closeoutputstream{pagestotal}
\end{document}